	\pgfplotsset{compat=newest}
\newtheorem{theorem}{Theorem}[section]
\newtheorem{proposition}[theorem]{Proposition}
\theoremstyle{remark}\newtheorem{remark}[theorem]{Remark}
\DeclareMathOperator*{\argmin}{arg\,min}
\newcommand{\abs}[1]{\left\lvert#1\right\rvert}
\newcommand{\ave}[1]{\left\langle#1\right\rangle}
\newcommand{\Beta}{\operatorname{B}}
\newcommand{\cI}{\mathcal{I}}
\newcommand{\ucontr}{\mathfrak{u}}
\newcommand{\E}{\mathbb{E}}
\newcommand{\F}{\mathcal{F}}
\newcommand{\cL}{\mathcal{L}}
\newcommand{\cP}{\mathcal{P}}
\newcommand{\pr}[1]{{}^\prime\!#1}
\newcommand{\R}{\mathbb{R}}
\newcommand{\rhomax}{\rho_\mathrm{max}}
\newcommand{\cU}{\mathcal{U}}
\newcommand{\Var}{\operatorname{Var}}
\newcommand{\vmax}{v_\mathrm{max}}
\begin{document}
\title{Kinetic-controlled hydrodynamics for traffic models with driver-assist vehicles}

\author{Andrea Tosin, Mattia Zanella \\
		{\small Department of Mathematical Sciences ``G. L. Lagrange''} \\
		{\small Politecnico di Torino, Torino, Italy} \\
		{\small\tt andrea.tosin@polito.it, mattia.zanella@polito.it}} 
\date{}

\maketitle

\begin{abstract}
We develop a hierarchical description of traffic flow control by means of driver-assist vehicles aimed at the mitigation of speed-dependent road risk factors. Microscopic feedback control strategies are designed at the level of vehicle-to-vehicle interactions and then upscaled to the global flow via a kinetic approach based on a Boltzmann-type equation. Then first and second order hydrodynamic traffic models, which naturally embed the microscopic control strategies, are consistently derived from the kinetic-controlled framework via suitable closure methods. Several numerical examples illustrate the effectiveness of such a hierarchical approach at the various scales.

\medskip

\noindent{\bf Keywords:} Kinetic modelling, binary control, hydrodynamic equations, road risk mitigation \\

\noindent{\bf Mathematics Subject Classification:} 35Q20, 35Q70, 35Q84, 35Q93, 49J20, 90B20
\end{abstract}

\section{Introduction}
In the last two decades the legacy of the classical kinetic theory has emerged as a sound mathematical paradigm for the description of collective phenomena involving a large number of agents such as socio-economic~\cite{naldi2010BOOK,pareschi2013BOOK} and traffic~\cite{agnelli2015M3AS,festa2018KRM,freguglia2017CMS,herty2005CMAM,herty2010KRM,illner2003CMS,klar1997JSP,puppo2016CMS,visconti2017MMS} dynamics. In particular, the mathematical modelling of vehicular traffic by methods of the kinetic theory has a quite long history dating back to the pioneering works~\cite{paveri1975TR,prigogine1971BOOK}. One of the main reasons for such a success is the multiscale flexibility of the kinetic equations, which bridge organically the gap between the microscopic, often unobservable, scale of the individual agents, where elementary fundamental dynamics take place, and the macroscopic scale of the observable collective manifestations. This confers on the kinetic approach a great explanatory power about the way in which multi-agent systems work. At the same time, the possibility to derive hydrodynamic descriptions of those systems consistent with microscopic interaction dynamics is of paramount importance for designing fast numerical methods which possibly help decision-making tasks.

Euler or Navier-Stokes-type equations are classically obtained from collisional kinetic equations by means of suitable closure procedures based on the relaxation of the system towards its equilibria. For instance, in the context of rarefied gas dynamics they are derived taking advantage of the microscopic conservations of mass, momentum and energy in the binary collisions between gas molecules, which allow one to identify the Maxwellian asymptotic distribution, see e.g.~\cite{benedetto2004CMS,cercignani1994BOOK}. The derivation of hydrodynamic equations in the non-classical setting of multi-agent systems is instead a currently underexplored topic due to the general lack of information about the asymptotic statistical behaviour of the system. Some quite recent works in this direction are~\cite{carrillo2017BOOKCH,during2007PHYSA,ha2008KRM}. In this work we focus on the case of vehicular traffic, in which only the total mass of vehicles is conserved by the microscopic interactions.

In recent times the challenge of vehicular automation has posed new and exciting questions about traffic management and governance, which in turn boosted broad developments in the technology for intelligent intersections, driver-assist and self-driving vehicles~\cite{santi2014PNAS,tachet2016PLOS1}. One of the main goals of such technologies is the enhancement of driver safety through the mitigation of road risk factors, which, as reported in~\cite{WHO2015report}, are largely linked to the heterogeneity of the individual driving behaviour. Among others, here we recall in particular those related to the variability of the speed in the traffic flow: large differences in the speeds of the vehicles within the traffic stream appear to be responsible for a sensible increase in the crash risk. The idea which is progressively gaining ground in this context is to exploit the possibility to \textit{control} a few automated vehicles in order to induce a regularisation of the whole traffic flow~\cite{stern2018TRC}, which should then mitigate the aforesaid risk factors.

The control of multi-agent systems has been initiated relatively recently as a natural follow-up of the description and modelling of their self-organisation features. Contributions are available for mean field and kinetic equations~\cite{albi2017AMO,bensoussan2013SPRINGERBRIEFS,fornasier2014PTRSA} as well as for macroscopic conservation laws~\cite{banda2013MCRF,colombo2011COCV,cristiani2015SIAP}. In this paper we tackle the control of driver-assist vehicles taking inspiration from a general method called \textit{Model Predictive Control} (MPC). MPC is based on determining the control by optimising a given cost functional of the agents (in our case the driver-assist vehicles) over a finite, rather than infinite, short time horizon, which recedes as time evolves. In particular, we will consider the cost functionals recently proposed in~\cite{tosin2018IFAC}, which provide measures of the driving risk in terms of the speed differences of the vehicles, and we will assume that the time horizon for their minimisation coincides with the duration of a single binary interaction between a driver-assist vehicle and its leading vehicle. Consequently the control is recomputed each time that the driver-assist vehicle interacts with a new leading vehicle. Interestingly, this leads mathematically to a \textit{binary} control problem which can be solved explicitly. The result is a feedback control given in terms of the microscopic states of the interacting vehicles, which can be embedded straightforwardly in a Boltzmann-type kinetic equation. In particular, considering that a randomly chosen vehicle may be equipped with driver-assist technologies with a certain probability $p$, the so-called \textit{penetration rate}, this setting allows us to consider naturally \textit{sparse} control problems. The MPC strategy was first introduced in the engineering literature~\cite{camacho2007BOOK,sontag1998BOOK}, where traditionally it has been used in connection with ordinary differential equations. Very few results are currently available for other types of differential models, see e.g.~\cite{albi2017IFAC,albi2015CMS,albi2014PTRSA}. Concerning optimality, it is well known that MPC leads typically to controls which are suboptimal compared to the theoretically optimal control computed on the (possibly infinite) global time domain of the problem. Nevertheless performance bounds have been established, which guarantee the consistency of the MPC approximation also in the kinetic framework~\cite{grune2009SICON,herty2017DCDS}. In addition to this, the computational cost of the proposed Boltzmann formulation of MPC scales linearly with the total number of vehicles of the system. This makes it competitive with respect to other techniques based e.g., on the control of mean field equations.

In more detail, the paper is organised as follows. In Section~\ref{sect:homog} we introduce a space homogeneous kinetic model of human-manned traffic able to explain how fundamental diagrams and statistical speed distributions consistent with the empirical observations are generated by simple microscopic interactions between pairs of vehicles. The analysis of this case provides insights into the normal flow of uncontrolled vehicles, thereby constituting the reference for all the subsequent developments. In Section~\ref{sect:micro.control} we design and solve the binary control problem for driver-assist vehicles, taking into account their penetration rate into the traffic stream, and we repeat the space homogeneous kinetic analysis so as to assess the effectiveness of the control strategies in reducing the road risk with respect to the previous case of fully human-manned vehicles. In Section~\ref{sect:hydro} we consider a space inhomogeneous kinetic description, whence we consistently derive first and second order hydrodynamic traffic equations with embedded microscopic control via the local equilibrium closure and the monokinetic assumption, respectively. In Section~\ref{sect:num} we extensively investigate the solutions produced by the kinetic and hydrodynamic models by means of numerical simulations specifically focused on the collective impact of the control strategies and on the influence of the penetration rate of the driver-assist technology. Finally, in Section~\ref{sect:conclusions} we summarise the highlights of the work and draw some conclusions.

\section{Homogeneous kinetic modelling of traffic flow}
\label{sect:homog}
In this section we study interactive vehicle dynamics which explain how typical speed distributions and traffic diagrams arise in stationary flow conditions. This will be the basis for constructing later control strategies for driver-assist vehicles aimed at making traffic flow more uniform by essentially reducing the speed variance among the vehicles.

The literature accounts nowadays for a number of empirical and theoretical investigations of the so-called \textit{speed} and \textit{fundamental diagrams} of traffic. These are relationships linking the mean speed $V$ and the macroscopic flux $\rho V$ of the vehicles to the traffic density $\rho$ in stationary homogeneous conditions along the main longitudinal direction of the flow. From the empirical point of view, the common observation to all measured traffic diagrams is that the mean speed is nearly constant and close to the maximum speed in the free flow regime, i.e. for $\rho$ sufficiently small; conversely, it decreases steeply to zero in the congested flow regime, i.e. for $\rho$ approaching the maximum possible density $\rhomax>0$. In turn, the macroscopic flux grows almost linearly with the density in the free flow regime, then it decreases non-linearly to zero in the congested flow regime. The two traffic regimes are separated by a critical value of $\rho$, called the density at capacity, where the macroscopic flux is maximum, see e.g.~\cite{leutzbach1988BOOK}. A further intermediate regime might exist, called the synchronised traffic regime, in which vehicles tend to travel all at the same speed, cf.~\cite{kerner2004BOOK}. From the theoretical point of view, a few mathematical models have been able to explain the emergence of such large scale characteristics of traffic from a microscopic description of vehicle interactions~\cite{fermo2014DCDSS,gunther2003SIAP}. In some cases, models have also successfully investigated the origin of the data scattering typically seen in measured traffic diagrams~\cite{puppo2016CMS,visconti2017MMS}. Finally, very recently diagrams in the transversal direction of the flow produced by the lateral displacements of lane-changing vehicles have started to be measured and their origin investigated by means of tools of statistical physics~\cite{herty2017SIAP_preprint}.

By far less studied is instead the statistical distribution of the microscopic speeds of the vehicles in similar stationary homogeneous conditions. This information is nonetheless fundamental for assessing traffic features correlated with the road safety, such as e.g. the dispersion of the speeds of the vehicles, which has been reported as one of the major causes of the increase in the crash risk~\cite{peden2004WHO,WHO2015report}. Moreover, the speed distribution at equilibrium may play a role analogous to the Maxwellian distribution in the kinetic theory of gases for the theoretical derivation of macroscopic hydrodynamic traffic models consistent with microscopic models of vehicle interactions. A typical claim in the literature is that the speed in highway-like traffic can be approximated by a Gaussian distribution~\cite{berry1951PROCEEDINGS}. Some studies suggest in particular that this may be valid in free and congested traffic regimes, while in the intermediate regime the approximation by means of a log-normal distribution performs better~\cite{abuelenin2014IEEE}. Other studies have found that bimodal distributions may be more appropriate to describe the speeds in a mixture of different categories of vehicles~\cite{dey2006JTE}. A drawback of the Gaussian distribution is that it is not compactly supported, whereas vehicle speeds normally vary in an interval of the form $[0,\,\vmax]$, where $\vmax>0$ is some maximum speed. Consequently if, on one hand, the Gaussian curve can be a healthy empirical approximation, on the other hand more accurate distributions need to be sought from the theoretical point of view. In~\cite{maurya2016TRP,ni2018AMM} the authors find a good agreement between empirical traffic speed curves and beta distributions. In the following we demonstrate that beta distributions can indeed be obtained as equilibrium distributions of a kinetic traffic model, starting from simple and very reasonable assumptions on the interactions between pairs of vehicles. Interestingly, our approach allows us to recover the relevant parameters of such distributions in terms of the traffic density $\rho$. This provides organically average statistical quantities parametrised by $\rho$, such as the mean speed and the macroscopic flux, which turn out to compare qualitatively well with the experimental diagrams of traffic described above.

\subsection{Boltzmann-type model and traffic diagrams}
Inspired by classical methods of kinetic theory, we introduce the distribution function $f=f(t,\,v)$ such that $f(t,\,v)\,dv$ is the fraction of vehicles which, at time $t>0$, are travelling with a speed comprised between $v$ and $v+dv$. We understand all variables as dimensionless and in particular we set $v,\,\rho\in [0,\,1]$, meaning that we have normalised the speed and the density by their maximum values $\vmax$, $\rhomax$, respectively.

In the Boltzmann-type kinetic approach one assumes that the time evolution of $f$ is determined by microscopic stochastic processes consisting in binary (i.e., pairwise) interactions responsible for speed changes. If $v,\,w$ are the pre-interaction speeds of any two representative vehicles and $v'$, $w'$ their post-interaction speeds, a binary interaction takes the form of a rule expressing $v'$, $w'$ as functions of $v$, $w$:
\begin{align}
	\begin{aligned}[c]
		v' &= v+\gamma I(v,\,w;\,\rho)+D(v;\,\rho)\eta \\
		w' &= w.
	\end{aligned}
	\label{eq:binary}
\end{align}
Consistently with a microscopic follow-the-leader approach~\cite{gazis1961OR}, we assume that a vehicle of the pair, in this case the one with speed $w$, plays the role of the leading vehicle. Since in vehicular traffic interactions are mainly anisotropic, and particularly frontal, the leading vehicle is unaffected by the rear vehicle, which here is the one with speed $v$. In contrast, the rear vehicle may change speed according to the interaction function $I(v,\,w;\,\rho)$, which may depend on the traffic density $\rho$ because traffic congestion may affect acceleration and deceleration. In addition to that, in the equation for $v'$ the constant $\gamma>0$ is a proportionality parameter and $\eta\in\R$ is a centred random variable, i.e. one with mean $\ave{\eta}=0$, modelling stochastic fluctuations of the post interaction speed $v'$. The variance of $\eta$ is set to $\ave{\eta^2}=:\sigma^2>0$ while the intensity of the stochastic fluctuation is tuned by the function $D(v;\,\rho)\geq 0$ representing the local relevance of the diffusion.

In order to write a Boltzmann-type equation ruling the evolution of $f$ we argue like in~\cite{pareschi2013BOOK}.
If $\varphi=\varphi(v)$ is any observable quantity which can be expressed as a function of the speed $v$ then the time variation of the expectation of $\varphi(v)$ is due, on average, to the variation of $\varphi$ in a binary interaction. In formulas this writes:
\begin{equation}
	\frac{d}{dt}\int_0^1\varphi(v)f(t,\,v)\,dv=\frac{1}{2}\ave{\int_0^1\int_0^1(\varphi(v')-\varphi(v)+\varphi(w')-\varphi(w))f(t,\,v)f(t,\,w)\,dv\,dw},
	\label{eq:Boltzmann.spathom}
\end{equation}
where the coefficient $\frac{1}{2}$ at the right-hand side is used to average the variations $\varphi(v')-\varphi(v)$ and $\varphi(w')-\varphi(w)$ in a binary interaction (notice from~\eqref{eq:binary} that actually $\varphi(w')-\varphi(w)=0$) and $\ave{\cdot}$ denotes the expectation with respect to the probability distribution of $\eta$.

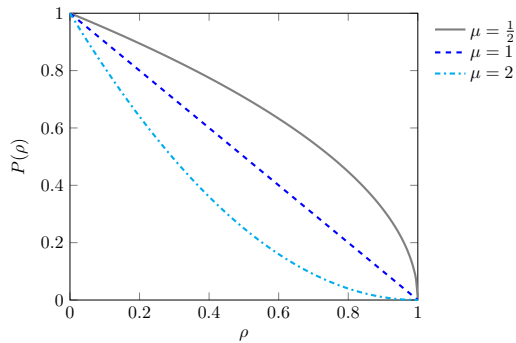
\begin{figure}[!t]
\centering
\resizebox{7cm}{!}{
\begin{tikzpicture}[
declare function={
	P(\x,\mu)=(1-\x)^\mu;
} 
]
\begin{axis}[
	xlabel={$\rho$},
	ylabel={$P(\rho)$},
	xmin=0,
	xmax=1,
	ymin=0,
	ymax=1,
	legend style={draw=none},
	legend cell align=left,
	legend pos=outer north east,
]
	\addplot[gray,domain=0:1,smooth,very thick,style={solid},samples=200]{P(x,0.5)}; \addlegendentry{$\mu=\frac{1}{2}$};
	\addplot[blue,domain=0:1,smooth,very thick,style={dashed}]{P(x,1)}; \addlegendentry{$\mu=1$}
	\addplot[cyan,domain=0:1,smooth,very thick,style={dash dot}]{P(x,2)}; \addlegendentry{$\mu=2$}
\end{axis}
\end{tikzpicture}
}
\caption{The probability of accelerating $P(\rho)$ given in~\eqref{eq:P} plotted for various $\mu$.}
\label{fig:P}
\end{figure}

We discuss now in more detail the structure of the functions $I$, $D$ in~\eqref{eq:binary}. The interaction function $I$ has often been modelled in the literature by considering separately the cases $v\leq w$, which would induce an acceleration of the $v$-vehicle, and $v>w$, which instead would induce a deceleration, see e.g.~\cite{delitala2007M3AS,herty2010KRM,puppo2016CMS}. Here we propose a simpler form, which has the merit of making the whole model more tractable analytically while giving rise to many interesting physical consequences. Specifically, we set:
\begin{equation}
	I(v,\,w;\,\rho):=P(\rho)(1-v)+(1-P(\rho))(P(\rho)w-v),
	\label{eq:I}
\end{equation}
where
\begin{equation}
	P(\rho):=(1-\rho)^\mu, \quad \mu>0,
	\label{eq:P}
\end{equation}
is the probability of accelerating, see Figure~\ref{fig:P}. This interaction function says that in light traffic, i.e. for $\rho$ small, the $v$-vehicle basically relaxes its speed towards the maximum possible one, cf. the first term at the right-hand side of~\eqref{eq:I}. When $\rho$ increases the $v$-vehicle starts to adapt its speed also to a fraction $P(\rho)$ of the speed $w$ of the leading vehicle, cf. the second term at the right-hand side of~\eqref{eq:I}. The fraction is expressed by the function $P(\rho)$ itself, in such a way that the more congested the traffic the smaller the target speed $P(\rho)w$ towards which the $v$-vehicle relaxes. In heavy traffic, i.e. for $\rho$ large, this mechanism leads typically the $v$-vehicle to decelerate.

The choice of the function $D$, instead, has to be made taking into account the necessity to guarantee that the post-interaction speed $v'$ in~\eqref{eq:binary} complies with the bounds $0\leq v'\leq 1$. Concerning this, we establish first of all the following result:
\begin{proposition} \label{prop:eta.D}
In~\eqref{eq:binary}, let $\gamma\in [0,\,1]$ and $I(v,\,w;\,\rho)$ be given by~\eqref{eq:I}. If there exists $c>0$ such that
\begin{equation*}
	\begin{cases}
		\abs{\eta}\leq c(1-\gamma) \\[1mm]
		cD(v;\,\rho)\leq\min\{v,\,1-v\}, & \forall\,v,\,\rho\in [0,\,1]
	\end{cases}
\end{equation*}
then $v'\in [0,\,1]$ for every pair of pre-interaction speeds $v,\,w\in [0,\,1]$ and every $\rho\in [0,\,1]$.
\end{proposition}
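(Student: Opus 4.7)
The plan is to rewrite the update rule~\eqref{eq:binary} as a convex combination of $v$ and an effective ``target'' speed which automatically lies in $[0,\,1]$, plus a stochastic perturbation whose magnitude the hypotheses force to be small enough not to leave the admissible interval.

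First, I would expand the interaction function~\eqref{eq:I} and collect the terms linear in $v$, observing that
\begin{equation*}
I(v,\,w;\,\rho)=P(\rho)+(1-P(\rho))P(\rho)w-v.
\end{equation*}
Substituting this into~\eqref{eq:binary} and abbreviating $A(w;\,\rho):=P(\rho)+(1-P(\rho))P(\rho)w$ yields the clean decomposition
\begin{equation*}
v'=(1-\gamma)v+\gamma A(w;\,\rho)+D(v;\,\rho)\eta.
\end{equation*}
Since $P(\rho),\,w\in[0,\,1]$, one has $(1-P(\rho))P(\rho)w\in[0,\,1-P(\rho)]$, so $A(w;\,\rho)\in[P(\rho),\,1]\subseteq[0,\,1]$. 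Together with $\gamma,\,v\in[0,\,1]$, this makes $(1-\gamma)v+\gamma A(w;\,\rho)$ a convex combination of two numbers in $[0,\,1]$, hence itself an element of $[0,\,1]$.

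Next, I would multiply the two standing hypotheses to control the noise: from $\abs{\eta}\leq c(1-\gamma)$ and $cD(v;\,\rho)\leq\min\{v,\,1-v\}$ one obtains
\begin{equation*}
\abs{D(v;\,\rho)\eta}=D(v;\,\rho)\abs{\eta}\leq cD(v;\,\rho)(1-\gamma)\leq(1-\gamma)\min\{v,\,1-v\}.
\end{equation*}
Plugging this bound into the decomposition of $v'$ and using $0\leq A(w;\,\rho)\leq 1$, one gets the one-line estimates $v'\leq(1-\gamma)v+\gamma+(1-\gamma)(1-v)=1$ and $v'\geq(1-\gamma)v+0-(1-\gamma)v=0$, which is exactly what is claimed.

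I do not foresee any real obstacle in this argument; the only genuinely delicate point is recognising that the two apparently unrelated conditions on $\eta$ and $D$ are engineered to combine into a noise bound of precisely the form $(1-\gamma)\min\{v,\,1-v\}$, which is exactly what is needed to absorb the $(1-\gamma)v$ or $(1-\gamma)(1-v)$ contributions left over from the convex-combination part. Once the linear-in-$v$ rewrite of $I$ is noticed, everything else is bookkeeping.
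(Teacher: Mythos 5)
Your proof is correct and follows essentially the same route as the paper: the rewrite $v'=(1-\gamma)v+\gamma A(w;\,\rho)+D(v;\,\rho)\eta$ with $A(w;\,\rho)\in[0,\,1]$ is just a compact repackaging of the paper's term-by-term estimates (dropping the nonnegative terms for $v'\geq 0$, bounding $I$ by $1-v$ for $v'\leq 1$), and both arguments hinge on combining the two hypotheses into the noise bound $\abs{D(v;\,\rho)\eta}\leq(1-\gamma)\min\{v,\,1-v\}$.
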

\begin{proof}
To prove that $v'\geq 0$ we observe that, since $\gamma P(\rho),\,P(\rho)w\geq 0$, it is enough to show that $v-\gamma(P(\rho)v+(1-P(\rho))v)+D(v;\,\rho)\eta\geq 0$, i.e. that $(1-\gamma)v+D(v;\,\rho)\eta\geq 0$. Since, by assumption, $\eta\geq c(\gamma-1)$, with $\gamma-1\leq 0$, and $D(v;\,\rho)\leq\frac{v}{c}$ we discover $(1-\gamma)v+D(v;\,\rho)\eta\geq (1-\gamma)v+\frac{v}{c}c(\gamma-1)=0$ and we have the result.

Conversely, to prove that $v'\leq 1$ we observe that, since $P(\rho)w\leq 1$, it is sufficient to establish that $v+\gamma(P(\rho)(1-v)+(1-P(\rho))(1-v))+D(v;\,\rho)\eta\leq 1$, i.e. that $(\gamma-1)(1-v)+D(v;\,\rho)\eta\leq 0$. But $(\gamma-1)(1-v)+D(v;\,\rho)\eta\leq(\gamma-1)(1-v)+\frac{1-v}{c}c(1-\gamma)=0$, thus we are done.
\end{proof}

Conditions posed by Proposition~\ref{prop:eta.D} imply that $\eta$ is a compactly supported random variable in the interval $[-c(1-\gamma),\,c(1-\gamma)]$, which is in particular compatible with the fact that $\ave{\eta}=0$, and that $D(0;\,\rho)=D(1;\,\rho)=0$ for all $\rho\in [0,\,1]$. We defer to the next Section~\ref{sect:asympt} a more specific choice of $D$.

For the moment we observe that by choosing $\varphi(v)=1$ in~\eqref{eq:Boltzmann.spathom} we obtain
$$ \frac{d}{dt}\int_0^1f(t,\,v)\,dv=0. $$
This implies that if the initial speed distribution $f_0(v):=f(0,\,v)$ fulfils the normalisation condition $\int_0^1f_0(v)\,dv=1$ then $f$ is a probability density function for all $t>0$. Choosing instead $\varphi(v)=v$ in~\eqref{eq:Boltzmann.spathom} and using the interaction rules~\eqref{eq:binary},~\eqref{eq:I} we discover that the mean speed
$$ V(t):=\int_0^1vf(t,\,v)\,dv $$
satisfies the equation
\begin{equation}
	\frac{dV}{dt}=\frac{\gamma}{2}\left\{P(\rho)\left[1+(1-P(\rho))V\right]-V\right\},
	\label{eq:dV/dt}
\end{equation}
whose solution writes
\begin{equation}
	V(t)=V_0e^{-\frac{\gamma}{2}\left[P(\rho)+(1-P(\rho))^2\right]t}+
		\frac{P(\rho)}{P(\rho)+(1-P(\rho))^2}\left(1-e^{-\frac{\gamma}{2}\left[P(\rho)+(1-P(\rho))^2\right]t}\right)
	\label{eq:V}
\end{equation}
with $V_0:=\int_0^1vf_0(v)\,dv$ the mean speed at the initial time. For $t\to+\infty$, $V$ approaches exponentially fast the asymptotic value
\begin{equation}
	V_\infty(\rho):=\frac{P(\rho)}{P(\rho)+(1-P(\rho))^2},
	\label{eq:Vinf}
\end{equation}
which defines the speed diagram of traffic. The mapping $\rho\mapsto\rho V_\infty(\rho)$ defines instead the fundamental diagram. Figure~\ref{fig:funddiag} shows that these curves agree well with the qualitative empirical characteristics of the traffic diagrams discussed at the beginning of Section~\ref{sect:homog}, especially for $\mu>1$ in the expression~\eqref{eq:P} of $P(\rho)$.

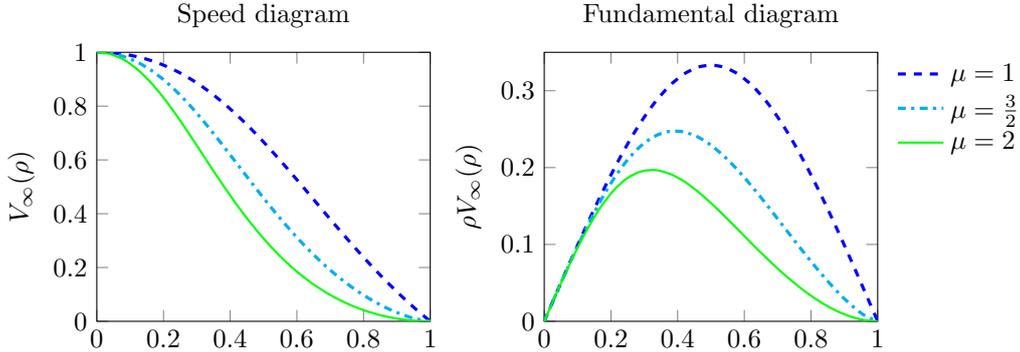
\begin{figure}[!t]
\centering
\begin{tikzpicture}[
declare function={
	P(\x,\mu)=(1-\x)^\mu;
	Vinf(\x,\mu)=P(\x,\mu)/(P(\x,\mu)+(1-P(\x,\mu))^2);
} 
]
\begin{groupplot}[
	group style={group size= 2 by 1,horizontal sep=15mm},
	width=0.4\textwidth,
	legend cell align=left,
	legend pos=outer north east,
	legend style={draw=none}
]
\nextgroupplot[ylabel=$V_\infty(\rho)$,xmin=0,xmax=1,ymin=0,ymax=1,title=Speed diagram] 
	\addplot[blue,domain=0:1,smooth,very thick,style={dashed}]{Vinf(x,1)};
	\addplot[cyan,domain=0:1,smooth,very thick,style={dash dot}]{Vinf(x,1.5)};
	\addplot[green,domain=0:1,smooth,thick,style={solid}]{Vinf(x,2)};
\nextgroupplot[ylabel=$\rho V_\infty(\rho)$,xmin=0,xmax=1,ymin=0,ymax=0.35,title=Fundamental diagram] 
	\addplot[blue,domain=0:1,smooth,very thick,style={dashed}]{x*Vinf(x,1)}; \addlegendentry{$\mu=1$}
	\addplot[cyan,domain=0:1,smooth,very thick,style={dash dot}]{x*Vinf(x,1.5)}; \addlegendentry{$\mu=\frac{3}{2}$}
	\addplot[green,domain=0:1,smooth,thick,style={solid}]{x*Vinf(x,2)}; \addlegendentry{$\mu=2$}
\end{groupplot}
\end{tikzpicture}
\caption{Traffic diagrams plotted from~\eqref{eq:Vinf} with different choices of the exponent $\mu$ in~\eqref{eq:P}.}
\label{fig:funddiag}
\end{figure}

\subsection{Asymptotic speed distribution}
\label{sect:asympt}
In order to compute the equilibrium speed distribution in homogeneous conditions one should find the asymptotic solutions of the Boltzmann-type equation~\eqref{eq:Boltzmann.spathom}, namely the probability density functions $f_\infty=f_\infty(v)$ independent of $t$ which make the right-hand side of~\eqref{eq:Boltzmann.spathom} vanish. They correspond to speed distributions which create an equilibrium of the binary interactions.

Unfortunately, although~\eqref{eq:Boltzmann.spathom} is suited to investigate the statistical moments of the distribution function $f$, it is in general not easy to obtain from it a pointwise description of $f$ itself and of its asymptotic trends. The reason is that~\eqref{eq:Boltzmann.spathom} is a high-resolution equation in time, i.e. one which catches the detail of every single binary interaction. The large-time trends of $f$ may however be successfully recovered by means of asymptotic procedures, which transform~\eqref{eq:Boltzmann.spathom} in simpler kinetic equations whose solutions approximate well the steady profiles of the asymptotic distributions of~\eqref{eq:Boltzmann.spathom}. One such procedure is the \textit{quasi-invariant interaction limit} introduced in~\cite{toscani2006CMS}, which is reminiscent of the \textit{grazing collision limit} of the classical kinetic theory of gases~\cite{villani1998ARMA} and which leads to \textit{Fokker-Planck-type} asymptotic equations. Here is how the procedure works.

Assume that we consider the binary interactions~\eqref{eq:binary} on a time scale $\tau$ much larger than their characteristic time scale $t$. Then on the $\tau$-scale the contribution of a single interaction~\eqref{eq:binary} is small (i.e., interactions are \textit{quasi-invariant}) but, at the same time, interactions are quite frequent. This can be formalised by defining $\tau:=\frac{\gamma}{2}t$ and then assuming that $\gamma$ and $\sigma^2$ (the variance of $\eta$) are small in~\eqref{eq:binary}. Notice that if the characteristic frequency of a binary interaction is $1/t=O(1)$ in the $t$-scale then it becomes $1/\tau=O(1/\gamma)\gg 1$ in the $\tau$-scale.

Let us introduce now the scaled distribution function $\tilde{f}(\tau,\,v):=f(2\tau/\gamma,\,v)$. We observe that, for every fixed $\tau>0$, $\gamma$ small implies $t=2\tau/\gamma$ large, hence the limit $\gamma\to 0^+$ describes the large-time trend of $f$. On the other hand, since for $\tau\to+\infty$ it results also $t\to+\infty$, the asymptotic trend of $\tilde{f}$ approximates well that of $f$. The idea is then to find from~\eqref{eq:Boltzmann.spathom} an equation satisfied by $\tilde{f}$ in the quasi-invariant interaction limit $\gamma,\,\sigma^2\to 0^+$, whence to study the asymptotic trend of $\tilde{f}$.

Noting that $\partial_\tau\tilde{f}=\frac{2}{\gamma}\partial_tf$, on the $\tau$-scale~\eqref{eq:Boltzmann.spathom} becomes:
\begin{equation}
	\frac{d}{d\tau}\int_0^1\varphi(v)\tilde{f}(\tau,\,v)\,dv
		=\frac{1}{\gamma}\ave{\int_0^1\int_0^1(\varphi(v')-\varphi(v))\tilde{f}(\tau,\,v)\tilde{f}(\tau,\,w)\,dv\,dw},
	\label{eq:Boltzmann.spathomog-scaled}
\end{equation}
where we have already taken into account that $\varphi(w')=\varphi(w)$ in view of the second rule in~\eqref{eq:binary}. Since for $\gamma,\,\sigma^2$ small the post-interaction speed $v'$ is close to the pre-interaction one $v$, if we assume that $\varphi$ is sufficiently smooth, namely $\varphi\in C^3([0,\,1])$, we can perform the following Taylor expansion:
$$ \varphi(v')-\varphi(v)=\varphi'(v)(v'-v)+\frac{1}{2}\varphi''(v)(v'-v)^2+\frac{1}{6}\varphi'''(\bar{v})(v'-v)^3, $$
where $\bar{v}\in (\min\{v,\,v'\},\,\max\{v,\,v'\})$ is used to express the Lagrange remainder. Writing $v'-v=\gamma I(v,\,w;\,\rho)+D(v;\,\rho)\eta$ from the first rule in~\eqref{eq:binary} and plugging into~\eqref{eq:Boltzmann.spathomog-scaled} we deduce:
\begin{align}
	\begin{aligned}[b]
		\frac{d}{d\tau}\int_0^1\varphi(v)\tilde{f}(\tau,\,v)\,dv &=
			\int_0^1\int_0^1\varphi'(v)I(v,\,w;\,\rho)\tilde{f}(\tau,\,v)\tilde{f}(\tau,\,w)\,dv\,dw \\
		&\phantom{=} +\frac{\sigma^2}{2\gamma}\int_0^1\varphi''(v)D^2(v;\,\rho)\tilde{f}(\tau,\,v)\,dv+R_\varphi(\tilde{f},\,\tilde{f}),
	\end{aligned}
	\label{eq:Boltzmann.spathomog-scaled.R}
\end{align}
where $R_\varphi(\tilde{f},\,\tilde{f})$ denotes the following remainder:
\begin{align*}
	R_\varphi(\tilde{f},\,\tilde{f}) &:= \frac{\gamma}{2}\int_0^1\int_0^1\varphi''(v)I^2(v,\,w;\,\rho)\tilde{f}(\tau,\,v)\tilde{f}(\tau,\,w)\,dv\,dw \\
	&\phantom{:=} +\frac{1}{6\gamma}\ave{\int_0^1\int_0^1\varphi'''(\bar{v})\left(\gamma I(v,\,w;\,\rho)+D(v;\,\rho)\eta\right)^3
		\tilde{f}(\tau,\,v)\tilde{f}(\tau,\,w)\,dv\,dw}.
\end{align*}
From~\eqref{eq:I} we check that $\abs{I(v,\,w;\,\rho)}\leq 1$ for all $v,\,w,\,\rho\in [0,\,1]$ and, in addition to this, from Proposition~\ref{prop:eta.D} we infer that also $D(v;\,\rho)$ has to be bounded. Moreover, $\varphi$ and its derivatives are in turn bounded in $[0,\,1]$ because of the assumed smoothness. Finally, if we assume that $\eta$ has the third order moment bounded, i.e. $\langle\abs{\eta}^3\rangle<+\infty$, then we can write $\eta=\sqrt{\sigma^2}\tilde{\eta}$ where $\tilde{\eta}$ is a random variable such that $\ave{\tilde{\eta}}=0$, $\langle\tilde{\eta}^2\rangle=1$ and $\langle\abs{\tilde{\eta}}^3\rangle<+\infty$, so that $\langle\abs{\eta}^3\rangle\sim(\sigma^2)^{3/2}$. As a result, we estimate\footnote{We use the notation $a\lesssim b$ to mean that there exists a constant $C>0$ such that $a\leq Cb$.}
\begin{equation}
	\abs{R_\varphi(\tilde{f},\,\tilde{f})}\lesssim\gamma+\gamma^2+\sigma^2+\frac{\sigma^2}{\gamma}\sqrt{\sigma^2}.
	\label{eq:R}
\end{equation}
At this point, in taking the quasi-invariant interaction limit $\gamma,\,\sigma^2\to 0^+$ we need to specify the behaviour of the ratio $\sigma^2/\gamma$. Assuming that $\sigma^2/\gamma\to\lambda>0$, so that the effects of the interactions and of the stochastic fluctuations balance asymptotically, we get $R_\varphi(\tilde{f},\,\tilde{f})\to 0$ from~\eqref{eq:R} and consequently from~\eqref{eq:Boltzmann.spathomog-scaled.R}
\begin{align}
	\begin{aligned}[b]
		\frac{d}{d\tau}\int_0^1\varphi(v)\tilde{f}(\tau,\,v)\,dv &=
			\int_0^1\varphi'(v)\left(\int_0^1I(v,\,w;\,\rho)\tilde{f}(\tau,\,w)\,dw\right)\tilde{f}(\tau,\,v)\,dv \\
		&\phantom{=} +\frac{\lambda}{2}\int_0^1\varphi''(v)D^2(v;\,\rho)\tilde{f}(\tau,\,v)\,dv.
	\end{aligned}
	\label{eq:FP.weak}
\end{align}

In view of the smoothness of $\varphi$, integrating back by parts the terms at the right-hand side this can be recognised as a weak form of the Fokker-Planck equation
\begin{equation}
	\partial_\tau\tilde{f}=\frac{\lambda}{2}\partial^2_v(D^2(v;\,\rho)\tilde{f})
		-\partial_v\left(\left(\int_0^1I(v,\,w;\,\rho)\tilde{f}(\tau,\,w)\,dw\right)\tilde{f}\right)
	\label{eq:FP}
\end{equation}
provided the following boundary conditions are satisfied:
\begin{align}
	\begin{aligned}[c]
		\left(\int_0^1I(v,\,w;\,\rho)\tilde{f}(\tau,\,w)\,dw\right)\tilde{f}(\tau,\,v)
			+\dfrac{\lambda}{2}\partial_v\left(D^2(v;\,\rho)\tilde{f}(\tau,\,v)\right) &= 0 \\
			D^2(v;\,\rho)\tilde{f}(\tau,\,v) &= 0
	\end{aligned}
	\label{eq:FP-bc}
\end{align}
for $v=0,\,1$ and all $\tau>0$. In particular, substituting in~\eqref{eq:FP} the interaction function $I$ given in~\eqref{eq:I} yields
$$ \partial_\tau\tilde{f}=\frac{\lambda}{2}\partial^2_v(D^2(v;\,\rho)\tilde{f})
	-\partial_v\biggl(\Bigl(P(\rho)\bigl(1+(1-P(\rho))\tilde{V}\bigr)-v\Bigr)\tilde{f}\biggr), $$
where $\tilde{V}(\tau):=V(2\tau/\gamma)$ and $V$ is the mean speed~\eqref{eq:V}. For $\tau\to+\infty$ the term $\tilde{V}$ converges to $V_\infty(\rho)$ given in~\eqref{eq:Vinf}, hence the asymptotic distribution $\tilde{f}_\infty$ satisfies the equation
$$ \frac{\lambda}{2}\partial^2_v(D^2(v;\,\rho)\tilde{f}_\infty)
		-\partial_v\biggl(\Bigl(P(\rho)\bigl(1+(1-P(\rho))V_\infty(\rho)\bigr)-v\Bigr)\tilde{f}_\infty\biggr)=0. $$
Since $P(\rho)(1+(1-P(\rho))V_\infty(\rho))=V_\infty(\rho)$, as it can be immediately checked from~\eqref{eq:dV/dt} or by a direct calculation using~\eqref{eq:Vinf}, this further simplifies into
$$ \frac{\lambda}{2}\partial^2_v(D^2(v;\,\rho)\tilde{f}_\infty)-\partial_v((V_\infty(\rho)-v)\tilde{f}_\infty)=0, $$
whose solution reads
\begin{equation}
	\tilde{f}_\infty(v)=\frac{C}{D^2(v;\,\rho)}\exp\left(\frac{2}{\lambda}\int\frac{V_\infty(\rho)-v}{D^2(v;\,\rho)}\,dv\right),
	\label{eq:finf.gen}
\end{equation}
where $C>0$ is a constant to ensure the normalisation $\int_0^1\tilde{f}_\infty(v)\,dv=1$.

In order to obtain from~\eqref{eq:finf.gen} a more explicit expression of $\tilde{f}_\infty$ we need to specify the diffusion coefficient $D(v;\,\rho)$. Choosing in particular
\begin{equation}
	D(v;\,\rho):=a(\rho)\sqrt{v(1-v)}, \qquad a(\rho)\geq 0,
	\label{eq:D}
\end{equation}
we get
\begin{equation}
	\tilde{f}_\infty(v)=\frac{v^{\frac{2V_\infty(\rho)}{\lambda a^2(\rho)}-1}
		(1-v)^{\frac{2(1-V_\infty(\rho))}{\lambda a^2(\rho)}-1}}
			{\Beta\left(\frac{2V_\infty(\rho)}{\lambda a^2(\rho)},\,\frac{2(1-V_\infty(\rho))}{\lambda a^2(\rho)}\right)},
	\label{eq:finf.beta}
\end{equation}
where $\Beta(x,\,y):=\int_0^1t^{x-1}(1-t)^{y-1}\,dt$ is the beta function. It can be checked that this function satisfies the boundary conditions~\eqref{eq:FP-bc} if e.g.
\begin{equation}
	a^2(\rho)\leq\frac{1}{\lambda}\min\{V_\infty(\rho),\,1-V_\infty(\rho)\},
	\label{eq:a}
\end{equation}
indeed in such a case both $\tilde{f}_\infty$ and $\partial_v\tilde{f}_\infty$ vanish at $v=0,\,1$.

\begin{figure}[!t]
\centering
\includegraphics[scale=0.8]{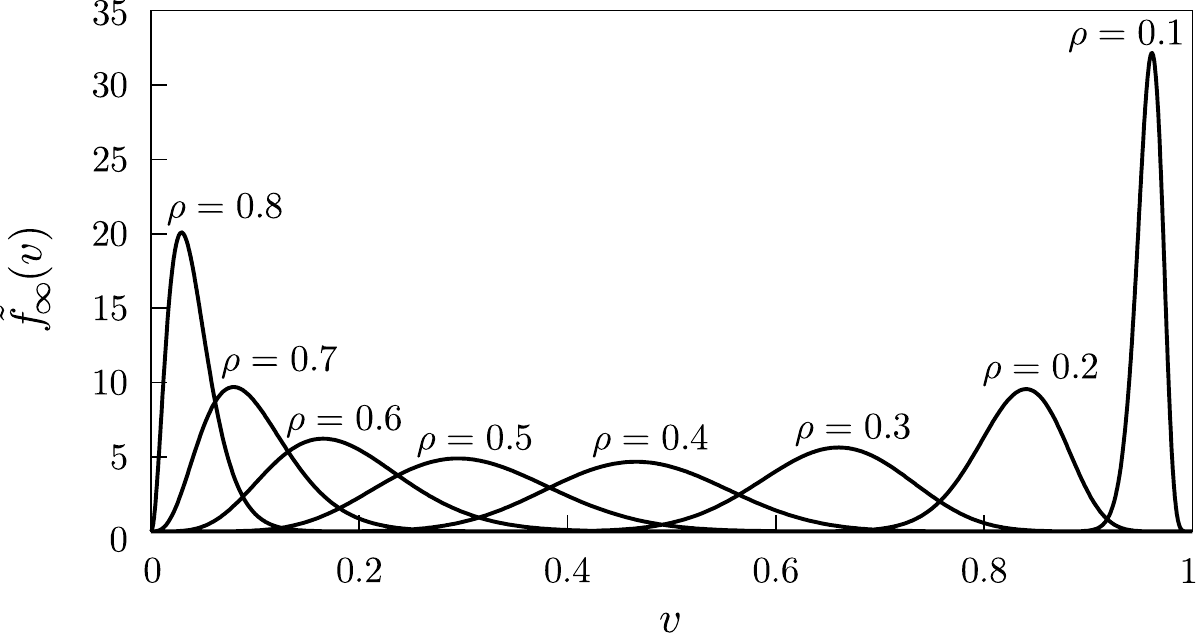}
\caption{The asymptotic speed distribution~\eqref{eq:finf.beta} for various traffic densities $\rho\in [0,\,1]$ with $V_\infty(\rho)$ given in~\eqref{eq:Vinf}, $P(\rho)$ like in~\eqref{eq:P} with $\mu=2$, $\lambda=1$ and $a(\rho)=\rho(1-\rho)$, which fulfils the constraint~\eqref{eq:a}.}
\label{fig:finf.beta}
\end{figure}


We notice that~\eqref{eq:finf.beta} is a \textit{beta probability density function} with parameters
$$ \alpha:=\frac{2V_\infty(\rho)}{\lambda a^2(\rho)}, \qquad
	\beta:=\frac{2(1-V_\infty(\rho))}{\lambda a^2(\rho)}, $$
whence the mean and variance of the random variable $X\sim\tilde{f}_\infty$ describing the vehicle speed at equilibrium are respectively
\begin{align}
	\begin{aligned}[b]
		& \E(X)=\frac{\alpha}{\alpha+\beta}=V_\infty(\rho), \\
		& \Var(X)=\frac{\alpha\beta}{{(\alpha+\beta)}^2(\alpha+\beta+1)}=
			\frac{\lambda a^2(\rho)}{2+\lambda a^2(\rho)}V_\infty(\rho)(1-V_\infty(\rho)).
	\end{aligned}
	\label{eq:beta.var}
\end{align}

Owing to the discussion set forth at the beginning of Section~\ref{sect:homog},~\eqref{eq:finf.beta} is a good model for the speed distribution at equilibrium. Nevertheless the choice~\eqref{eq:D} of the diffusion coefficient leading to it has to be justified more carefully, because that function actually does not satisfy the assumptions of Proposition~\ref{prop:eta.D}. Precisely, there does not exist any $c>0$ such that $cD(v;\,\rho)\leq\min\{v,\,1-v\}$ for all $v\in [0,\,1]$ due to the vertical tangents at $v=0$ and $v=1$ of the function~\eqref{eq:D}. To obviate this difficulty it is sufficient to consider preliminarily in~\eqref{eq:binary} the truncated diffusion coefficient
$$ D_\gamma(v;\,\rho):=a(\rho)\sqrt{\left((1+\gamma)v(1-v)-\frac{\gamma}{4}\right)_+} $$
where $(\cdot)_+:=\max\{0,\,\cdot\}$ denotes the positive part. This coefficient satisfies Proposition~\ref{prop:eta.D} with $c=\frac{1}{a(\rho)}\sqrt{\frac{\gamma}{1+\gamma}}$ and for $\gamma\to 0^+$ converges uniformly to~\eqref{eq:D}. Hence, in the quasi-invariant limit, from~\eqref{eq:Boltzmann.spathomog-scaled.R} with $D_\gamma(v;\,\rho)$ we obtain~\eqref{eq:FP} with $D(v;\,\rho)$.

In Figure~\ref{fig:finf.beta} we show the asymptotic distribution~\eqref{eq:finf.beta} for $\lambda=1$, $a(\rho)=\rho(1-\rho)$. Such a choice of $a(\rho)$ produces a vanishing diffusion when both $\rho=0$ and $\rho=1$. The asymptotic distributions to which~\eqref{eq:finf.beta} converges in these cases are $\tilde{f}_\infty(v)=\delta(v-1)$, $\tilde{f}_\infty(v)=\delta(v)$, respectively, i.e. the Dirac deltas centred at $v=1$ and $v=0$.

\section{Microscopic binary control for road risk mitigation}
\label{sect:micro.control}
The transportation literature acknowledges the speed variability within the stream of vehicles as one of the major sources of road risk~\cite{peden2004WHO,WHO2015report}. Hence a conceivable goal of driver-assist cars would be to mitigate \textit{collectively} the road risk through a reduction of the speed variance~\eqref{eq:beta.var}. In this section we aim at investigating to what extent this is possible by taking advantage of the ability of such cars to respond \textit{locally} to the actions of their drivers thanks to the automatic technologies they are equipped with.

To this purpose we reconsider the interaction rules~\eqref{eq:binary} and modify them as follows:
\begin{align}
	\begin{aligned}[c]
		v' &= v+\gamma(I(v,\,w;\,\rho)+\Theta\ucontr)+D(v;\,\rho)\eta \\
		w' &= w,
	\end{aligned}
	\label{eq:binary.u}
\end{align}
where $\ucontr$ is a \textit{control} representing the instantaneous correction of the ``natural'' interaction $I$ operated by the driver-assist vehicle and $\Theta\in\{0,\,1\}$ is a Bernoulli random variable expressing the fact that a randomly chosen vehicle may be equipped with driver-assist technology with a certain probability $p\in [0,\,1]$. Hence $\Theta\sim\operatorname{Bernoulli}(p)$ and $p$ gives the fraction of driver-assist vehicles present in the traffic stream, i.e. the so-called \textit{penetration rate}.

The optimal control $\ucontr^\ast$ is chosen so as to optimise the value of a certain \textit{binary cost functional} $J=J(v',\,\ucontr)$, whose minimisation is supposed to be linked locally to the mitigation of the road risk:
$$ \ucontr^\ast:=\argmin_{\ucontr\in\cU}J(v',\,\ucontr) $$
subject to~\eqref{eq:binary.u}, where $\cU$ is a set of admissible controls. Aiming at the reduction of the global speed variance of the flow of vehicles, a possible form of the cost functional in a single binary interaction is
\begin{equation}
	J(v',\,\ucontr)=\frac{1}{2}\ave{(w'-v')^2+\nu\ucontr^2}
	\label{eq:J.binvar}
\end{equation}
where $\frac{1}{2}(w'-v')^2$ is the \textit{binary variance} of the speeds of the two vehicles after the interaction, $\frac{\nu}{2}\ucontr^2$ is a penalisation of large controls with penalisation coefficient $\nu>0$ and finally $\ave{\cdot}$ denotes, as usual, the average with respect to the distribution of the stochastic fluctuation $\eta$. Another option is to minimise the gap between the speed of the vehicles and a certain desired speed $v_d\in [0,\,1]$, possibly $v_d=v_d(\rho)$, which may be thought of as a speed limit or as a recommended speed communicated to the equipped vehicles by some external monitoring devices. In this case we consider the binary cost functional
\begin{equation}
	J(v',\,\ucontr)=\frac{1}{2}\ave{(v_d(\rho)-v')^2+\nu\ucontr^2}.
	\label{eq:J.vd}
\end{equation}
Notice that~\eqref{eq:J.binvar},~\eqref{eq:J.vd} are special cases of the general cost functional
\begin{equation}
	J(v',\,\ucontr)=\frac{1}{2}\ave{(V_d(w',\,\rho)-v')^2+\nu\ucontr^2}
	\label{eq:J.gen}
\end{equation}
with either $V_d(w',\,\rho)=w'$ or $V_d(w',\,\rho)=v_d(\rho)$.

The minimisation of~\eqref{eq:J.gen} constrained to~\eqref{eq:binary.u} can be done by forming the Lagrangian
$$ \cL(v',\,\ucontr,\,\lambda):=J(v',\,\ucontr)+\lambda\ave{v'-v-\gamma(I(v,\,w;\,\rho)+\Theta\ucontr)-D(v;\,\rho)\eta}, $$
where $\lambda\in\R$ is the Lagrange multiplier associated to the constraint~\eqref{eq:binary.u}, and then by computing
\begin{equation*}
	\begin{cases}
		\partial_\ucontr\cL=\nu\ucontr-\gamma\Theta\lambda=0 \\
		\partial_{v'}\cL=\ave{v'-V_d(w',\,\rho)}+\lambda=0.
	\end{cases}
\end{equation*}
This yields the optimal control
$$ \ucontr^\ast=\frac{\gamma}{\nu}\Theta\ave{V_d(w',\,\rho)-v'}, $$
which, using the binary interactions~\eqref{eq:binary.u}, can be expressed in \textit{feedback form} as a function of the pre-interaction speeds $v$, $w$:
\begin{equation}
	\ucontr^\ast=\frac{\gamma\Theta}{\nu+\gamma^2\Theta^2}(V_d(w;\,\rho)-v)-\frac{\gamma^2\Theta}{\nu+\gamma^2\Theta^2}I(v,\,w;\,\rho).
	\label{eq:u.ast}
\end{equation}
Plugging~\eqref{eq:u.ast} into~\eqref{eq:binary.u} we finally obtain the feedback controlled microscopic rules in the form
\begin{align}
	\begin{aligned}[c]
		v' &= v+\dfrac{\nu\gamma}{\nu+\gamma^2\Theta^2}I(v,\,w;\,\rho)
			+\dfrac{\gamma^2\Theta^2}{\nu+\gamma^2\Theta^2}\left(V_d(w;\,\rho)-v\right)+D(v;\,\rho)\eta \\
		w' &= w.
	\end{aligned}
	\label{eq:bin.u-feedback}
\end{align}
Notice that these binary interactions are formally identical to~\eqref{eq:binary} up to introducing the new interaction function
$$ \cI(v,\,w;\,\rho):=\dfrac{\nu}{\nu+\gamma^2\Theta^2}I(v,\,w;\,\rho)
	+\dfrac{\gamma\Theta^2}{\nu+\gamma^2\Theta^2}\left(V_d(w;\,\rho)-v\right). $$
In particular, we can establish the following:
\begin{proposition} \label{prop:eta.D-u}
In~\eqref{eq:bin.u-feedback}, let $\gamma\in [0,\,1]$, $\nu>0$ and $I(v,\,w;\,\rho)$ be given by~\eqref{eq:I}. If there exists $c>0$ such that
\begin{equation*}
	\begin{cases}
		\abs{\eta}\leq c\left(1-\dfrac{\nu+\gamma}{\nu+\gamma^2}\gamma\right) \\[3mm]
		cD(v;\,\rho)\leq\min\{v,\,1-v\}, & \forall\,v,\,\rho\in [0,\,1]
	\end{cases}
\end{equation*}
then $v'\in [0,\,1]$ for every pair of pre-interaction speeds $v,\,w\in [0,\,1]$ and every $\rho\in [0,\,1]$.
\end{proposition}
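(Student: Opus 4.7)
The plan is to follow the same positivity/upper-bound scheme used in the proof of Proposition~\ref{prop:eta.D}, but treat the two realisations of the Bernoulli variable $\Theta$ separately. When $\Theta=0$, the feedback rule in~\eqref{eq:bin.u-feedback} collapses exactly to the uncontrolled rule~\eqref{eq:binary}. Since $\gamma\in[0,\,1]$ implies $\gamma^2\leq\gamma$, one has $\tfrac{(\nu+\gamma)\gamma}{\nu+\gamma^2}\geq\gamma$, so the hypothesis $\abs{\eta}\leq c(1-\tfrac{\nu+\gamma}{\nu+\gamma^2}\gamma)$ is stronger than $\abs{\eta}\leq c(1-\gamma)$, and the conclusion $v'\in[0,\,1]$ follows directly from Proposition~\ref{prop:eta.D}.

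The substantive case is $\Theta=1$. Here I would isolate the algebraic coefficient $\kappa:=\tfrac{\gamma(\nu+\gamma)}{\nu+\gamma^2}$, and note the identity $1-\kappa=\tfrac{\nu(1-\gamma)}{\nu+\gamma^2}$, which is exactly the factor multiplying $c$ in the assumed bound on $\abs{\eta}$. For the lower bound $v'\geq 0$, I would discard the nonnegative contributions coming from $P(\rho)(1-v)$, $(1-P(\rho))P(\rho)w$ in $I(v,\,w;\,\rho)$ and from $V_d(w;\,\rho)\geq 0$, so that
$$
v' \;\geq\; v\left(1-\frac{\nu\gamma}{\nu+\gamma^2}\bigl(P(\rho)+(1-P(\rho))\bigr)-\frac{\gamma^2}{\nu+\gamma^2}\right)+D(v;\,\rho)\eta \;=\; (1-\kappa)v+D(v;\,\rho)\eta,
$$
and then use $D(v;\,\rho)\leq v/c$ together with $\eta\geq -c(1-\kappa)$ to make this at least $0$. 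For the upper bound $v'\leq 1$, the symmetric move is to bound $I\leq P(\rho)(1-v)+(1-P(\rho))(1-v)=1-v$ and $V_d(w;\,\rho)-v\leq 1-v$, which gives
$$
v' \;\leq\; v+\kappa(1-v)+D(v;\,\rho)\eta,
$$
and the bound $D(v;\,\rho)\leq (1-v)/c$ combined with $\eta\leq c(1-\kappa)$ delivers $v'\leq 1$.

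The only thing to verify carefully is the identity $1-\kappa=\tfrac{\nu(1-\gamma)}{\nu+\gamma^2}$ and the inequality $P(\rho)w\leq 1$, $V_d(w;\,\rho)\in[0,\,1]$, the latter of which is immediate from the definitions $V_d(w,\,\rho)\in\{w,\,v_d(\rho)\}$ and $w,\,v_d(\rho)\in[0,\,1]$. There is essentially no obstacle: the proof is a one-line reshuffling once one recognises that the assumed bound on $\abs{\eta}$ has been designed so that the effective "interaction strength" $\kappa$ replaces $\gamma$ in the original argument, and the matching factor $1-\kappa$ cancels the diffusive contribution at the boundary. The calculation is self-contained and avoids any further hypothesis on $V_d$ beyond $V_d(w;\,\rho)\in[0,\,1]$, which is automatic from~\eqref{eq:J.binvar}--\eqref{eq:J.vd}.
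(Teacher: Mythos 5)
Your proof is correct and follows essentially the same route as the paper: the paper bounds $\gamma\,\mathcal{I}$ below by $-\frac{\nu+\gamma\Theta^2}{\nu+\gamma^2\Theta^2}\gamma v$ and above by $\frac{\nu+\gamma\Theta^2}{\nu+\gamma^2\Theta^2}\gamma(1-v)$, then uses that this coefficient is maximal at $\Theta=1$, which is exactly your $\Theta=1$ computation with the factor $\frac{(\nu+\gamma)\gamma}{\nu+\gamma^2}$ and the identity $1-\frac{(\nu+\gamma)\gamma}{\nu+\gamma^2}=\frac{\nu(1-\gamma)}{\nu+\gamma^2}$, your $\Theta=0$ case being subsumed by the same monotonicity observation. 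The only cosmetic remark is that your local symbol $\kappa$ clashes with the paper's later use of $\kappa$ for the limit of $\nu/\gamma$, so it would be better renamed.
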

\begin{proof}
The proof that $v'\geq 0$ is entirely analogous to the corresponding one of Proposition~\ref{prop:eta.D} upon observing that $P(\rho),\,P(\rho)w,\,V_d(w;\,\rho)\geq 0$ imply
\begin{align*}
	v+\gamma\cI(v,\,w;\,\rho)+D(v;\,\rho)\eta &\geq
		\left(1-\frac{\nu+\gamma\Theta^2}{\nu+\gamma^2\Theta^2}\gamma\right)v+D(v;\,\rho)\eta \\
	&\geq \left(1-\frac{\nu+\gamma}{\nu+\gamma^2}\gamma\right)v+D(v;\,\rho)\eta
\end{align*}
and then using the new assumptions on $\eta$, $D(v;\,\rho)$. Likewise, the proof that $v'\leq 1$ follows the very same line as the corresponding one in Proposition~\ref{prop:eta.D} considering that now it results
\begin{align*}
	v+\gamma\cI(v,\,w;\,\rho)+D(v;\,\rho)\eta &\leq
		v+\frac{\nu+\gamma\Theta^2}{\nu+\gamma^2\Theta^2}\gamma(1-v)+D(v;\,\rho)\eta \\
	&\leq v+\frac{\nu+\gamma}{\nu+\gamma^2}\gamma(1-v)+D(v;\,\rho)\eta
\end{align*}
because $P(\rho)w,\,V_d(w;\,\rho)\leq 1$.
\end{proof}

\begin{remark}
From the assumptions of Proposition~\ref{prop:eta.D-u} we see that if $\nu\to+\infty$, i.e. if the control is so penalised that the only possible one is $\ucontr^\ast=0$, cf.~\eqref{eq:u.ast}, then we recover the same condition on $\eta$ as the one of Proposition~\ref{prop:eta.D}. Conversely, if $\nu\to 0^+$, i.e. if the control is not penalised at all and the interactions of the driver-assist vehicles are fully dominated by it, then $\eta$ has to vanish. This is the only way in which the physical bounds on the post-interaction speed can be preserved in such a case. In fact purely controlled dynamics are not aware of the stochastic fluctuations, because $\ucontr^\ast$ has been deduced deterministically by averaging with respect to $\eta$.

Notice that when $\Theta=1$ and $\nu\to 0^+$ the first microscopic rule in~\eqref{eq:bin.u-feedback} can be seen as a model for a \textit{fully automated}, or \textit{autonomous}, vehicle.
\end{remark}

\subsection{Fundamental diagrams}
\label{sect:funddiag.pu}
With the new controlled binary interactions~\eqref{eq:bin.u-feedback} the Boltzmann-type equation for the distribution function $f$ writes
\begin{equation}
	\frac{d}{dt}\int_0^1\varphi(v)f^\ast(t,\,v)\,dv
		=\frac{1}{2}\E_\Theta\left[\ave{\int_0^1\int_0^1\left(\varphi(v')-\varphi(v)\right)f^\ast(t,\,v)f^\ast(t,\,w)\,dv\,dw}\right],
	\label{eq:Boltzmann_ETheta}
\end{equation}
where the superscript $\ast$ on the distribution function $f$ recalls that we are considering the statistical evolution of the system subject to the optimal binary control $\ucontr^\ast$~\eqref{eq:u.ast}, whose contribution is taken into account in $v'$. Furthermore, $\E_\Theta$ denotes the expectation with respect to the random variable $\Theta$ appearing in~\eqref{eq:bin.u-feedback}. In particular, considering that $\Theta^2\sim\operatorname{Bernoulli}(p)$, the evolution of the mean speed $V^\ast$, obtained from~\eqref{eq:Boltzmann_ETheta} with $\varphi(v)=v$, is now given by the equation
\begin{align}
	\begin{aligned}[b]
		\frac{dV^\ast}{dt} &= \frac{\gamma}{2}\left\{\frac{\nu+(1-p)\gamma^2}{\nu+\gamma^2}\Bigl(P(\rho)[1+(1-P(\rho))V^\ast]-V^\ast\Bigr)\right. \\
		&\phantom{=} \left.+\frac{\gamma p}{\nu+\gamma^2}\left(\int_0^1V_d(w;\,\rho)f(t,\,w)\,dw-V^\ast\right)\right\}.
	\end{aligned}
	\label{eq:dV/dt.pu}
\end{align}

We immediately notice that if $p=0$, i.e. if no car is actually equipped with driver-assist technologies, this equation reduces to~\eqref{eq:dV/dt} consistently with the fact that the whole model collapses onto the one considered in Section~\ref{sect:homog} (in fact in such a case we have $\Theta=0$ almost surely in~\eqref{eq:binary.u}). The same conclusion holds also if $\nu\to+\infty$, for then the cost for applying a driver-assist control is so high that the optimal strategy turns out to be not to apply any control. If conversely $\nu\to 0^+$, i.e. the cost of the driver-assist control is negligible, and $p=1$, i.e. all cars in the traffic stream are equipped with driver-assist technologies, then the evolution of $V$ is fully dominated by the second term at the right-hand side of~\eqref{eq:dV/dt.pu}, which results from the action of the control; whereas if $0<p<1$ then the spontaneous (viz. uncontrolled) dynamics, cf. the first term at the right-hand side of~\eqref{eq:dV/dt.pu}, still play a role.

We now discuss in more detail the consequences of~\eqref{eq:dV/dt.pu} in the cases $0<p<1$, $0<\nu<+\infty$ for a generic target speed $V_d(w;\,\rho)$. In order to reduce the analytical complexity due to the number of microscopic parameters in~\eqref{eq:dV/dt.pu} and to preserve simultaneously the qualitative large time behaviour of the system we refer to the quasi-invariant interaction regime. In particular, similarly to what we have done in Section~\ref{sect:asympt}, we consider the limit $\gamma,\,\nu\to 0^+$ and we assume that $\nu/\gamma\to\kappa>0$, so that we can observe asymptotically a balanced contribution of the interactions and of the control. Under the scaling $\tau:=\frac{\gamma}{2}t$, $\tilde{f}^\ast(\tau,\,w):=f^\ast(2\tau/\gamma,\,w)$ and $\tilde{V}^\ast(\tau):=V^\ast(2\tau/\gamma)=\int_0^1 v\tilde{f}^\ast(\tau,\,v)\,dv$ we obtain from~\eqref{eq:dV/dt.pu}:
$$ \frac{d\tilde{V}^\ast}{d\tau}=P(\rho)[1+(1-P(\rho))\tilde{V}^\ast]+p^\ast\int_0^1V_d(w;\,\rho)\tilde{f}^\ast(\tau,\,w)\,dw
	-(1+p^\ast)\tilde{V}^\ast, $$
where
$$ p^\ast:=\frac{p}{\kappa} $$
can be understood as an \textit{effective penetration rate} taking into account not only the actual percentage $p$ of vehicles equipped with driver-assist technologies but also the relative penalisation $\kappa$ of the in-vehicle control. Thus the asymptotic value $V_\infty^\ast$ that the mean speed approaches as $\tau\to+\infty$ satisfies
\begin{equation}
	P(\rho)\left[1+(1-P(\rho))V_\infty^\ast\right]+p^\ast\int_0^1V_d(w;\,\rho)\tilde{f}_\infty^\ast(w)\,dw
		=\left(1+p^\ast\right)V_\infty^\ast.
	\label{eq:Vtildeinf}
\end{equation}

With $V_d(w;\,\rho)=w$, i.e. when the driver-assist control seeks to minimise the binary variance of the speeds of the two interacting vehicles,~\eqref{eq:Vtildeinf} gives for $V_\infty^\ast$ the same as~\eqref{eq:V}, hence there are apparently no differences with respect to the uncontrolled case. However we anticipate that a more accurate investigation of the asymptotic statistical properties of the flow of vehicles, cf. the next Section~\ref{sect:asympt.controlled}, will reveal that the driver-assist control actually succeeds in reducing the asymptotic variance of the microscopic speeds, which is at the heart of the risk mitigation issues.

\begin{figure}[!t]
\centering
\begin{tikzpicture}[
declare function={
	P(\x)=(1-\x)^2;
	vd(\x)=(1-\x);
	Vinf(\x,\p)=(P(\x)+\p*vd(\x))/(P(\x)+(1-P(\x))^2+\p);
} 
]
\begin{groupplot}[
	group style={group size= 2 by 1,horizontal sep=15mm},
	width=0.4\textwidth,
	legend cell align=left,
	legend pos=outer north east,
	legend style={draw=none}
]
\nextgroupplot[ylabel=$\tilde{V}_\infty^\ast(\rho)$,xmin=0,xmax=1,ymin=0,ymax=1,title=Speed diagram] 
	\addplot[gray,domain=0:1,smooth,very thick,style={solid}]{Vinf(x,0)};
	\addplot[blue,domain=0:1,smooth,very thick,style={dashed}]{Vinf(x,0.5)};
	\addplot[cyan,domain=0:1,smooth,very thick,style={dash dot}]{Vinf(x,1)};
	\addplot[green,domain=0:1,smooth,very thick,style={dotted}]{Vinf(x,5)};
\nextgroupplot[ylabel=$\rho\tilde{V}_\infty^\ast(\rho)$,xmin=0,xmax=1,ymin=0,ymax=0.25,title=Fundamental diagram] 
	\addplot[gray,domain=0:1,smooth,very thick,style={solid}]{x*Vinf(x,0)}; \addlegendentry{$p^\ast=0$};
	\addplot[blue,domain=0:1,smooth,very thick,style={dashed}]{x*Vinf(x,0.5)}; \addlegendentry{$p^\ast=\frac{1}{2}$}
	\addplot[cyan,domain=0:1,smooth,very thick,style={dash dot}]{x*Vinf(x,1)}; \addlegendentry{$p^\ast=1$}
	\addplot[green,domain=0:1,smooth,very thick,style={dotted}]{x*Vinf(x,5)}; \addlegendentry{$p^\ast=5$}
\end{groupplot}
\end{tikzpicture}
\caption{Speed and fundamental diagrams from~\eqref{eq:Vinf.pu} with $v_d(\rho)=1-\rho$ and $P(\rho)$ like in~\eqref{eq:P} with $\mu=2$ for various effective penetration rates $p^\ast$. The case $p^\ast=0$ corresponds to the uncontrolled scenario.}
\label{fig:funddiag.pu}
\end{figure}
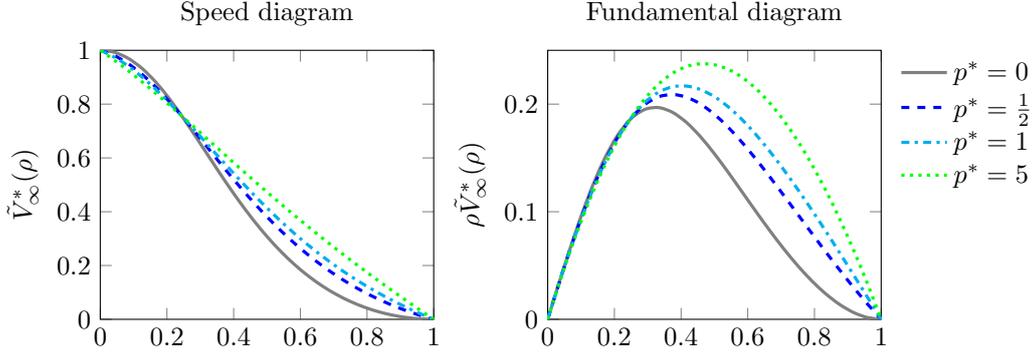

Conversely, with $V_d(w;\,\rho)=v_d(\rho)$, i.e. when the driver-assist control tries to align the car speed to a possibly traffic-dependent desired speed, from~\eqref{eq:Vtildeinf} we deduce
\begin{equation}
	V_\infty^\ast(\rho)=\frac{P(\rho)+p^\ast v_d(\rho)}
		{P(\rho)+{\left(1-P(\rho)\right)}^2+p^\ast}.
	\label{eq:Vinf.pu}
\end{equation}
As a general fact, we notice that now for $p^\ast$ small the speed and fundamental diagrams of traffic are close to those found in the uncontrolled case, cf.~\eqref{eq:Vinf} and Figure~\ref{fig:funddiag}. On the contrary, for $p^\ast$ large they get closer and closer to $v_d(\rho)$, $\rho v_d(\rho)$, respectively, cf. Figure~\ref{fig:funddiag.pu}. Nevertheless, also in this case a more accurate characterisation of the global effect of the driver-assist control on road risk mitigation can be obtained by studying in more detail the statistical properties of the flow at equilibrium, cf. the next Section~\ref{sect:asympt.controlled}.

\subsection{Asymptotic speed variance and risk mitigation}
\label{sect:asympt.controlled}
As claimed at the beginning of Section~\ref{sect:micro.control}, an indicator of the road risk and of the effectiveness of in-vehicle driver-assist control strategies for its mitigation is the variance of the speed distribution. It is therefore interesting to investigate such a statistical property of the flow of vehicles at equilibrium in the case of the controlled binary interactions~\eqref{eq:bin.u-feedback}, taking advantage of the analytical procedure illustrated in Section~\ref{sect:asympt}.

As already set forth in Section~\ref{sect:funddiag.pu}, in order to study the quasi-invariant interaction regime of the Boltzmann-type equation~\eqref{eq:Boltzmann_ETheta} we consider the limit $\gamma,\,\nu,\,\sigma^2\to 0^+$ and assume $\sigma^2/\gamma\to\lambda>0$, $\nu/\gamma\to\kappa>0$, implying that both diffusive and control contributions balance asymptotically with interactions. Under the time scaling $\tau:=\frac{\gamma}{2}t$ we obtain the Fokker-Planck equation
\begin{equation}
	\partial_\tau\tilde{f}^\ast=\frac{\lambda}{2}\partial^2_v(D^2(v;\,\rho)\tilde{f}^\ast)-
		\partial_v\left[\left(\int_0^1\left(I(v,\,w;\,\rho)+p^\ast V_d(w;\,\rho)\right)\tilde{f}^\ast(\tau,\,w)\,dw-p^\ast v\right)
			\tilde{f}^\ast\right],
	\label{eq:FP.pu}
\end{equation}
where $\tilde{f}^\ast$ is required to satisfy boundary conditions similar to~\eqref{eq:FP-bc}. In particular, such conditions are met if $\tilde{f}^\ast(\tau,\,v)=\partial_v\tilde{f}^\ast(\tau,\,v)=0$ for $v=0,\,1$ and all $\tau>0$.

Using the expression~\eqref{eq:I} of $I$ and taking~\eqref{eq:Vtildeinf} into account we have that the asymptotic speed distribution $\tilde{f}_\infty^\ast(v)$ that the system approaches for $\tau\to+\infty$ solves
\begin{equation}
	\frac{\lambda}{2}\partial^2_v(D^2(v;\,\rho)\tilde{f}_\infty^\ast)-
		(1+p^\ast)\partial_v\left(\left(V_\infty^\ast(\rho)-v\right)\tilde{f}_\infty^\ast\right)=0,
	\label{eq:steady_FP.u}
\end{equation}
where $V_\infty^\ast(\rho)$ is either~\eqref{eq:Vinf} or~\eqref{eq:Vinf.pu} depending on the chosen target speed $V_d$. For $D(v;\,\rho)$ like in~\eqref{eq:D} the solution to~\eqref{eq:steady_FP.u} reads
\begin{equation}
	\tilde{f}_\infty^\ast(v)=\frac{v^{\frac{2(1+p^\ast)}{\lambda a^2(\rho)}V_\infty^\ast(\rho)-1}
		(1-v)^{\frac{2(1+p^\ast)}{\lambda a^2(\rho)}(1-V_\infty^\ast(\rho))-1}}
			{\Beta\left(\frac{2(1+p^\ast)}{\lambda a^2(\rho)}V_\infty^\ast(\rho),\,
				\frac{2(1+p^\ast)}{\lambda a^2(\rho)}(1-V_\infty^\ast(\rho))\right)}
	\label{eq:finf.ast}
\end{equation}
and $\tilde{f}_\infty^\ast$, $\partial_v\tilde{f}_\infty^\ast$ vanish at $v=0,\,1$ if
\begin{equation}
	a^2(\rho)\leq\frac{1+p^\ast}{\lambda}\min\left\{V_\infty^\ast(\rho),\,1-V_\infty^\ast(\rho)\right\}.
	\label{eq:a.pu}
\end{equation}
On the whole, the random variable $X^\ast\sim\tilde{f}_\infty^\ast$ describing the controlled vehicle speed at equilibrium is again distributed according to a beta probability density function but now its variance is
\begin{equation}
	\Var(X^\ast)=\frac{\lambda a^2(\rho)}{2+\lambda a^2(\rho)+2p^\ast}
		V_\infty^\ast(\rho)\left(1-V_\infty^\ast(\rho)\right).
	\label{eq:VarXast}
\end{equation}

Let us assume $V_d(w;\,\rho)=w$, so that $V_\infty^\ast(\rho)$ is the same as $V_\infty(\rho)$ in~\eqref{eq:Vinf}. Then a direct comparison between~\eqref{eq:beta.var} and~\eqref{eq:VarXast} shows that $\Var(X^\ast)$ is invariably smaller than $\Var(X)$ for all $\rho\in [0,\,1]$ provided $p^\ast>0$, meaning that an in-vehicle driver-assist system designed to reduce the binary speed variance can effectively mitigate the collective driving risk. We stress that instead a purely macroscopic analysis of the traffic flow based on more standard tools, such as the traffic diagrams, is unable to catch any difference with respect to the uncontrolled case.

\begin{figure}[!t]
\centering
\begin{tikzpicture}[
	declare function={
		f(\x)=0.5*(1+1/16)*\x/(1-\x);
		a(\x)=\x*(1-\x);
		P(\x)=(1-\x)^2;
		vd(\x)=(1-\x);
		Var(\x,\p)=a(\x)^2/(2+a(\x)^2+2*\p)*(P(\x)+\p*vd(\x))/(P(\x)+(1-P(\x))^2+\p)*(1-(P(\x)+\p*vd(\x))/(P(\x)+(1-P(\x))^2+\p));
	}
]
\begin{groupplot}[
	group style={group size= 2 by 1,horizontal sep=15mm},
	width=0.45\textwidth,
	legend cell align=left,
	legend pos=outer north east,
	legend style={draw=none}
]
\nextgroupplot[
	axis lines=middle,
	xtick={0,0.6530,1}, xticklabels={0,$q_\mathrm{max}$,1}, xtick pos={left},
	ytick={0,1}, ytick pos={left},
	xlabel={$q$}, x label style={at={(axis description cs:1,0)},anchor=north},
	ylabel={$p$}, y label style={at={(axis description cs:0,1)},anchor=east},
	xmin=0,
	xmax=1.1,
	ymin=0,
	ymax=1.2
]
	\addplot[orange,domain=0:0.6530,smooth,very thick,style={solid},name path=A]{f(x)};
	\addplot[orange,domain=0.6530:0.99,smooth,very thick,style={dashed}]{f(x)};
	\addplot[gray,style={dashed}] coordinates {(0.6530,0) (0.6530,1)};
	\addplot[gray,style={dashed},name path=B] coordinates {(0,1) (0.6530,1)};
	\addplot[gray,opacity=0.3] fill between[of=A and B,soft clip={domain=0:0.6530}];
	\node at (axis cs: 0.27,0.65){\begin{minipage}{3cm}\centering \footnotesize Risk mitigation \\ achievable\end{minipage}};
	\node at (axis cs: 0.9,0.9){\begin{minipage}{3cm}\centering \footnotesize Risk \\ mitigation \\ not achievable\end{minipage}};

\nextgroupplot[
	xlabel={$\rho$},
	ylabel={$\Var(X^\ast)$},
	xmin=0,
	xmax=1,
	ymin=0
]
	\addplot[gray,domain=0:1,smooth,very thick,style={solid}]{Var(x,0)}; \addlegendentry{$p^\ast=0$}
	\addplot[blue,domain=0:1,smooth,very thick,style={dashed}]{Var(x,0.5)}; \addlegendentry{$p^\ast=\frac{1}{2}$}
	\addplot[cyan,domain=0:1,smooth,very thick,style={dash dot}]{Var(x,1)}; \addlegendentry{$p^\ast=1$}
	\addplot[green,domain=0:1,smooth,very thick,style={dotted}]{Var(x,5)}; \addlegendentry{$p^\ast=5$}
\end{groupplot}
\end{tikzpicture}
\caption{Left: Relationship between the target risk mitigation $q$ and the penetration rate $p$ with the binary speed variance control strategy, cf.~\eqref{eq:p},~\eqref{eq:qbar}. Right: $\Var(X^\ast)$ as a function of the traffic density $\rho$, cf.~\eqref{eq:VarXast}, in the case of the desired speed control strategy. Here $V_\infty(\rho)$ is given by~\eqref{eq:Vinf.pu} with $v_d(\rho)=1-\rho$; moreover we have set $\lambda=1$, $a(\rho)=\rho(1-\rho)$, $P(\rho)$ like in~\eqref{eq:P} with $\mu=2$. It can be checked that these choices comply with condition~\eqref{eq:a.pu} for all $p^\ast=\frac{p}{\kappa}\geq 0$.}
\label{fig:pq.binvar}
\end{figure}
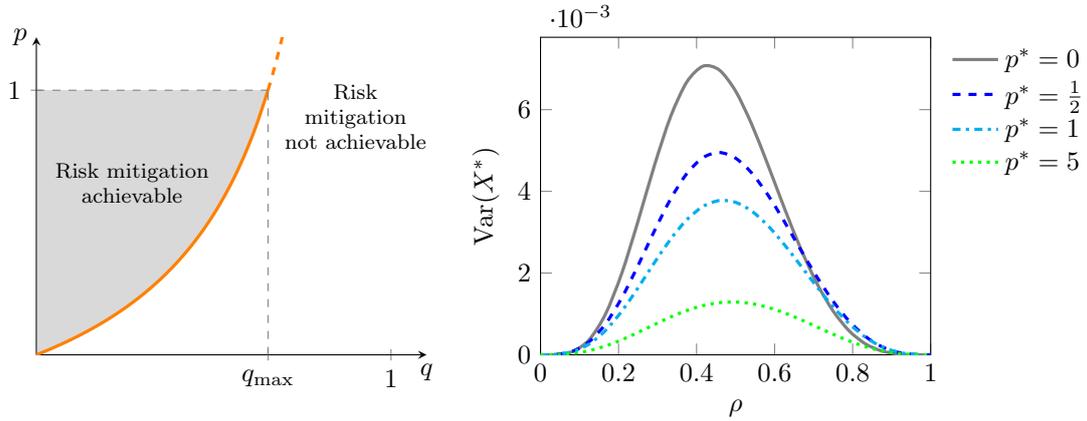

The relative reduction of the speed variance at equilibrium representing the risk mitigation factor, say $q$, with respect to the uncontrolled scenario is
$$ q:=\frac{\Var(X)-\Var(X^\ast)}{\Var(X)}=\frac{p^\ast}{1+\frac{\lambda}{2}a^2(\rho)+p^\ast}, $$
whence the minimum penetration rate $p$ necessary to achieve a given risk mitigation factor can be computed as
\begin{equation}
	p\geq\kappa\left(1+\frac{\lambda}{2}a^2(\rho)\right)\frac{q}{1-q}.
	\label{eq:p}
\end{equation}
Considering that the penetration rate can be at most $p=1$ when all vehicles in the traffic stream are equipped with driver-assist technologies, from~\eqref{eq:p} we also infer that the maximum achievable risk mitigation, say $q_\mathrm{max}$, is
\begin{equation}
	q_\mathrm{max}=\frac{1}{1+\kappa\left(1+\frac{\lambda}{2}a^2(\rho)\right)},
	\label{eq:qbar}
\end{equation}
see Figure~\ref{fig:pq.binvar} (left).

Conversely, if we assume $V_d(w;\,\rho)=v_d(\rho)$ then $V_\infty^\ast(\rho)$ is given by~\eqref{eq:Vinf.pu} and a comparison of~\eqref{eq:VarXast} with the uncontrolled case~\eqref{eq:beta.var} is now less straightforward. From~\eqref{eq:Vinf.pu} we have that $V_\infty^\ast(\rho)\to v_d(\rho)$ when $p^\ast\to+\infty$, i.e. $\kappa\to 0^+$; in the same limit, from~\eqref{eq:VarXast} we also find $\Var(X^\ast)\to 0$. Thus the rationale behind this control strategy is to mitigate the driving risk by inducing the synchronisation of the traffic flow around a traffic-dependent recommended speed. However, since $v_d(\rho)$ may be chosen independently of the ``spontaneous'' mean speed~\eqref{eq:Vinf}, we observe that in general this control strategy does not guarantee that the speed variance $\Var(X^\ast)$ for $p^\ast>0$ be always strictly lower than $\Var(X)$ for $p^\ast=0$, see Figure~\ref{fig:pq.binvar} (right).

\bigskip

We conclude this section by generalising the results discussed so far to sufficiently arbitrary interaction functions $I$ and diffusion coefficients $D$. For this it is convenient to introduce the concept of \textit{energy} of the system, which is defined as
$$ \tilde{E}(\tau):=\int_0^1v^2\tilde{f}(\tau,\,v)\,dv. $$
Notice that the speed variance at every time $\tau\geq 0$ can then be computed as $\tilde{E}(\tau)-\tilde{V}^2(\tau)$.

\begin{theorem}[Binary variance control] \label{theo:bin.var}
In~\eqref{eq:bin.u-feedback}, let $V_d(w;\,\rho)=w$ and furthermore let $I$ be a linear-affine function in $v,\,w$:
$$ I(v,\,w;\,\rho)=A(\rho)v+B(\rho)w+C(\rho), $$
and $D$ be given by~\eqref{eq:D}. Assume that initially the mean speed and the energy of the controlled and uncontrolled models are the same, i.e. $\tilde{V}^\ast(0)=\tilde{V}(0)$ and $\tilde{E}^\ast(0)=\tilde{E}(0)$. Then:
\begin{align*}
	& \tilde{V}^\ast(\tau)=\tilde{V}(\tau) \\
	& \tilde{E}^\ast(\tau)-(\tilde{V}^\ast(\tau))^2\leq \tilde{E}(\tau)-\tilde{V}^2(\tau)
\end{align*}
for all $\tau>0$.
\end{theorem}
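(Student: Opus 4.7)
The plan is to obtain evolution equations for the mean $\tilde{V}^\ast$ and the energy $\tilde{E}^\ast$ by inserting $\varphi(v)=v$ and $\varphi(v)=v^2$ in the weak form of the controlled Fokker--Planck equation~\eqref{eq:FP.pu}, and to compare them with the analogous equations coming from the uncontrolled equation~\eqref{eq:FP}. The linear-affine structure of $I$ together with the explicit form $D^2(v;\rho)=a^2(\rho)v(1-v)$ will close the moment system and reduce the problem to a scalar linear ODE for the difference of energies.

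For the first moment, I would test~\eqref{eq:FP.pu} against $\varphi(v)=v$. Because $V_d(w;\rho)=w$, the extra drift produced by the control, namely $p^\ast\int_0^1 w\,\tilde{f}^\ast(\tau,w)\,dw-p^\ast v=p^\ast(\tilde{V}^\ast-v)$, integrates against $\tilde{f}^\ast(\tau,v)$ to $p^\ast\tilde{V}^\ast-p^\ast\tilde{V}^\ast=0$. Consequently $\tilde{V}^\ast$ obeys exactly the same ODE $d\tilde{V}^\ast/d\tau=(A(\rho)+B(\rho))\tilde{V}^\ast+C(\rho)$ as $\tilde{V}$ in the uncontrolled case, and matching initial conditions together with uniqueness yield $\tilde{V}^\ast(\tau)\equiv\tilde{V}(\tau)$.

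Next, testing~\eqref{eq:FP.pu} against $\varphi(v)=v^2$, the diffusion term contributes $\lambda a^2(\rho)(\tilde{V}^\ast-\tilde{E}^\ast)$, the interaction drift contributes $2A(\rho)\tilde{E}^\ast+2B(\rho)(\tilde{V}^\ast)^2+2C(\rho)\tilde{V}^\ast$ by the linear-affine ansatz for $I$, and the control drift contributes $2p^\ast((\tilde{V}^\ast)^2-\tilde{E}^\ast)$. The same computation applied to~\eqref{eq:FP} (equivalently, setting $p^\ast=0$) yields the ODE for $\tilde{E}$. Setting $g(\tau):=\tilde{E}^\ast(\tau)-\tilde{E}(\tau)$ and using $\tilde{V}^\ast=\tilde{V}$ to cancel every term depending only on the means, the difference satisfies the linear equation
$$ \frac{dg}{d\tau}=\bigl(2A(\rho)-\lambda a^2(\rho)\bigr)\,g(\tau)-2p^\ast\bigl(\tilde{E}^\ast(\tau)-(\tilde{V}^\ast(\tau))^2\bigr), \qquad g(0)=0. $$

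The conclusion then follows from variation of constants: the forcing term is non-positive because $\tilde{E}^\ast(\tau)-(\tilde{V}^\ast(\tau))^2$ is the (non-negative) instantaneous variance of $\tilde{f}^\ast(\tau,\cdot)$ and $p^\ast\geq 0$, so $g(\tau)\leq 0$ for every $\tau>0$. Combined with $\tilde{V}^\ast=\tilde{V}$ this is precisely $\tilde{E}^\ast-(\tilde{V}^\ast)^2\leq\tilde{E}-\tilde{V}^2$, as required. The main technical step I expect is the bookkeeping of the integration by parts from~\eqref{eq:FP.pu}---in particular, verifying that the control drift really integrates to zero in the first-moment equation, which is what forces the means to coincide, and checking that the boundary contributions vanish under the constraint~\eqref{eq:a.pu}. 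Once these cancellations are secured, the remainder is a purely scalar Gr\"onwall-type comparison.
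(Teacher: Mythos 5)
Your proposal is correct and follows essentially the same route as the paper: testing the controlled Fokker--Planck equation~\eqref{eq:FP.pu} with $\varphi(v)=v$ and $\varphi(v)=v^2$, noting that the control drift vanishes in the first-moment equation so that $\tilde{V}^\ast\equiv\tilde{V}$, and exploiting the sign of $(\tilde{V}^\ast)^2-\tilde{E}^\ast$ (the negative of the variance) in the energy balance. The only cosmetic difference is that you keep the exact forced linear ODE for $g=\tilde{E}^\ast-\tilde{E}$ and conclude by variation of constants, whereas the paper passes directly to the differential inequality $\frac{d}{d\tau}(\tilde{E}^\ast-\tilde{E})\leq(2A(\rho)-\lambda a^2(\rho))(\tilde{E}^\ast-\tilde{E})$ and applies a Gr\"onwall comparison; these are equivalent.
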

\begin{proof}
Letting $\varphi(v)=v$ in~\eqref{eq:FP.weak} gives, in the uncontrolled case,
$$ \frac{d\tilde{V}}{d\tau}=\left(A(\rho)+B(\rho)\right)\tilde{V}+C(\rho). $$
Likewise, multiplying~\eqref{eq:FP.pu} by $v$ and integrating by parts on $[0,\,1]$ with the proper boundary conditions on $\tilde{f}^\ast$ produces, in the controlled case with $V_d(w;\,\rho)=w$,
$$ \frac{d\tilde{V}^\ast}{d\tau}=\left(A(\rho)+B(\rho)\right)\tilde{V}^\ast+C(\rho), $$
thus $\frac{d}{d\tau}(\tilde{V}^\ast-\tilde{V})=\left(A(\rho)+B(\rho)\right)(\tilde{V}^\ast-\tilde{V})$ and finally $\tilde{V}^\ast(\tau)=\tilde{V}(\tau)$ for all $\tau>0$ as $\tilde{V}^\ast(0)=\tilde{V}(0)$.

Letting now $\varphi(v)=v^2$ with $D(v;\,\rho)=a(\rho)\sqrt{v(1-v)}$ in~\eqref{eq:FP.weak} yields, in the uncontrolled case,
$$ \frac{d\tilde{E}}{d\tau}=2\left(A(\rho)\tilde{E}+B(\rho)\tilde{V}^2+C(\rho)\tilde{V}\right)
	+\lambda a^2(\rho)\left(\tilde{V}-\tilde{E}\right). $$
On the other hand, multiplying~\eqref{eq:FP.pu} by $v^2$ and integrating on $[0,\,1]$ we discover, in the controlled case,
\begin{align*}
	\frac{d\tilde{E}^\ast}{d\tau} &= 2\left(A(\rho)\tilde{E}^\ast+B(\rho)(\tilde{V}^\ast)^2+C(\rho)\tilde{V}^\ast\right)
		+\lambda a^2(\rho)\left(\tilde{V}^\ast-\tilde{E}^\ast\right) \\
	&\phantom{=} +2p^\ast\left((\tilde{V}^\ast)^2-\tilde{E}^\ast\right).
\end{align*}
Since $\tilde{V}^\ast=\tilde{V}$ while $(\tilde{V}^\ast)^2-\tilde{E}^\ast\leq 0$, because it is the opposite of the variance of $\tilde{f}^\ast$, this further implies
$$ \frac{d\tilde{E}^\ast}{d\tau}\leq 2\left(A(\rho)\tilde{E}^\ast+B(\rho)\tilde{V}^2+C(\rho)\tilde{V}\right)
	+\lambda a^2(\rho)\left(\tilde{V}-\tilde{E}^\ast\right), $$
whence by difference $\frac{d}{d\tau}(\tilde{E}^\ast-\tilde{E})\leq (2A(\rho)-\lambda a^2(\rho))(\tilde{E}^\ast-\tilde{E})$ and finally $\tilde{E}^\ast(\tau)-\tilde{E}(\tau)\leq 0$ for all $\tau>0$ because $\tilde{E}^\ast(0)=\tilde{E}(0)$. Then
$$ \tilde{E}^\ast(\tau)-(\tilde{V}^\ast(\tau))^2=\tilde{E}^\ast(\tau)-\tilde{V}^2(\tau)\leq
	\tilde{E}(\tau)-\tilde{V}^2(\tau) $$
and the thesis follows.
\end{proof}

\begin{theorem}[Desired speed control]
In~\eqref{eq:bin.u-feedback}, let $V_d(w;\,\rho)=v_d(\rho)$ and let $I,\,D$ be bounded for all $v,\,w,\,\rho\in [0,\,1]$. Then
$$ \abs{V_\infty^\ast(\rho)-v_d(\rho)}\lesssim\frac{1}{p^\ast},
	\qquad \abs{E_\infty^\ast(\rho)-v_d^2(\rho)}\lesssim\frac{1}{p^\ast} $$
and in particular
$$ E_\infty^\ast(\rho)-{(V_\infty^\ast(\rho))}^2\lesssim\frac{1}{p^\ast}. $$
\end{theorem}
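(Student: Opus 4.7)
The plan is to reduce everything to two moment identities for the stationary Fokker--Planck equation~\eqref{eq:FP.pu} specialised to $V_d(w;\rho)=v_d(\rho)$. The key structural observation is that the driver-assist drift enters linearly as $p^\ast(v_d(\rho)-v)$, so testing the weak form against polynomials in $v$ and imposing $\partial_\tau=0$ produces relations in which $p^\ast$ multiplies exactly the quantity one wants to estimate, while the remaining pieces involve only $I$ and $D$ and are therefore $O(1)$ by hypothesis. No integration by parts is needed, because the weak form~\eqref{eq:FP.weak}, suitably extended to include the control, already carries all derivatives on the test function.

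First I would test with $\varphi(v)=v$. Using $\partial_\tau\tilde{f}_\infty^\ast=0$ and the fact that $\tilde{f}_\infty^\ast$ is a probability density, one obtains
\[
p^\ast\bigl(V_\infty^\ast(\rho)-v_d(\rho)\bigr)=\int_0^1\!\!\int_0^1 I(v,w;\rho)\,\tilde{f}_\infty^\ast(v)\tilde{f}_\infty^\ast(w)\,dv\,dw,
\]
whose right-hand side is bounded by $\|I\|_\infty$; dividing by $p^\ast$ gives the first estimate $|V_\infty^\ast-v_d|\lesssim 1/p^\ast$.

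Next I would test with $\varphi(v)=v^2$. Stationarity now yields
\[
2p^\ast\bigl(E_\infty^\ast(\rho)-v_d(\rho)V_\infty^\ast(\rho)\bigr)=2\int_0^1\!\!\int_0^1 v\,I(v,w;\rho)\,\tilde{f}_\infty^\ast(v)\tilde{f}_\infty^\ast(w)\,dv\,dw+\lambda\int_0^1 D^2(v;\rho)\,\tilde{f}_\infty^\ast(v)\,dv,
\]
and the boundedness of $I$ and $D$ delivers $|E_\infty^\ast-v_d V_\infty^\ast|\lesssim 1/p^\ast$. Writing $E_\infty^\ast-v_d^2=(E_\infty^\ast-v_d V_\infty^\ast)+v_d(V_\infty^\ast-v_d)$ and using $v_d(\rho)\in[0,1]$ together with the first step yields the second estimate. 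The variance bound then follows from the algebraic identity
\[
E_\infty^\ast-(V_\infty^\ast)^2=\bigl(E_\infty^\ast-v_d^2\bigr)-\bigl(V_\infty^\ast-v_d\bigr)\bigl(V_\infty^\ast+v_d\bigr),
\]
both summands being $O(1/p^\ast)$.

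I do not expect any genuine obstacle here: the mechanism is the standard ``large restoring force yields small residual'' principle for relaxation dynamics, and the hidden constants depend only on $\|I\|_\infty$, $\|D\|_\infty^2$, $\lambda$ and $\|v_d\|_\infty\le 1$, so the estimates are automatically uniform in $\rho$. The only point requiring minor care is to write the energy balance with the correct grouping, so that $p^\ast$ multiplies precisely $E_\infty^\ast-v_d V_\infty^\ast$ (rather than, say, $E_\infty^\ast-v_d^2$); otherwise the single triangle inequality that combines the two moment estimates would not close as neatly.
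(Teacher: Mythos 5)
Your proposal is correct and follows essentially the same route as the paper: test the Fokker--Planck equation~\eqref{eq:FP.pu} with $\varphi(v)=v$ and $\varphi(v)=v^2$, isolate the terms multiplied by $p^\ast$, bound the remaining integrals by $\lVert I\rVert_\infty$, $\lambda\lVert D\rVert_\infty^2$, and conclude with the same triangle-inequality decompositions for $E_\infty^\ast-v_d^2$ and the variance. The only cosmetic difference is that for the first moment the paper integrates the time-dependent ODE explicitly and lets $\tau\to+\infty$, whereas you impose stationarity directly, which is legitimate since the equilibrium distribution satisfies the stationary equation.
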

\begin{proof}
Multiplying the Fokker-Planck equation~\eqref{eq:FP.pu} by $v$ and integrating on $[0,\,1]$ gives
$$ \frac{d\tilde{V}^\ast}{d\tau}=\int_0^1\int_0^1 I(v,\,w;\,\rho)\tilde{f}^\ast(\tau,\,v)\tilde{f}^\ast(\tau,\,w)\,dv\,dw
	+p^\ast(v_d(\rho)-\tilde{V}^\ast). $$
Writing $\frac{d\tilde{V}^\ast}{d\tau}=\frac{d}{d\tau}(\tilde{V}^\ast-v_d(\rho))$ because $v_d(\rho)$ is constant we obtain
$$ \tilde{V}^\ast(\tau)-v_d(\rho)=e^{-p^\ast\tau}(\tilde{V}^\ast(0)-v_d(\rho))
	+\int_0^\tau e^{-p^\ast(\tau-s)}\int_0^1\int_0^1 I(v,\,w;\,\rho)\tilde{f}^\ast(\tau,\,v)\tilde{f}^\ast(\tau,\,w)\,dv\,dw $$
whence, since $I$ is bounded, say $\abs{I(v,\,w;\,\rho)}\leq I_\text{max}$ for all $v,\,w,\,\rho\in [0,\,1]$,
$$ \abs{\tilde{V}^\ast(\tau)-v_d(\rho)}\leq e^{-p^\ast\tau}\abs{\tilde{V}^\ast(0)-v_d(\rho)}
	+\frac{I_\text{max}}{p^\ast}\left(1-e^{-p^\ast\tau}\right), $$
which asymptotically ($\tau\to+\infty$) yields $\abs{V_\infty^\ast-v_d(\rho)}\leq\frac{I_\text{max}}{p^\ast}$.

Multiplying now~\eqref{eq:FP.pu} by $v^2$ and integrating on $[0,\,1]$ produces
\begin{align*}
	\frac{d\tilde{E}^\ast}{d\tau} &= 2\int_0^1\int_0^1 vI(v,\,w;\,\rho)\tilde{f}^\ast(\tau,\,v)\tilde{f}^\ast(\tau,\,w)\,dv\,dw \\
	&\phantom{=} +2p^\ast\left(v_d(\rho)\tilde{V}^\ast(\tau)-\tilde{E}^\ast(\tau)\right)
		+\lambda\int_0^1 D^2(v;\,\rho)\tilde{f}^\ast(\tau,\,v)\,dv.
\end{align*}
The asymptotic behaviour of the energy ($\tau\to+\infty$) is found by setting the left-hand side to zero, whence
$$ E_\infty^\ast=v_d(\rho)V_\infty^\ast+
	\frac{1}{p^\ast}\int_0^1\int_0^1 vI(v,\,w;\,\rho)\tilde{f}_\infty^\ast(v)\tilde{f}_\infty^\ast(w)\,dv\,dw
		+\frac{\lambda}{2p^\ast}\int_0^1 D^2(v;\,\rho)\tilde{f}_\infty^\ast(v)\,dv. $$
From here, subtracting $v_d^2(\rho)$ to both sides and using the boundedness of $I$, $D$ (with, say, $D(v;\,\rho)\leq D_\text{max}$ for all $v,\,\rho\in [0,\,1]$) and the previous result on $V_\infty^\ast$ we obtain
\begin{align*}
	\abs{E_\infty^\ast-v_d^2(\rho)} &\leq v_d(\rho)\abs{V_\infty^\ast-v_d(\rho)}
		+\left(I_\text{max}+\frac{\lambda D^2_\text{max}}{2}\right)\frac{1}{p^\ast} \\
	&\leq \left(v_d(\rho)I_\text{max}+I_\text{max}+\frac{\lambda D^2_\text{max}}{2}\right)\frac{1}{p^\ast} \\
	&\leq \left(2I_\text{max}+\frac{\lambda D^2_\text{max}}{2}\right)\frac{1}{p^\ast}.
\end{align*}

Finally, the speed variance at equilibrium is
\begin{align*}
	E_\infty^\ast-{(V_\infty^\ast)}^2 &= E_\infty^\ast-v_d^2(\rho)+v_d^2(\rho)-{(V_\infty^\ast)}^2 \\
	&\leq \abs{E_\infty^\ast-v_d^2(\rho)}+\abs{(v_d(\rho)+V_\infty^\ast)(v_d(\rho)-V_\infty^\ast)} \\
	&\leq \abs{E_\infty^\ast-v_d^2(\rho)}+2\abs{v_d(\rho)-V_\infty^\ast} \\
	&\leq \left(4I_\text{max}+\frac{\lambda D^2_\text{max}}{2}\right)\frac{1}{p^\ast}. \qedhere
\end{align*}
\end{proof}

\section{Hydrodynamic models}
\label{sect:hydro}
The homogeneous kinetic equations studied in the previous sections are the basis to derive hydrodynamic traffic models incorporating the microscopic control strategies of driver-assist vehicles.

The kinetic framework to obtain hydrodynamic models is provided by the \textit{inhomogeneous Boltzmann equation}, which with the controlled binary interaction rules~\eqref{eq:bin.u-feedback} reads
\begin{align}
	\begin{aligned}[b]
		\partial_t\int_0^1\varphi(v)f^\ast(t,\,x,\,v)\,dv &+ \partial_x\int_0^1 v\varphi(v)f^\ast(t,\,x,\,v)\,dv \\
		&= \frac{1}{2}\E_\Theta\left[\ave{\int_0^1\int_0^1\left(\varphi(v')-\varphi(v)\right)f^\ast(t,\,x,\,v)f^\ast(t,\,x,\,w)\,dv\,dw}\right].
	\end{aligned}
	\label{eq:Boltzmann.inhomog}
\end{align}
Notice that this is the spatially inhomogeneous counterpart of~\eqref{eq:Boltzmann_ETheta}, with $x\in\R$ denoting the space position of the vehicles. On the whole, the microscopic state of the vehicles is now defined by the position-speed pair $(x,\,v)\in\R\times [0,\,1]$ and $f^\ast=f^\ast(t,\,x,\,v)$ is its probability density at time $t\geq 0$. In particular,
$$ \int_\R\int_0^1f^\ast(t,\,x,\,v)\,dv\,dx=1, \qquad \forall\,t\geq 0 $$
while
$$ \rho(t,\,x):=\int_0^1f^\ast(t,\,x,\,v)\,dv $$
is the vehicle density at time $t$ in the point $x$, which, unlike the homogeneous model, is in general no longer constant in time due to the transport dynamics in space (cf. the second term at the left-hand side in~\eqref{eq:Boltzmann.inhomog}).

\subsection{Local equilibrium closure}
\label{sect:local.equil}
Transport models at the macroscopic scale can be recovered from~\eqref{eq:Boltzmann.inhomog} by means of a hyperbolic scaling of time and space:
$$ \tau:=\epsilon t, \qquad \xi:=\epsilon x \qquad (0<\epsilon\ll 1), $$
which defines the hydrodynamic temporal and spatial scales, respectively. After introducing the scaled distribution function $\tilde{f}^\ast(\tau,\,\xi,\,v):=f^\ast(\tau/\epsilon,\,\xi/\epsilon,\,v)$ and noticing that $\partial_\tau\tilde{f}^\ast=\frac{1}{\epsilon}\partial_tf^\ast$, $\partial_\xi\tilde{f}^\ast=\frac{1}{\epsilon}\partial_xf^\ast$ this yields
\begin{align}
	\begin{aligned}[b]
		\partial_\tau\int_0^1\varphi(v)\tilde{f}^\ast(\tau,\,\xi,\,v)\,dv
			&+\partial_\xi\int_0^1 v\varphi(v)\tilde{f}^\ast(\tau,\,\xi,\,v)\,dv \\
		&= \frac{1}{2\epsilon}\E_\Theta\left[\ave{\int_0^1\int_0^1\left(\varphi(v')-\varphi(v)\right)\tilde{f}^\ast(\tau,\,\xi,\,v)
			\tilde{f}^\ast(\tau,\,\xi,\,w)\,dv\,dw}\right].
	\end{aligned}
	\label{eq:complete_BOL_weak}
\end{align}
If $\epsilon$ is small then vehicle interactions (right-hand side) dominate over the advection of the distribution function $\tilde{f}^\ast$ (left-hand side). Inspired by~\cite{during2007PHYSA}, we can then split the dynamics on two well-separated time scales as follows: we consider a ``slow'' pure transport
\begin{equation}
	\partial_\tau\int_0^1\varphi(v)\tilde{f}^\ast(\tau,\,\xi,\,v)\,dv
		+\partial_\xi\int_0^1 v\varphi(v)\tilde{f}^\ast(\tau,\,\xi,\,v)\,dv=0
	\label{eq:splitting.transp}
\end{equation}
and parallelly ``quick'' interactions
\begin{equation}
	\partial_\tau\int_0^1\varphi(v)\tilde{f}^\ast(\tau,\,\xi,\,v)\,dv
		=\frac{1}{2\epsilon}\E_\Theta\left[\ave{\int_0^1\int_0^1\left(\varphi(v')-\varphi(v)\right)
			\tilde{f}^\ast(\tau,\,\xi,\,v)\tilde{f}^\ast(\tau,\,\xi,\,w)\,dv\,dw}\right]
	\label{eq:splitting.int.ftilde}
\end{equation}
which produce a local redistribution of the speeds. Notice that~\eqref{eq:splitting.int.ftilde} is actually an equation parametrised by $\xi$ on the original $t$-scale, in fact setting $g(t,\,\xi,\,v):=\tilde{f}^\ast(\epsilon t,\,\xi,\,v)$ we have
\begin{equation}
	\partial_t\int_0^1\varphi(v)g(t,\,\xi,\,v)\,dv
		=\frac{1}{2}\E_\Theta\left[\ave{\int_0^1\int_0^1\left(\varphi(v')-\varphi(v)\right)
			g(t,\,\xi,\,v)g(t,\,\xi,\,w)\,dv\,dw}\right].
	\label{eq:splitting.int.g}
\end{equation}

For fixed $\tau>0$, if $\epsilon\to 0^+$ then $t\to +\infty$, whence we deduce that the solution $\tilde{f}^\ast$ to~\eqref{eq:splitting.int.ftilde} at time $\tau$ with $\epsilon$ small is close to the local stationary solution $g_\infty(\xi,\,v)$ to~\eqref{eq:splitting.int.g}. On the other hand,~\eqref{eq:splitting.int.g} is virtually~\eqref{eq:Boltzmann_ETheta} for every fixed $\xi\in\R$ but with $\int_0^1g(t,\,\xi,\,v)\,dv=\rho(\tau,\,\xi)$ for all $t>0$. In fact, setting $\varphi(v)=1$ in~\eqref{eq:splitting.int.g} we obtain that the $v$-integral of $g$ is constant in time; moreover, by definition of $g$, it has to be equal to that of $\tilde{f}^\ast$ at time $\tau$. In conclusion, $g_\infty$ can be consistently written as $g_\infty(\xi,\,v)\approx\rho(\tau,\,\xi)\tilde{f}^\ast_\infty(v)$, where $\tilde{f}^\ast_\infty$ is the Fokker-Planck approximation~\eqref{eq:finf.ast} of the equilibrium solution to~\eqref{eq:Boltzmann_ETheta}, and finally:
$$ \tilde{f}^\ast(\tau,\,\xi,\,v)\approx\rho(\tau,\,\xi)\tilde{f}^\ast_\infty(v), $$
which provides the local equilibrium closure (in classical terms, the \textit{local ``Maxwellian'' function}) to be plugged into~\eqref{eq:splitting.transp} to obtain macroscopic conservation laws for the hydrodynamic parameters:
$$ \partial_\tau\left(\rho\int_0^1\varphi(v)\tilde{f}^\ast_\infty(v)\,dv\right)
	+\partial_\xi\left(\rho\int_0^1v\varphi(v)\tilde{f}^\ast_\infty(v)\,dv\right)=0. $$

Since the microscopic interactions~\eqref{eq:bin.u-feedback} conserve only the zeroth moment of the kinetic distribution function, a closed hydrodynamic equation is consistently obtained in terms of $\rho$ alone by choosing $\varphi(v)=1$:
\begin{equation}
	\partial_\tau\rho+\partial_\xi(\rho V^\ast_\infty(\rho))=0.
	\label{eq:first.order}
\end{equation}
This is a first order hydrodynamic traffic model with flux $\rho V^\ast_\infty(\rho)$, which is not necessarily concave for all $\rho\in [0,\,1]$, cf. Figure~\ref{fig:funddiag.pu} (right), as it happens instead more commonly in classical macroscopic models of vehicular traffic. Moreover, the flux depends ultimately on the microscopic control strategy implemented on the driver-assist vehicles. If one chooses the binary variance control then $V^\ast_\infty(\rho)$ is actually given by~\eqref{eq:Vinf} and no macroscopic impact on the vehicle density is observed with respect to the uncontrolled case:
\begin{equation}
	\partial_{\tau}\rho+\partial_{\xi}\left(\frac{\rho P(\rho)}{P(\rho)+(1-P(\rho))^2}\right)=0,
	\label{eq:macro_binvar}
\end{equation}
where $P(\rho)$ is given by~\eqref{eq:P}. Conversely, if one chooses the desired speed control then $V^\ast_\infty(\rho)$ is given by~\eqref{eq:Vinf.pu} and in the hydrodynamic limit we have
\begin{equation}
	\partial_\tau\rho+\partial_\xi\left(\frac{\rho\left(P(\rho)+p^\ast v_d(\rho)\right)}{P(\rho)+(1-P(\rho))^2+p^\ast}\right)=0.
	\label{eq:macro_vdes}
\end{equation}
Notice that now the concavity of the flux may depend strongly on the effective penetration rate $p^\ast=\frac{p}{\kappa}$ of the driver-assist vehicles and on the choice of the recommended speed $v_d(\rho)$.

\subsection{Monokinetic closure}
\label{sect:monokinetic}
A quite different procedure to obtain hydrodynamic models from~\eqref{eq:Boltzmann.inhomog}, which does not use the idea of local equilibrium of the interactions, consists in making the following ansatz on the solution $f^\ast$ to~\eqref{eq:Boltzmann.inhomog}:
\begin{equation}
	f^\ast(t,\,x,\,v)=\rho(t,\,x)\delta(v-u(t,\,x)),
	\label{eq:f.monokin}
\end{equation}
where $\rho$, $u$ are the hydrodynamic parameters denoting the vehicle density and the mean speed, respectively, and $\delta$ is the Dirac delta distribution. Such an ansatz is called a \textit{monokinetic} closure, because it corresponds to assuming that locally all vehicles travel at the same speed or, in other words, that the kinetic distribution function has locally zero speed variance.

Plugging~\eqref{eq:f.monokin} into~\eqref{eq:Boltzmann.inhomog} with $\varphi(v)=1$ yields
$$ \partial_t\rho+\partial_x(\rho u)=0, $$
namely the continuity equation stating the conservation of the mass of vehicles. Since in the monokinetic closure the hydrodynamic parameters $\rho$, $u$ are assumed to be independent, another macroscopic equation is needed in order to get a self-consistent hydrodynamic model. This is obtained from~\eqref{eq:Boltzmann.inhomog} with~\eqref{eq:f.monokin} and $\varphi(v)=v$, which gives the momentum balance equation
$$ \partial_t(\rho u)+\partial_x(\rho u^2)=
	\frac{\gamma}{2}\rho^2\left(\frac{\nu+(1-p)\gamma^2}{\nu+\gamma^2}I(u,\,u;\,\rho)
		+\frac{\gamma p}{\nu+\gamma^2}(V_d(u;\,\rho)-u)\right). $$
From these equations, passing to the hydrodynamic temporal and spatial scales $\tau:=\frac{\gamma}{2}t$, $\xi:=\frac{\gamma}{2}x$ and taking the quasi-invariant interaction limit $\gamma,\,\nu\to 0^+$ with $\nu/\gamma\to\kappa>0$, we finally obtain the pressureless second order hydrodynamic traffic model
\begin{equation}
	\begin{cases}
		\partial_\tau\rho+\partial_\xi(\rho u)=0 \\[1mm]
		\partial_\tau(\rho u)+\partial_\xi(\rho u^2)=	\rho^2\left[I(u,\,u;\,\rho)+p^\ast(V_d(u;\,\rho)-u)\right].
	\end{cases}
	\label{eq:second.order}
\end{equation}

Using the function $I$ in~\eqref{eq:I}, we can derive from~\eqref{eq:second.order} two specific models depending on the choice of the control strategy. In the case of the binary variance control, i.e. for $V_d(w;\,\rho)=w$, we have
\begin{equation}
	\begin{cases}
		\partial_\tau\rho+\partial_\xi(\rho u)=0 \\[1mm]
		\partial_\tau(\rho u)+\partial_\xi(\rho u^2)=	\rho^2\left[P(\rho)-\left(P(\rho)+(1-P(\rho))^2\right)u\right].
	\end{cases}
	\label{eq:macro_second_binvar}
\end{equation}
Notice that here the control has actually no effect at all consistently with the monokinetic ansatz, which indeed postulates a locally null speed variance. The right-hand side of the momentum equation turns out to be a relaxation of the mean speed $u$ towards the local equilibrium $V_\infty(\rho(t,\,x))$, where $V_\infty(\rho)$ is the uncontrolled asymptotic speed~\eqref{eq:Vinf}, which actually coincides with the controlled asymptotic speed, cf. Section~\ref{sect:funddiag.pu} and Theorem~\ref{theo:bin.var}. Conversely, in the case of the desired speed control, i.e. for $V_d(w;\,\rho)=v_d(\rho)$, we have
\begin{equation}
	\begin{cases}
		\partial_\tau\rho+\partial_\xi(\rho u)=0 \\[1mm]
		\partial_\tau(\rho u)+\partial_\xi(\rho u^2)=	\rho^2\left[P(\rho)-\left(P(\rho)+(1-P(\rho))^2\right)u
			+p^\ast(v_d(\rho)-u)\right].
	\end{cases}
	\label{eq:macro_second_vdes}
\end{equation}
Now the right-hand side of the momentum equation expresses a relaxation of $u$ towards the local equilibrium $V_\infty^\ast(\rho(t,\,x))$, where $V_\infty^\ast(\rho)$ is the controlled asymptotic speed~\eqref{eq:Vinf.pu}.

\begin{remark}[Hyperbolicity of system~\eqref{eq:second.order}]
It can be easily checked that the two eigenvalues of the Jacobian matrix of the flux of system~\eqref{eq:second.order} are both equal to $u$. In particular, they satisfy the property that no characteristic speed be higher than the flow speed, a fact related to the anisotropy of the interactions between any two vehicles which has become a consistency requirement for all second order hydrodynamic traffic models since the celebrated papers~\cite{aw2000SIAP,daganzo1995TR}.

It is worth pointing out that such a requirement is instead inevitably violated if one attempts to obtain second order models from a local equilibrium closure. In this case typically one defines $(\rho u)(t,\,x):=\int_0^1 vf(t,\,x,\,v)\,dv$ and lets $\varphi(v)=1,\,v$ in~\eqref{eq:Boltzmann.inhomog} to get
\begin{equation*}
	\begin{cases}
		\partial_t\rho+\partial_x(\rho u)=0 \\[1mm]
		\partial_t(\rho u)+\partial_x(\rho u^2+\cP)=
			\displaystyle\frac{1}{2}\E_\Theta\left[\ave{\int_0^1\int_0^1(v'-v)f^\ast(t,\,x,\,v)f^\ast(t,\,x,\,w)\,dv\,dw}\right],
	\end{cases}
\end{equation*}
where $\cP:=\int_0^1(v-u)^2f^\ast(t,\,x,\,v)\,dv$ is the \textit{traffic pressure}. Then in order to close the momentum equation one forces the ansatz $f^\ast(t,\,x,\,v)=\rho(t,\,x)\tilde{f}_\infty^\ast(v)$ and simultaneously replaces $u$ with $V_\infty^\ast(\rho)$ in $\cP$. As a result, since the asymptotic distribution function $\tilde{f}_\infty^\ast$ is parametrised only by $\rho$, the traffic pressure becomes a function of $\rho$ alone, i.e. $\cP=\cP(\rho)$, and the hydrodynamic model finally reads
\begin{equation*}
	\begin{cases}
		\partial_t\rho+\partial_x(\rho u)=0 \\[1mm]
		\partial_t(\rho u)+\partial_x(\rho u^2+\cP(\rho))=
			\displaystyle\frac{\rho^2}{2}\E_\Theta\left[\ave{\int_0^1\int_0^1(v'-v)\tilde{f}_\infty^\ast(v)\tilde{f}_\infty^\ast(w)\,dv\,dw}\right].
	\end{cases}
\end{equation*}
Now the eigenvalues of the Jacobian matrix of the flux are $u\pm\sqrt{\cP'(\rho)}$, thus the system is hyperbolic if and only if $\cP'(\rho)\geq 0$. Notice that if $\cP'(\rho)=0$, i.e. if $\cP$ is independent of $\rho$, then $f^\ast$ ought to be the monokinetic distribution function, which however is incompatible with the closure with $\tilde{f}_\infty^\ast$. Then it has to be $\cP'(\rho)>0$, which nevertheless implies $u+\sqrt{\cP'(\rho)}>u$, thereby violating the consistency requirement previously recalled.

The splitting procedure of Section~\ref{sect:local.equil} suggests why the local equilibrium closure may not be suited to the derivation of second order macroscopic traffic models. In fact, the procedure shows that such a closure is justified if there is a clear separation between the time scale of the microscopic interactions~\eqref{eq:splitting.int.ftilde} and the space-time scale of a \textit{conservative} transport of the moments of the local ``Maxwellian'' distribution~\eqref{eq:splitting.transp}, which provide the hydrodynamic quantities of interest for the macroscopic model. Consistently, such moments need to be \textit{conserved} by the microscopic interactions (in classical terms, they need to be \textit{``collision'' invariants}). In traffic models, however, the mean speed is \textit{not} conserved by the microscopic interactions, so neither is the momentum. The only conserved hydrodynamic quantity is the traffic density, which makes the continuity equation~\eqref{eq:first.order} straightforwardly closed in terms of $\rho$ alone. Consequently, it is neither necessary nor possible to join to it a second equation for the momentum.
\end{remark}

\section{Numerical tests}
\label{sect:num}
In this section we present several numerical tests, which highlight the main features of the proposed control strategies for a speed-dependent risk mitigation at both the kinetic and the hydrodynamic levels. In particular, we give some insights into the Boltzmann-type controlled kinetic model and the corresponding hydrodynamic approximations for various choices of the penetration rate $p$ of driver-assist vehicles in the traffic stream.

We adopt a Monte Carlo approach for the numerical solution of the Boltzmann-type equation in the quasi-invariant interaction limit, see~\cite{pareschi2001ESAIMP,pareschi2013BOOK} for an introduction. We use instead finite volume WENO schemes to tackle the macroscopic conservation laws with non-convex fluxes, see~\cite{shu2009SIREV} and references therein. In all tests we consider the function $P(\rho)$ given in~\eqref{eq:P} with $\mu=2$ and the recommended speed $v_d(\rho)=1-\rho$. Other relevant parameters will be specified from case to case.

\subsection{Inhomogeneous kinetic model}
We begin by rewriting the inhomogeneous Boltzmann-type model~\eqref{eq:complete_BOL_weak} in strong form, which is more suited to numerical purposes. This reads
\begin{equation}
	\partial_\tau\tilde{f}(\tau,\,\xi,\,v)+v\partial_\xi\tilde{f}(\tau,\,\xi,\,v)
		=\frac{1}{\epsilon}Q(\tilde{f},\,\tilde{f})(\tau,\,\xi,\,v),
	\label{eq:Boltzmann_strong}
\end{equation}
where $Q$ at the right-hand side is the binary interaction operator (in classical terms, the \textit{``collision'' operator}) defined as
$$	Q(\tilde{f},\,\tilde{f})(\tau,\,\xi,\,v)=
		\frac{1}{2}\E_\Theta\left[\ave{\int_0^1\left(\frac{1}{\pr{J}}\tilde{f}(\tau,\,\xi,\,\pr{v})
			\tilde{f}(\tau,\,\xi,\,\pr{w})-\tilde{f}(\tau,\,\xi,\,v)\tilde{f}(\tau,\,\xi,\,w)\right)dw}\right], $$
where $(\pr{v},\,\pr{w})$ are the pre-interaction speeds which generate the post-interaction speeds $(v,\,w)$ according to the interaction rules~\eqref{eq:bin.u-feedback} and $\pr{J}$ is the Jacobian of the transformation from $(\pr{v},\,\pr{w})$ to $(v,\,w)$.

We now briefly account for the numerical scheme by which we solve~\eqref{eq:Boltzmann_strong}. After fixing $\epsilon=10^{-3}$ and introducing a time discretisation $\tau^n:=n\Delta{\tau}$, with $\Delta{\tau}>0$ and $n\in\mathbb{N}$, we adopt a splitting approach.
\begin{description}
\item[Interaction step.] Starting from $\tau^n$, we first integrate the interactions in a single time step and in all space positions $\xi$:
\begin{equation}
	\begin{cases}
		\partial_\tau F(\tau,\,\xi,\,v)=\dfrac{1}{\epsilon}Q(F,\,F)(\tau,\,\xi,\,v), \quad \tau\in (\tau^n,\,\tau^{n+1/2}] \\[2mm]
		F(\tau^n,\,\xi,\,v)=\tilde{f}(\tau^n,\,\xi,\,v)
	\end{cases}
	\label{eq:collision}
\end{equation}
using the Nanbu algorithm for Maxwellian molecules, see e.g.~\cite{bobylev2000PRE}. In particular, fixing $\xi=\xi_i$ and setting $F_i(\tau,\,v):=F(\tau,\,\xi_i,\,v)$, $\rho_i(\tau):=\int_0^1F_i(\tau,\,v)\,dv$, we observe that~\eqref{eq:collision} can be rewritten as
\begin{equation}
	\partial_\tau F_i(\tau,\,v)=\frac{1}{\epsilon}Q^+(F_i,\,F_i)(\tau,\,v)-\frac{\rho_i(\tau)}{2\epsilon}F_i(\tau,\,v),
	\label{eq:Boltzmann.strong.gain}
\end{equation}
where $Q^+$ denotes the gain part of the binary interaction operator $Q$, i.e.
$$ Q^+(F_i,\,F_i)(\tau,\,v):=\frac{1}{2}\E_\Theta\left[\ave{\int_0^1\frac{1}{\pr{J}}F_i(\tau,\,\pr{v})F_i(\tau,\,\pr{w})\,dw}\right]. $$
Discretising~\eqref{eq:Boltzmann.strong.gain} in time through the forward Euler scheme yields, whenever $\rho_i\neq 0$,
\begin{equation}
	F^{n+1/2}_i=\left(1-\frac{\rho_i\Delta{\tau}}{2\epsilon}\right)F_i^n
		+\frac{\rho_i\Delta{\tau}}{2\epsilon}\cdot\frac{2Q^+(F_i^n,\,F_i^n)}{\rho_i}
	\label{eq:F.Euler}
\end{equation}
whence, from $\int_0^1 F_i^n(v)\,dv=\frac{2}{\rho_i}\int_0^1Q^+(F_i^n,\,F_i^n)(v)\,dv=\rho_i$, we see that also $F_i^{n+1/2}$ has mass $\rho_i$, i.e. the numerical scheme preserves the vehicle density in a single interaction step, provided $\frac{\rho_i\Delta{\tau}}{2\epsilon}\leq 1$. Under such a restriction,~\eqref{eq:F.Euler} has the following interpretation: in a single binary interaction a vehicle with speed $v$ in the position $\xi_i$ either does not change speed with probability $1-\frac{\rho_i\Delta{\tau}}{2\epsilon}$ or changes it according to the rules encoded in $Q^+$ with probability $\frac{\rho_i\Delta{\tau}}{2\epsilon}$.

\item[Transport step.] Subsequently we take the output of the interactions as the input of a pure transport step towards the next time $\tau^{n+1}$:
\begin{equation*}
	\begin{cases}
		\partial_\tau\tilde{f}(\tau,\,\xi,\,v)+v\partial_\xi\tilde{f}(\tau,\,\xi,\,v)=0, \quad \tau\in (\tau^{n+1/2},\,\tau^{n+1}] \\[1mm]
		\tilde{f}(\tau^{n+1/2},\,\xi,\,v)=F(\tau^{n+1/2},\,\xi,\,v).
	\end{cases}
\end{equation*}
\end{description}

\begin{figure}[!t]
\centering
\subfigure[$\tau=0$]{\includegraphics[scale=0.32]{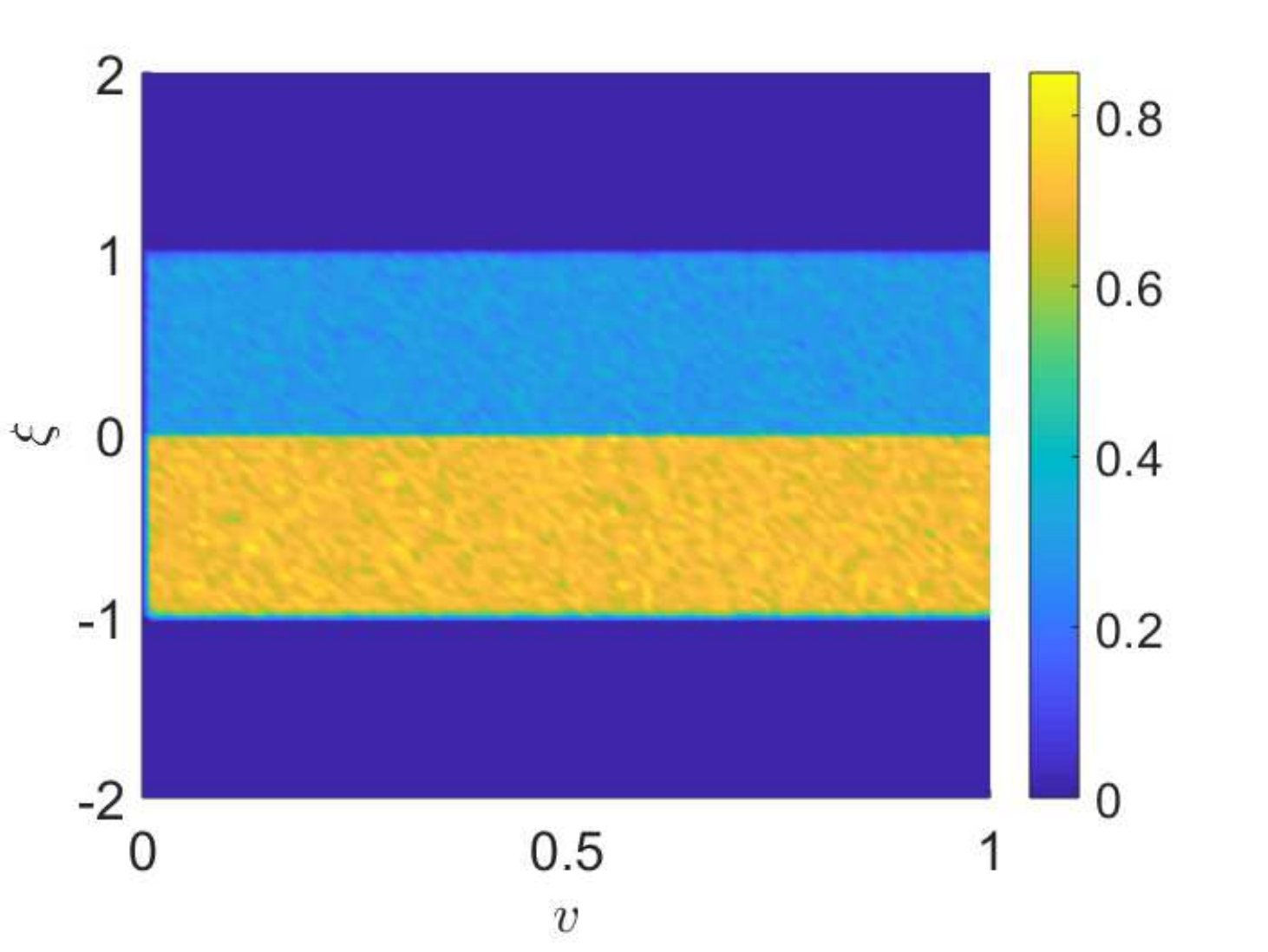}}
\subfigure[$\tau=0.25$]{\includegraphics[scale=0.32]{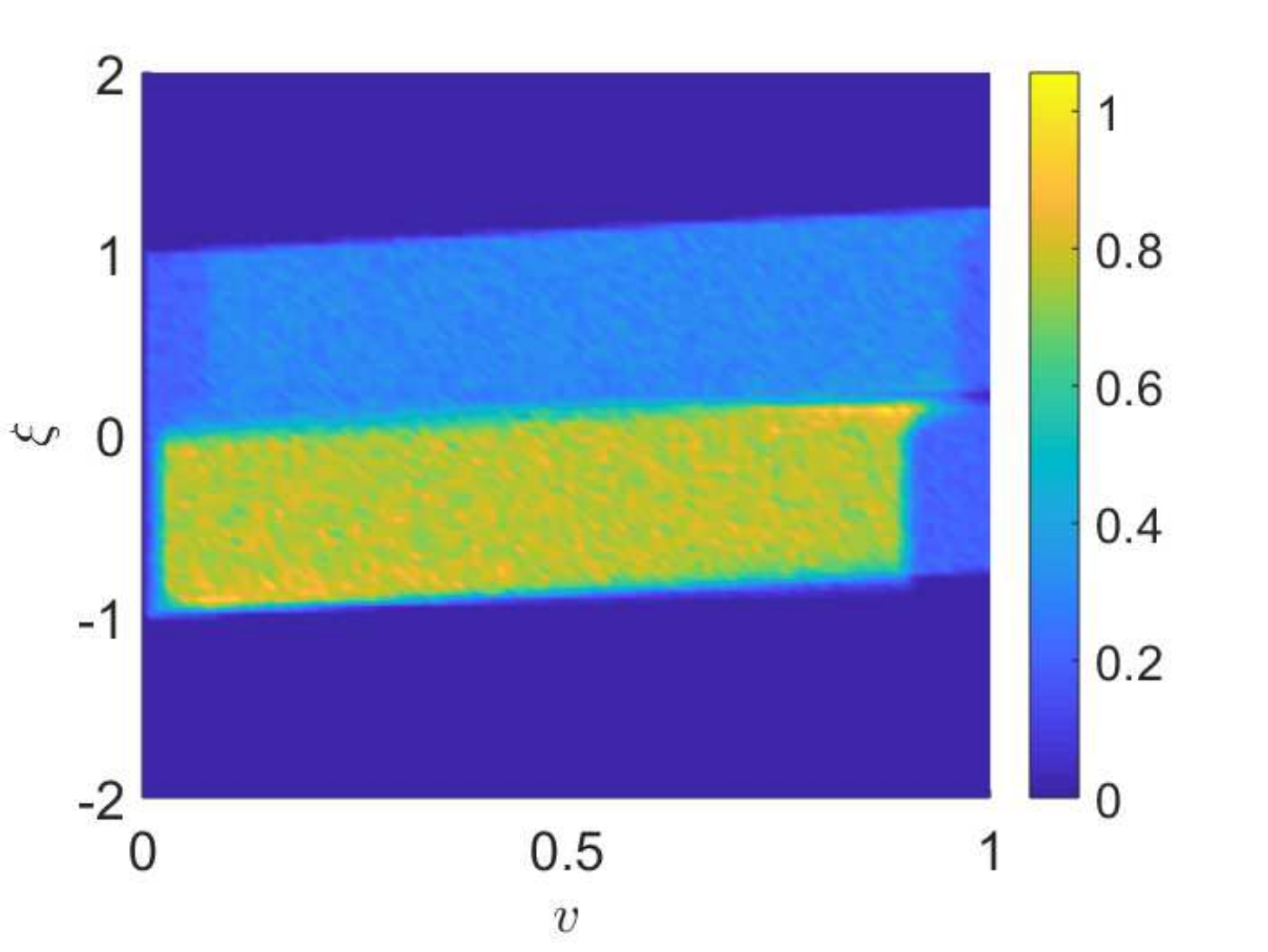}}
\subfigure[$\tau=0.5$]{\includegraphics[scale=0.32]{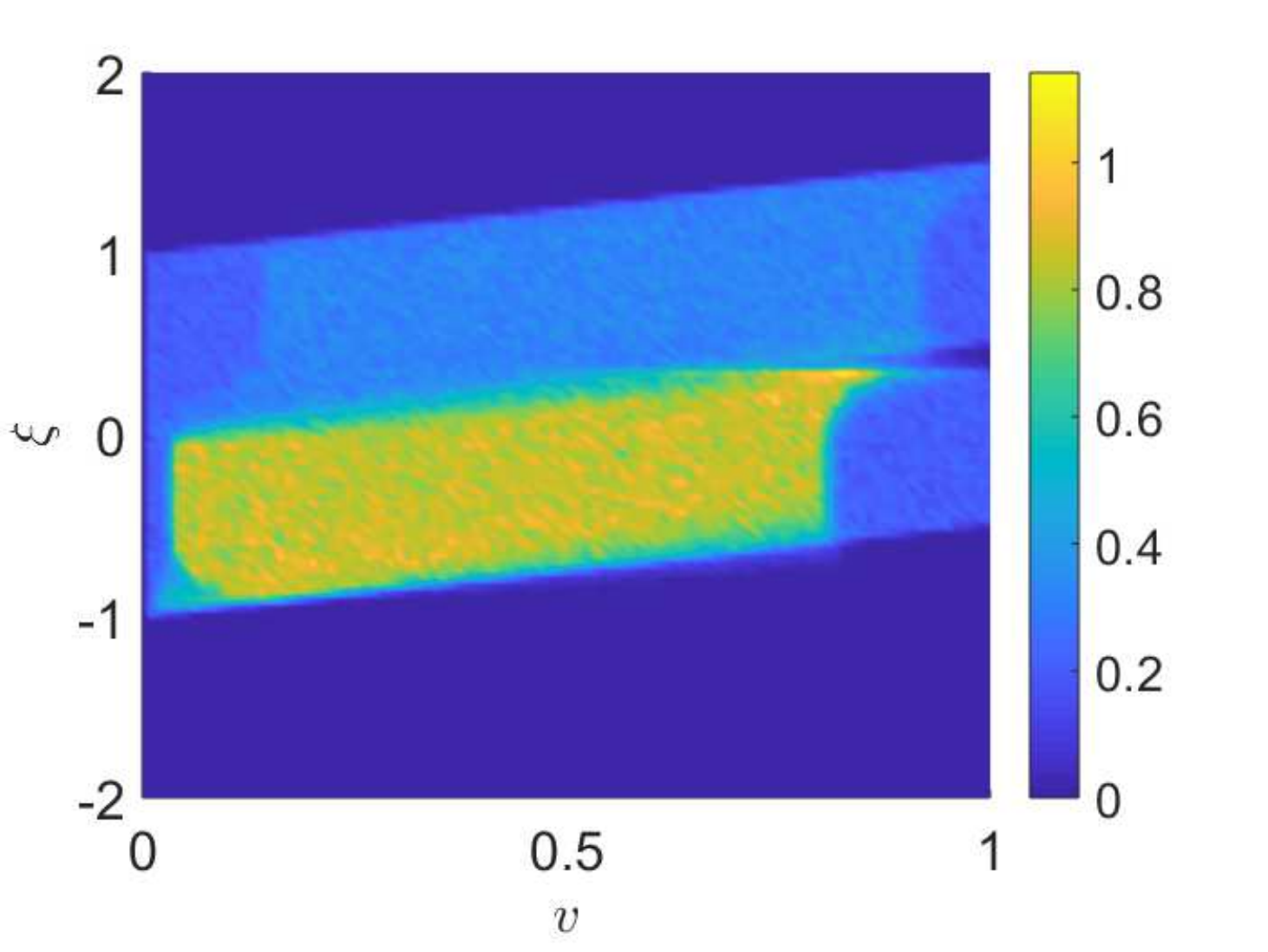}}
\caption{(a) Initial distribution~\eqref{eq:initial_distribution} with $\rho_L=0.8$, $\rho_R=0.2$. (b)-(c) Evolution of the kinetic distribution $\tilde{f}(\tau,\,\xi,\,v)$ for $0<\tau\leq 0.5$ in the case of zero penetration rate $p$ (namely, no vehicle of the system is influenced by the action of the control).}
\label{fig:initial_uncon}
\end{figure}

\begin{figure}[!t]
\centering
\subfigure[$p=\frac{1}{2}$, $\tau=0.25$]{\includegraphics[scale=0.4]{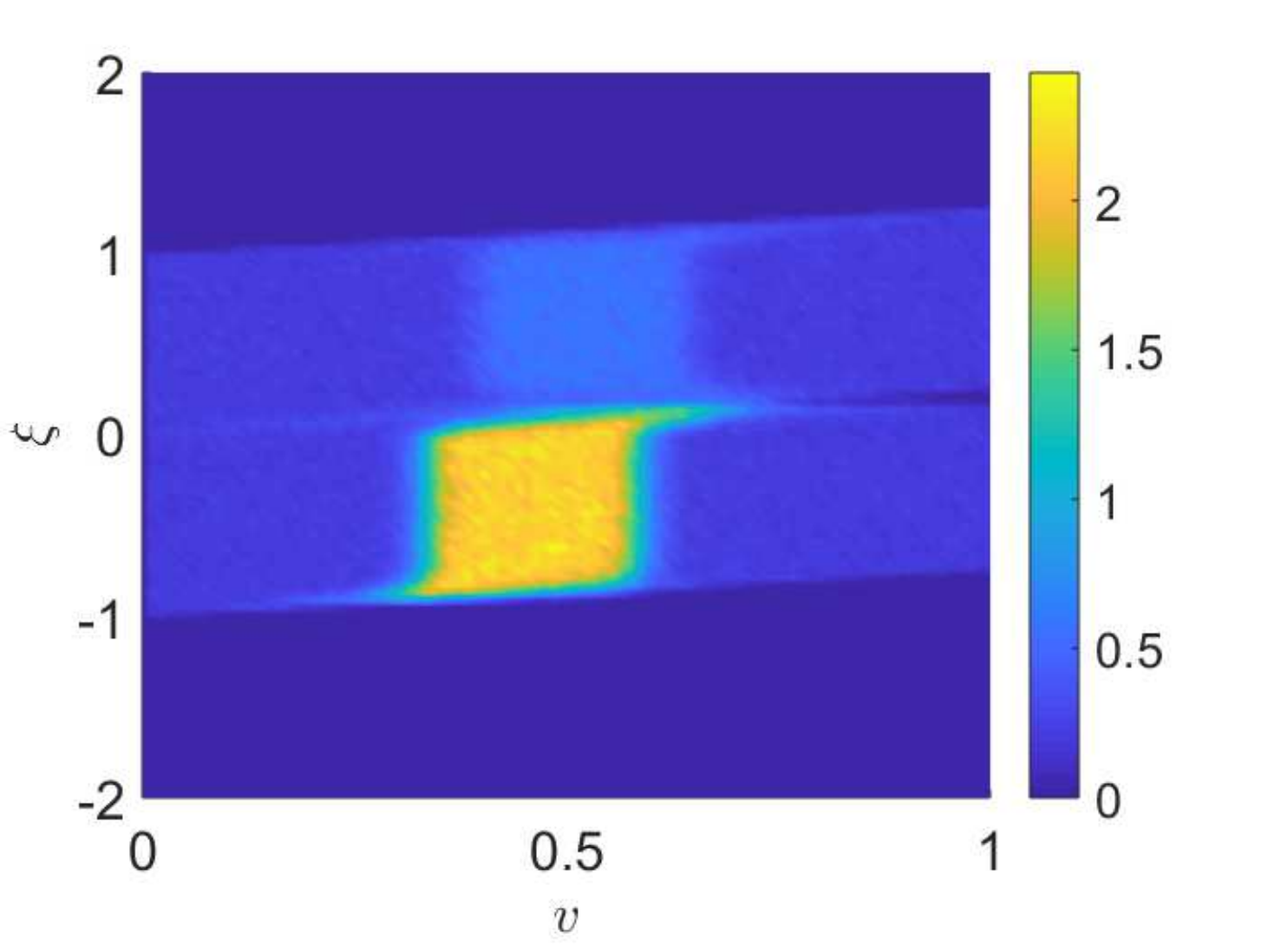}}
\subfigure[$p=\frac{1}{2}$, $\tau=0.5$]{\includegraphics[scale=0.4]{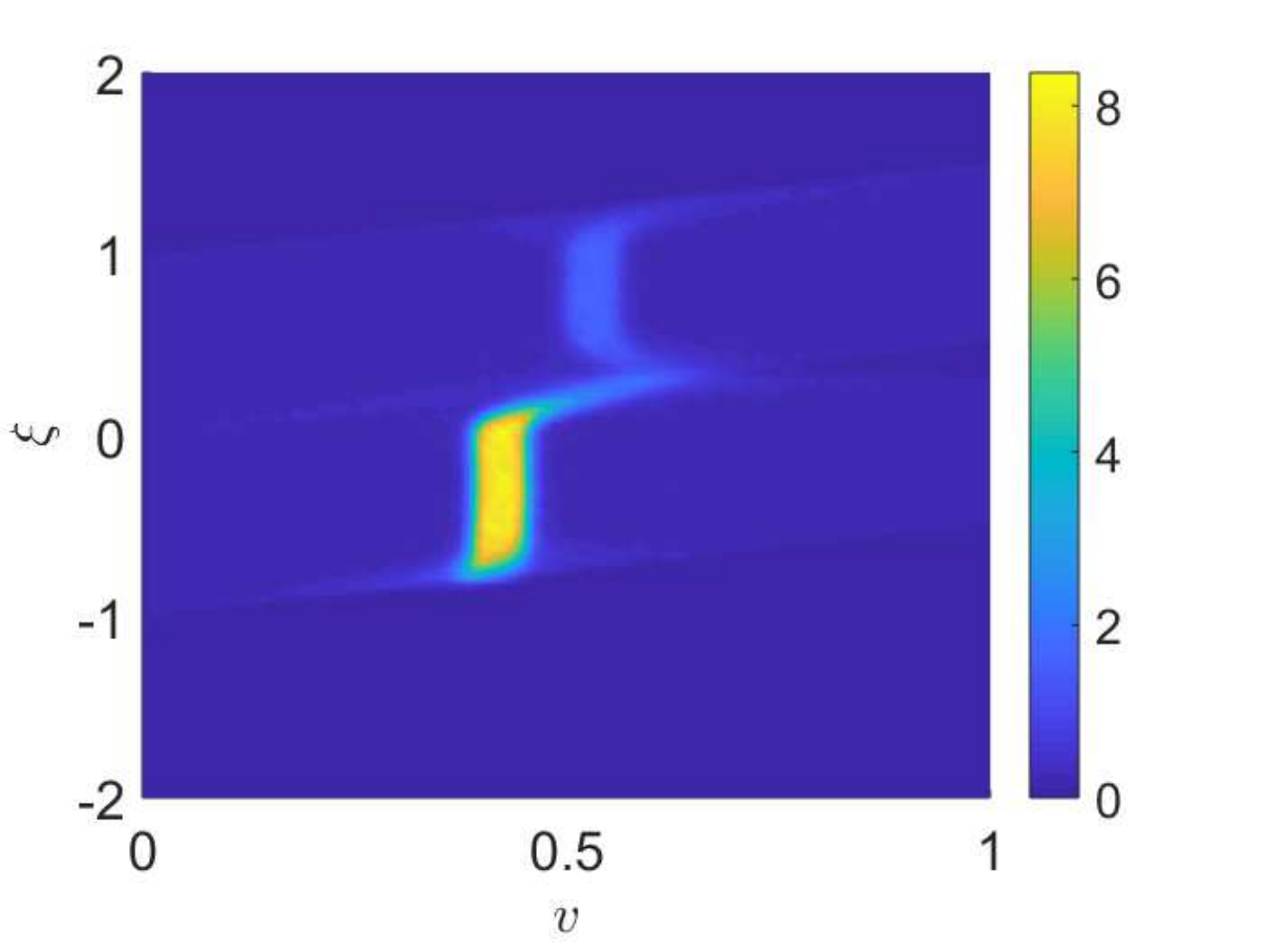}} \\
\subfigure[$p=1$, $\tau=0.25$]{\includegraphics[scale=0.4]{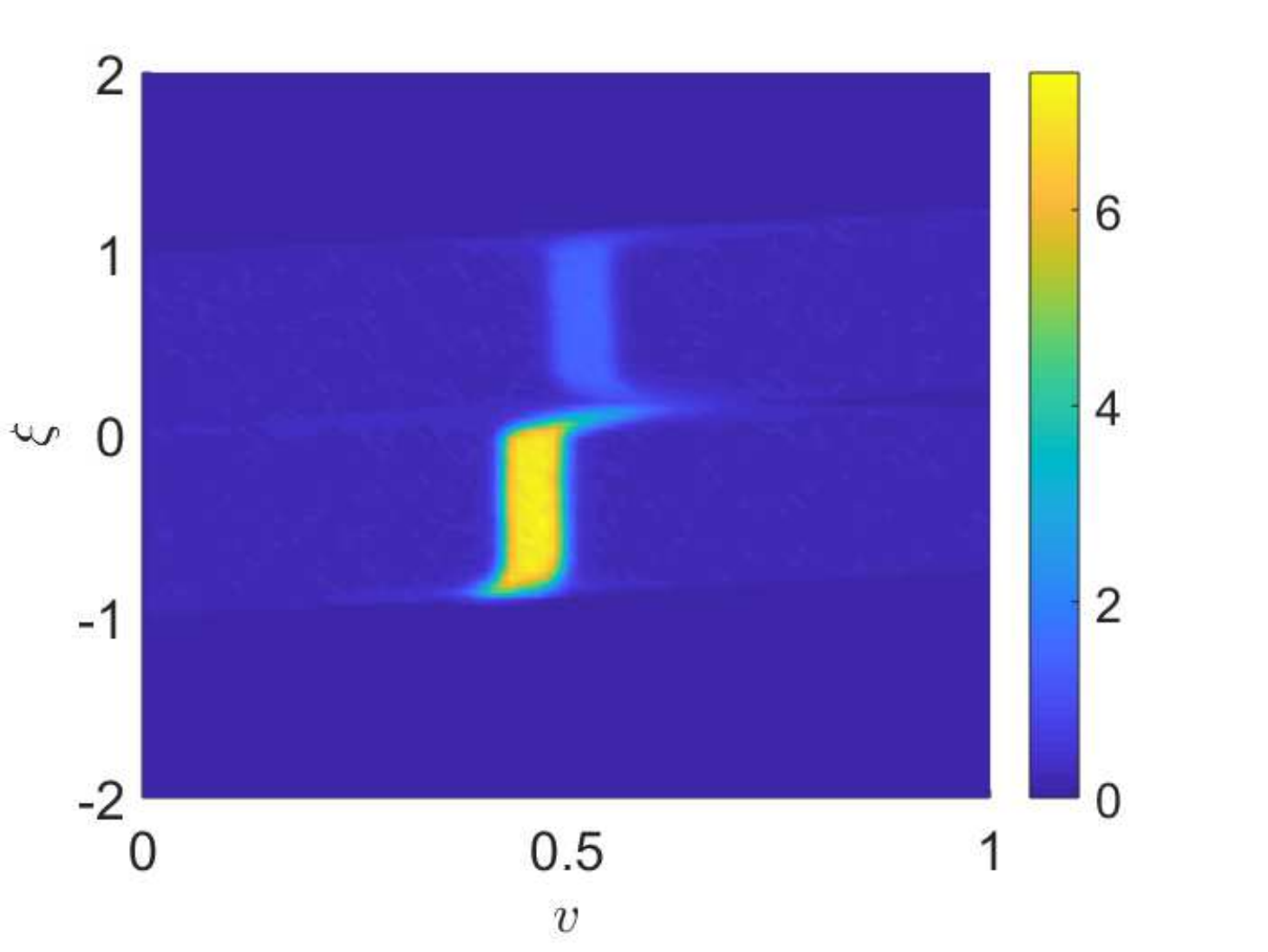}}
\subfigure[$p=1$, $\tau=0.5$]{\includegraphics[scale=0.4]{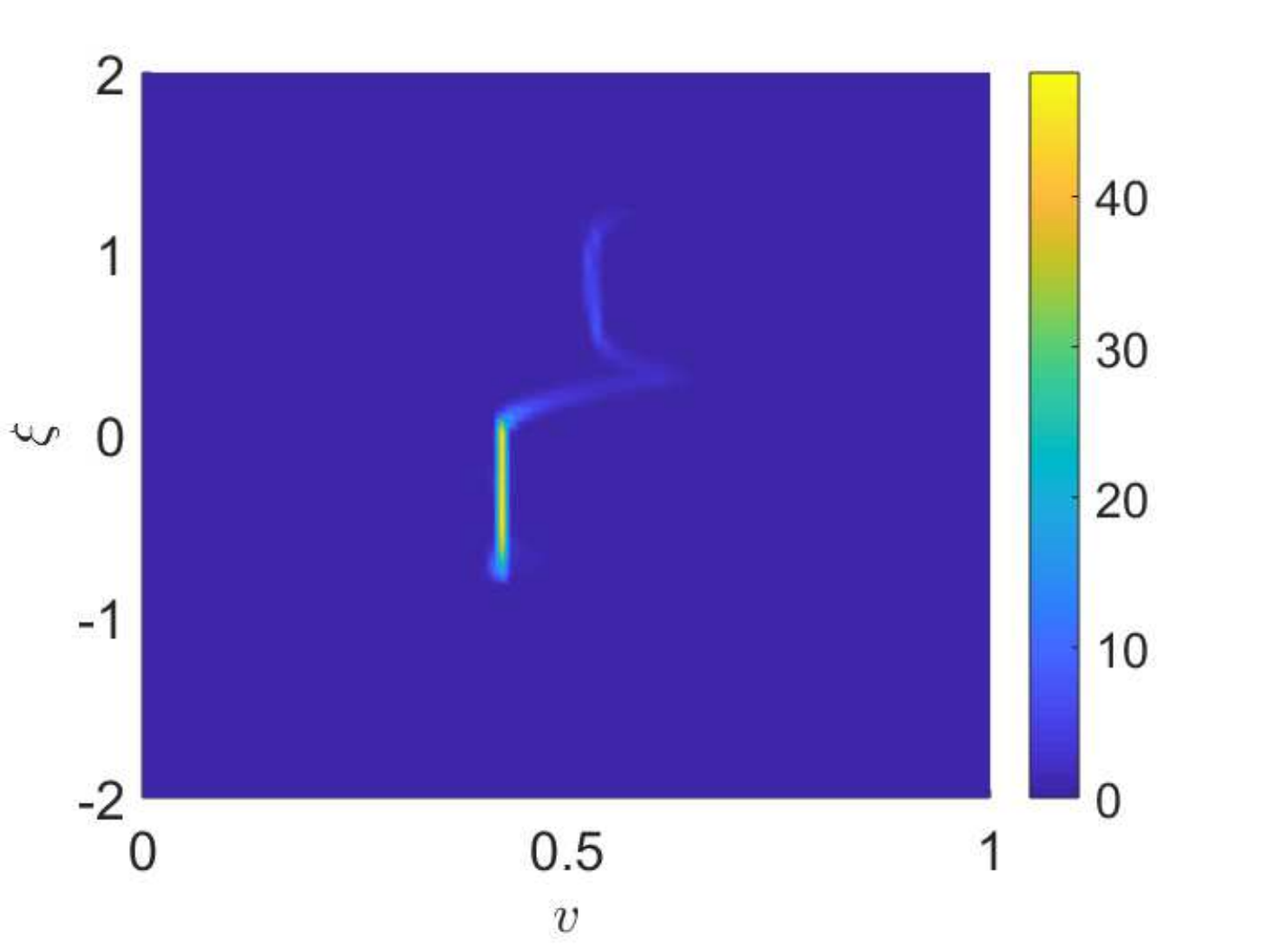}}
\caption{\textbf{Binary variance control}. Evolution of the kinetic distribution $\tilde{f}(\tau,\,\xi,\,v)$ starting from the initial condition depicted in Figure~\ref{fig:initial_uncon}a and with $\kappa=5\cdot 10^{-2}$, $\lambda=10^{-3}$. First row (a, b): penetration rate $p=\frac{1}{2}$. Second row (c, d): penetration rate $p=1$.}
\label{fig:kinetic_binvar}
\end{figure}

On the whole, we consider~\eqref{eq:Boltzmann_strong} in the bounded domain $(\xi,\,v)\in [-2,\,2]\times [0,\,1]$ with periodic boundary conditions on the space variable $\xi$. As initial condition we prescribe the following distribution:
\begin{equation}
	\tilde{f}_0(\xi,\,v)= 
		\begin{cases}
			\rho_L & \text{for } (\xi,\,v)\in [-1,\,0)\times [0,\,1] \\
			\rho_R & \text{for } (\xi,\,v)\in [0,\,1]\times [0,\,1] \\
			0 & \text{otherwise},
		\end{cases}
	\label{eq:initial_distribution}
\end{equation}
which is piecewise constant in $\xi$ and uniform in $v$ for all fixed $\xi\in [-1,\,1]$. The constants $\rho_L,\,\rho_R>0$ represent the vehicle density to the left and to the right, respectively, of the position $\xi=0$ and are chosen in such a way that the total mass of vehicles is normalised to $1$, i.e.:
$$ \rho_L+\rho_R=\int_0^1\int_{-2}^2\tilde{f}_0(\xi,\,v)\,d\xi\,dv=1, $$
see Figure~\ref{fig:initial_uncon}a for a particular case.

Figures~\ref{fig:initial_uncon}b, c show the evolution of the distribution function $\tilde{f}$ at two successive times in the case $p=0$, i.e. for a null penetration rate meaning that no vehicle in the traffic stream is equipped with driver-assist technologies. We clearly observe that the vehicle speeds are highly dispersed at the final time, hence that the associated driving risk does not tend to decrease spontaneously.

\begin{figure}[!t]
\centering
\subfigure[$p=\frac{1}{2}$, $\tau=0.25$]{\includegraphics[scale=0.4]{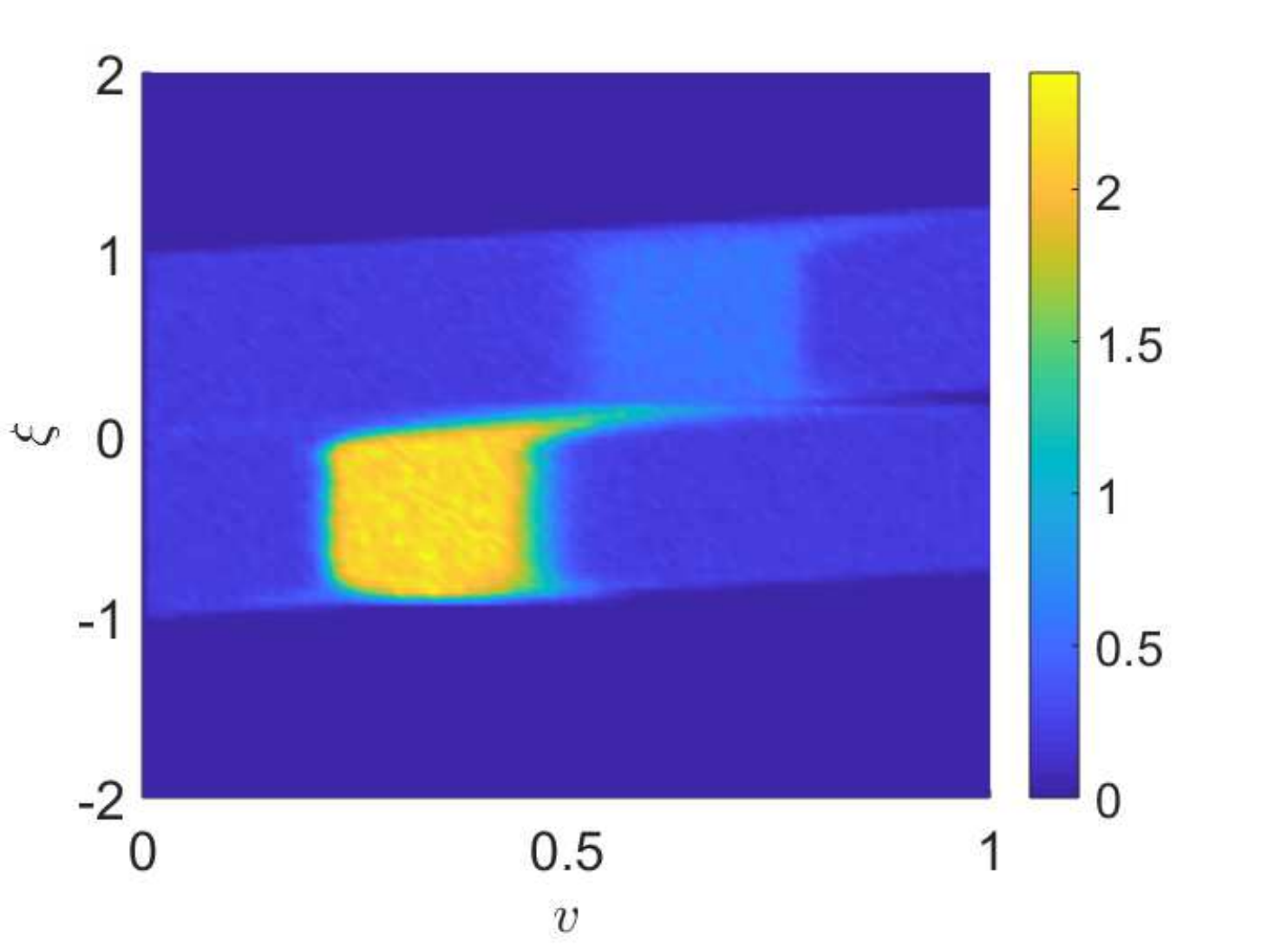}}
\subfigure[$p=\frac{1}{2}$, $\tau=0.5$]{\includegraphics[scale=0.4]{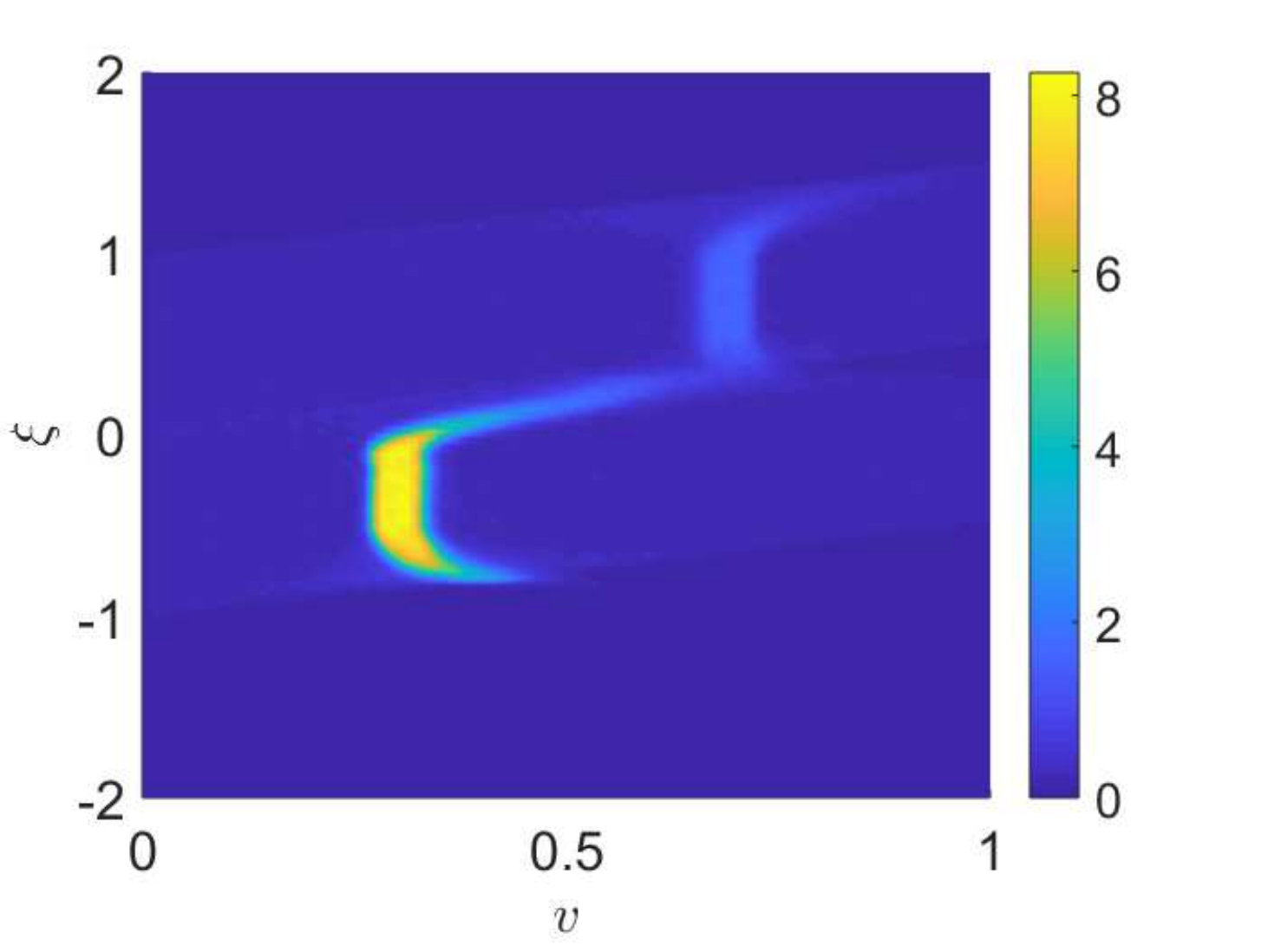}} \\
\subfigure[$p=1$, $\tau=0.25$]{\includegraphics[scale=0.4]{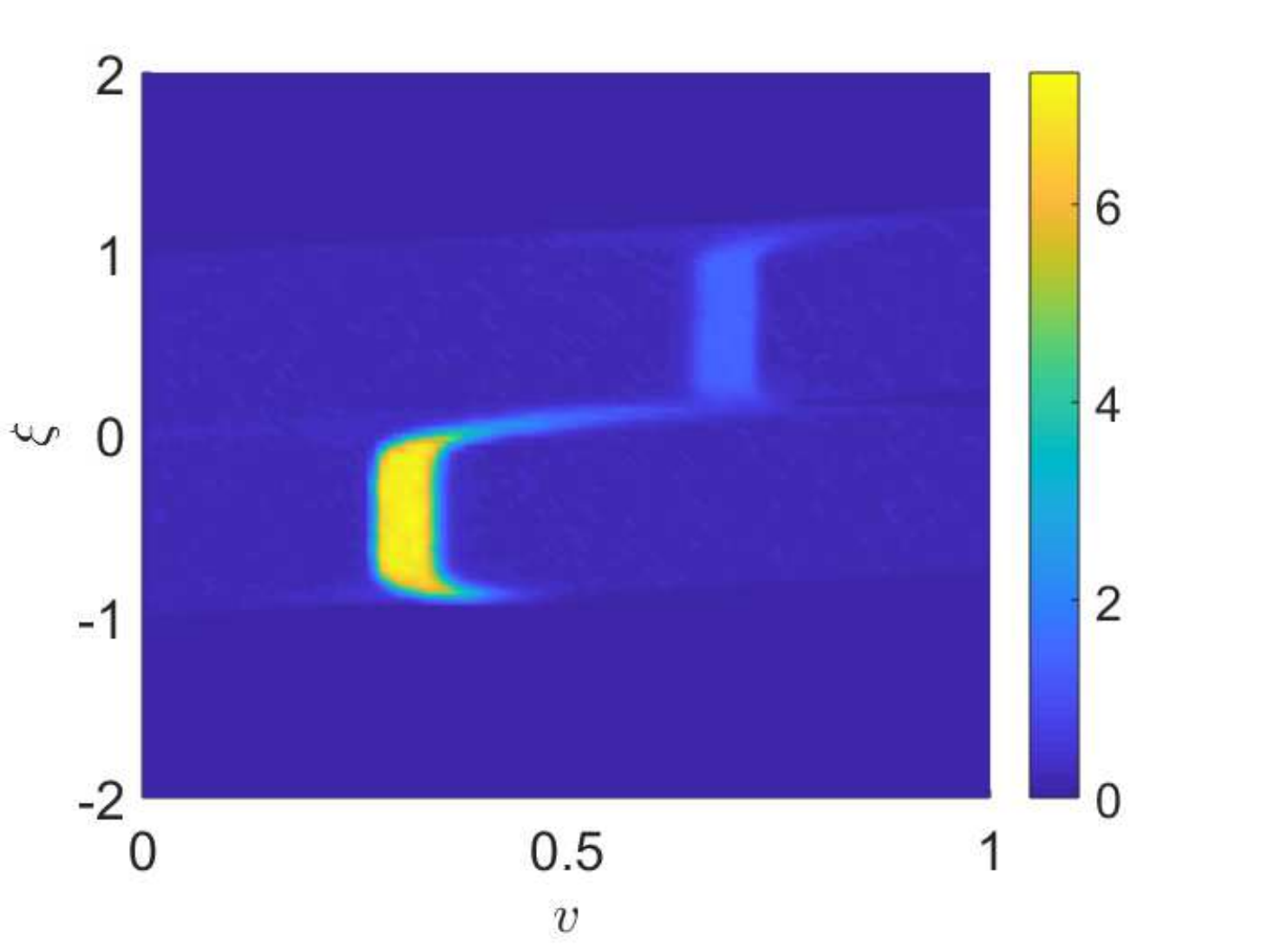}}
\subfigure[$p=1$, $\tau=0.5$]{\includegraphics[scale=0.4]{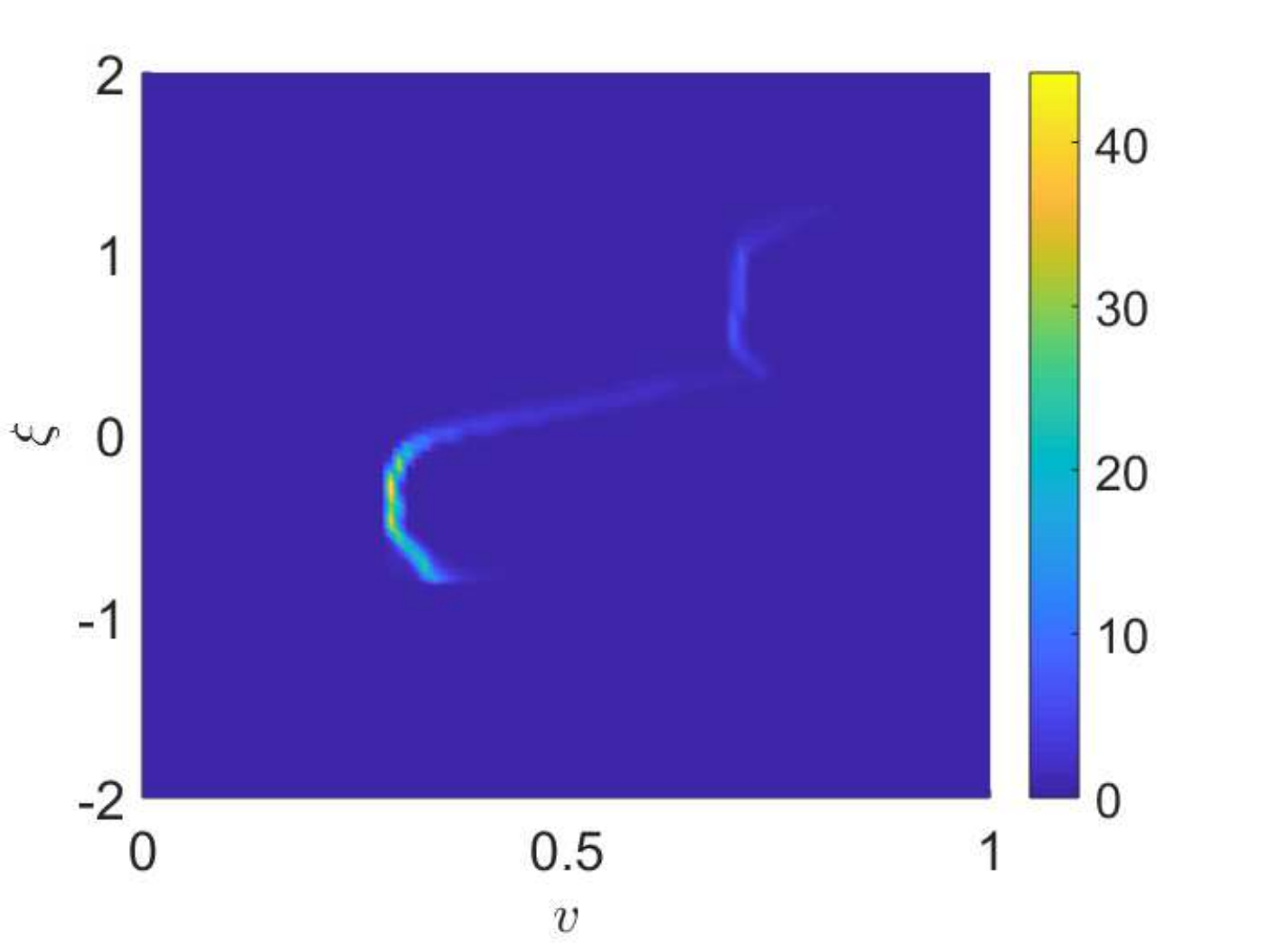}}
\caption{\textbf{Desired speed control}. Evolution of the kinetic distribution $\tilde{f}(\tau,\,\xi,\,v)$ starting from the initial condition depicted in Figure~\ref{fig:initial_uncon}a, with $\kappa=5\cdot 10^{-2}$, $\lambda=10^{-3}$ and $v_d(\rho)=1-\rho$. First row (a, b): penetration rate $p=\frac{1}{2}$. Second row (c, d): penetration rate $p=1$.}
\label{fig:kinetic_vd}
\end{figure}

Figures~\ref{fig:kinetic_binvar},~\ref{fig:kinetic_vd} show instead the evolution of $\tilde{f}$ under the action of a driver-assist control which seeks to minimise either the binary variance of the speeds in each pairwise interaction or the difference with the congestion-dependent recommended speed $v_d(\rho)=1-\rho$, respectively. In both cases we consider the scenarios with either $50\%$ or $100\%$ of vehicles equipped with driver-assist technologies in the traffic stream, corresponding to penetration rates $p=\frac{1}{2}$ and $p=1$, respectively, for a fixed control penalisation $\kappa=5\cdot 10^{-2}$. The numerical results show that both control strategies manage to reduce the global speed variance. Indeed at the final time the distribution function $\tilde{f}$ clearly approaches Dirac delta-like distributions in the $v$-variable, particularly for $p=1$.

\begin{figure}[!t]
\centering
\includegraphics[scale=0.4]{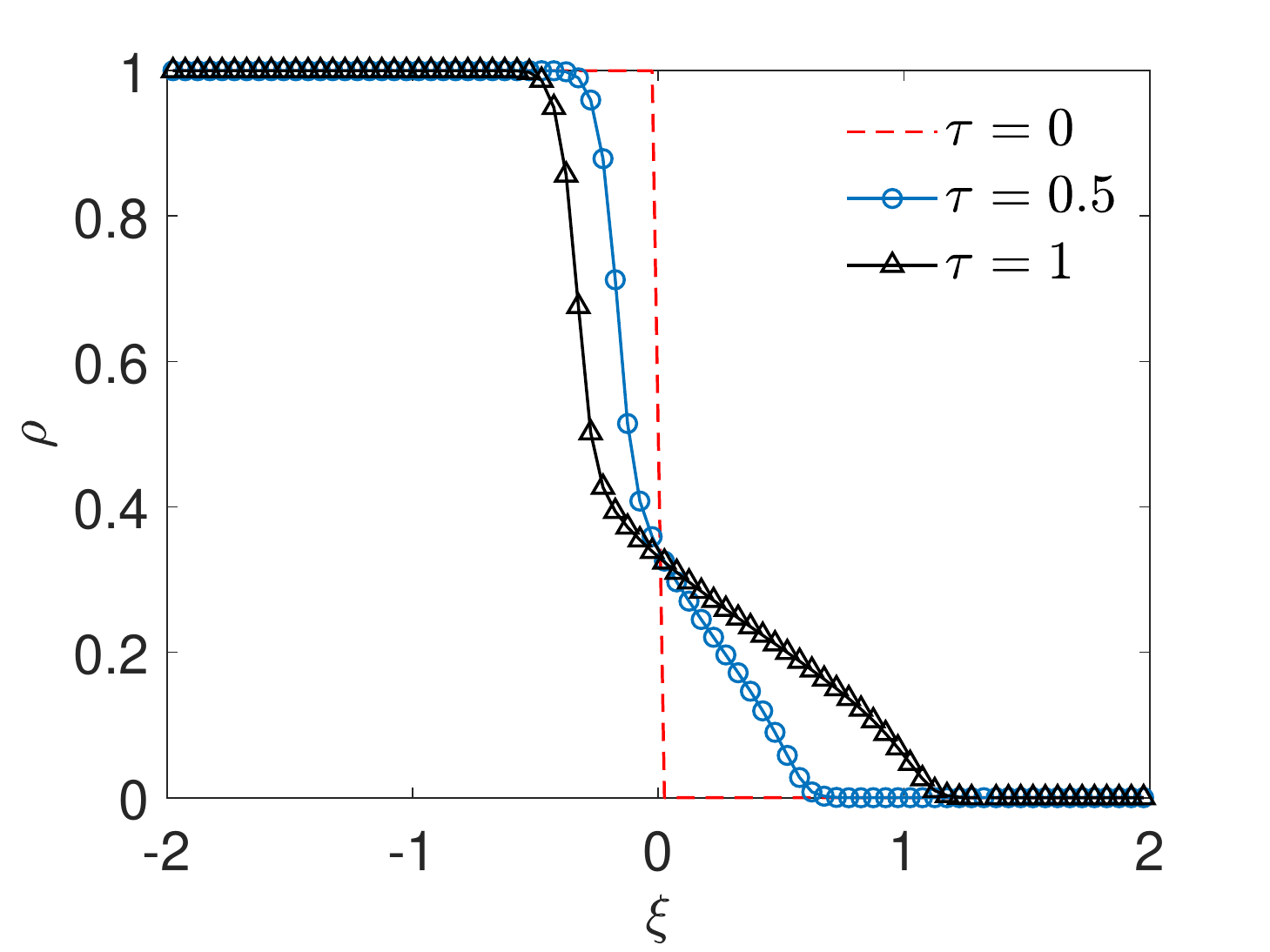}
\caption{Solution to~\eqref{eq:macro_binvar} at times $\tau=0$ (initial condition) and $\tau=0.5,\,1$.}
\label{fig:hydro_binvar}
\end{figure}

\subsection{First order hydrodynamic model}
We already observed that the flux function $\F(\rho):=\rho V_\infty^\ast(\rho)$ in the conservation law~\eqref{eq:first.order} is in general neither strictly convex nor strictly concave, cf. in particular~\eqref{eq:macro_binvar},~\eqref{eq:macro_vdes}, because $\F''(\rho)$ may change sign for $\rho\in (0,\,1)$. Hence the solution to a Riemann problem is expected to be a combination of shock and rarefaction waves, sometimes called a \textit{compound wave}. The unique entropy solution can be built taking advantage of a convex-hull reconstruction, see e.g.~\cite{leveque2004BOOK} for an introduction.

At the numerical level, it is quite challenging to prove the convergence of high-order schemes to the entropy solution despite the good numerical performances of such schemes in many regimes. In the non-convex case counterexamples exist for Godunov and Lax-Friedrichs numerical fluxes under the usual CFL condition. In the following we use a WENO finite volume scheme~\cite{shu2009SIREV} with Lax-Friedrichs numerical flux. In order to enforce the convergence towards the correct entropy solution we adopt the first order monotone modification proposed in~\cite{qiu2008SISC}. The resulting method is high-order accurate in smooth regions, whereas near a non-convex discontinuity region it uses a discontinuity indicator which is $O(1)$.

To discretise equations~\eqref{eq:macro_binvar},~\eqref{eq:macro_vdes} we introduce a uniform mesh in the space domain $[-2,\,2]$ made of $N_{\xi}=80$ grid points, which implies a mesh parameter $\Delta{\xi}=5\cdot 10^{-2}$. Furthermore, we choose the time step $\Delta{\tau}>0$ according to the CFL condition:
$$ \max_{\xi\in [-2,\,2]}\abs{\F'(\rho)}\cdot\frac{\Delta{\tau}}{\Delta{\xi}}=\text{CFL}\leq 1 $$
fixing $\text{CFL}=0.5$.

In all the tests of this section we prescribe the following initial condition for the vehicle density:
$$	\rho_0(\xi)= 
		\begin{cases}
			1 & \text{for } \xi\leq 0 \\
			0 & \text{for } \xi>0,
		\end{cases}
$$
which reproduces the classic example of a queue upstream a traffic light placed in $\xi=0$ which turns to green at time $\tau=0$. Figure~\ref{fig:hydro_binvar} shows such initial condition (dashed red line) and the evolution of the vehicle density $\rho$ at two successive times ruled by~\eqref{eq:macro_binvar}. Notice that this problem is representative at once of three different scenarios:
\begin{enumerate*}[label=(\textit{\roman*})]
\item the case of completely uncontrolled dynamics;
\item the case of binary variance control with any penetration rate, which, as already observed, has no visible impact on the purely macroscopic stream of vehicles;
\item the case of desired speed control with zero penetration rate, for then equation~\eqref{eq:macro_vdes} reduces to~\eqref{eq:macro_binvar}.
\end{enumerate*}
Since we fixed $\mu=2$ in~\eqref{eq:P}, the flux of~\eqref{eq:macro_binvar} is non-concave, hence the solution is a combination of a backward propagating shock and a rarefaction wave modelling vehicles which progressively depart at the green light.

\begin{figure}[!t]
\centering
\subfigure[$\kappa=1$, $p=\frac{1}{4}$]{\includegraphics[scale=0.32]{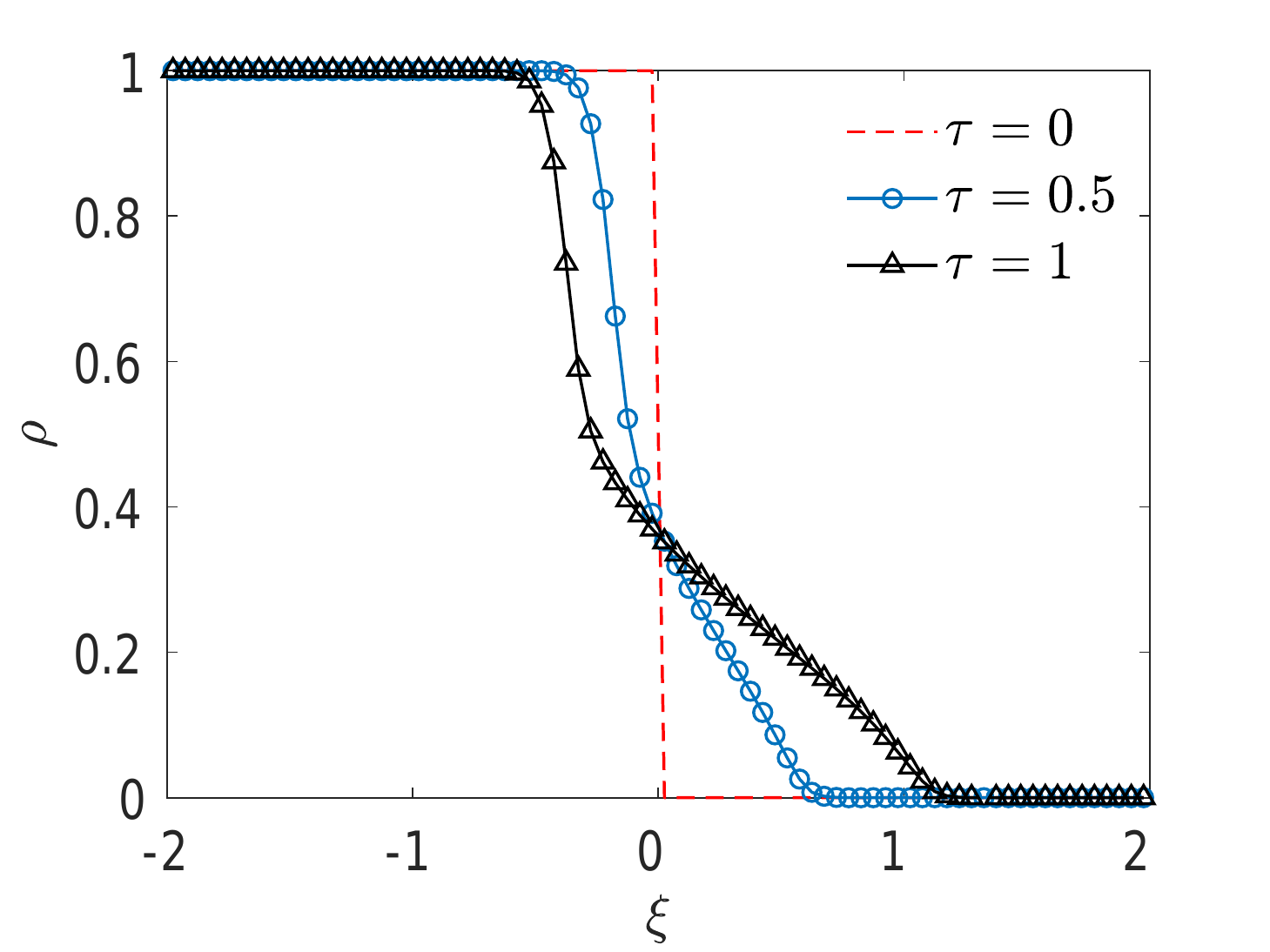}}
\subfigure[$\kappa=1$, $p=\frac{1}{2}$]{\includegraphics[scale=0.32]{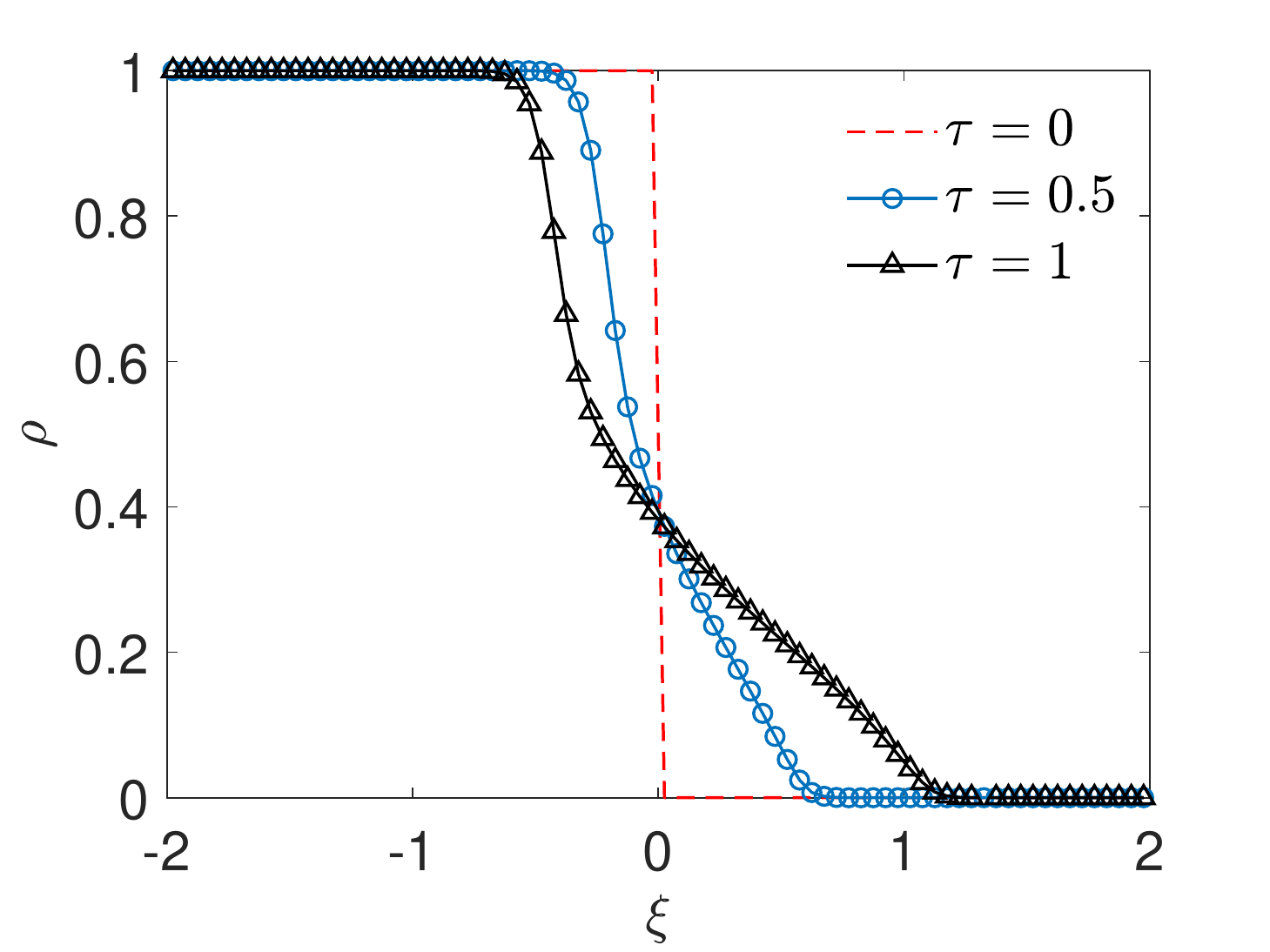}}
\subfigure[$\kappa=1$, $p=1$]{\includegraphics[scale=0.32]{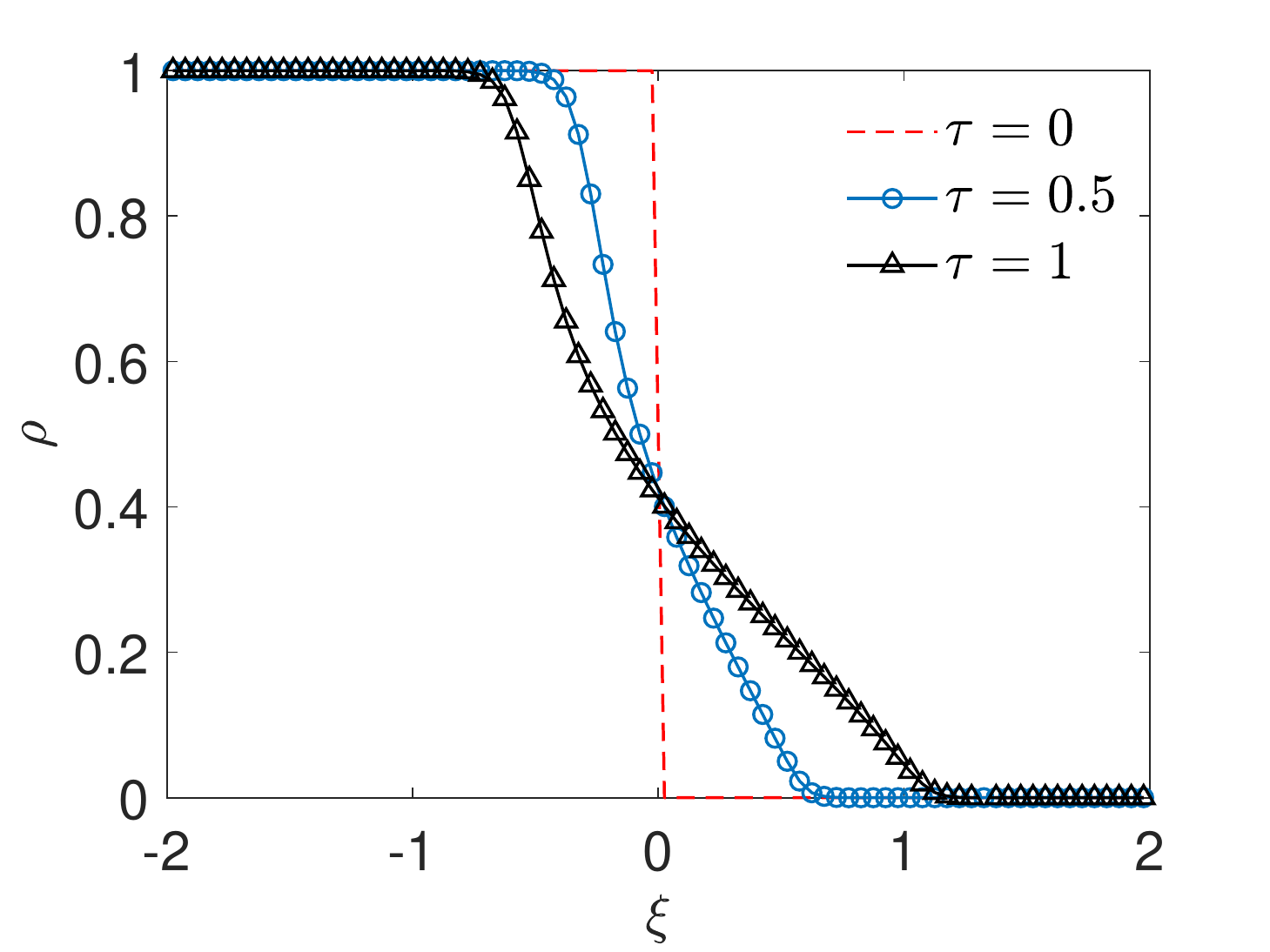}} \\
\subfigure[$\kappa=10^{-1}$, $p=\frac{1}{4}$]{\includegraphics[scale=0.32]{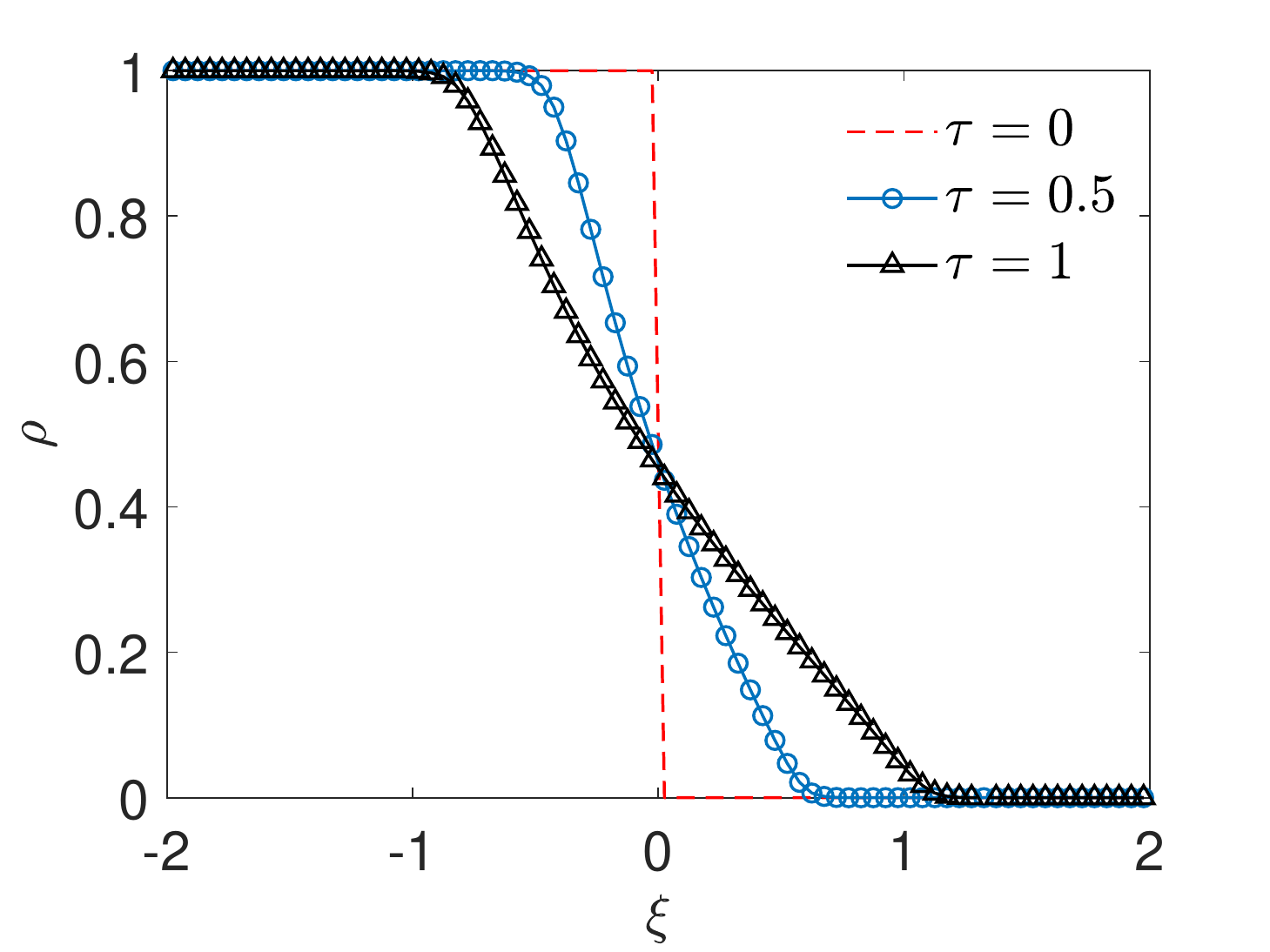}}
\subfigure[$\kappa=10^{-1}$, $p=\frac{1}{2}$]{\includegraphics[scale=0.32]{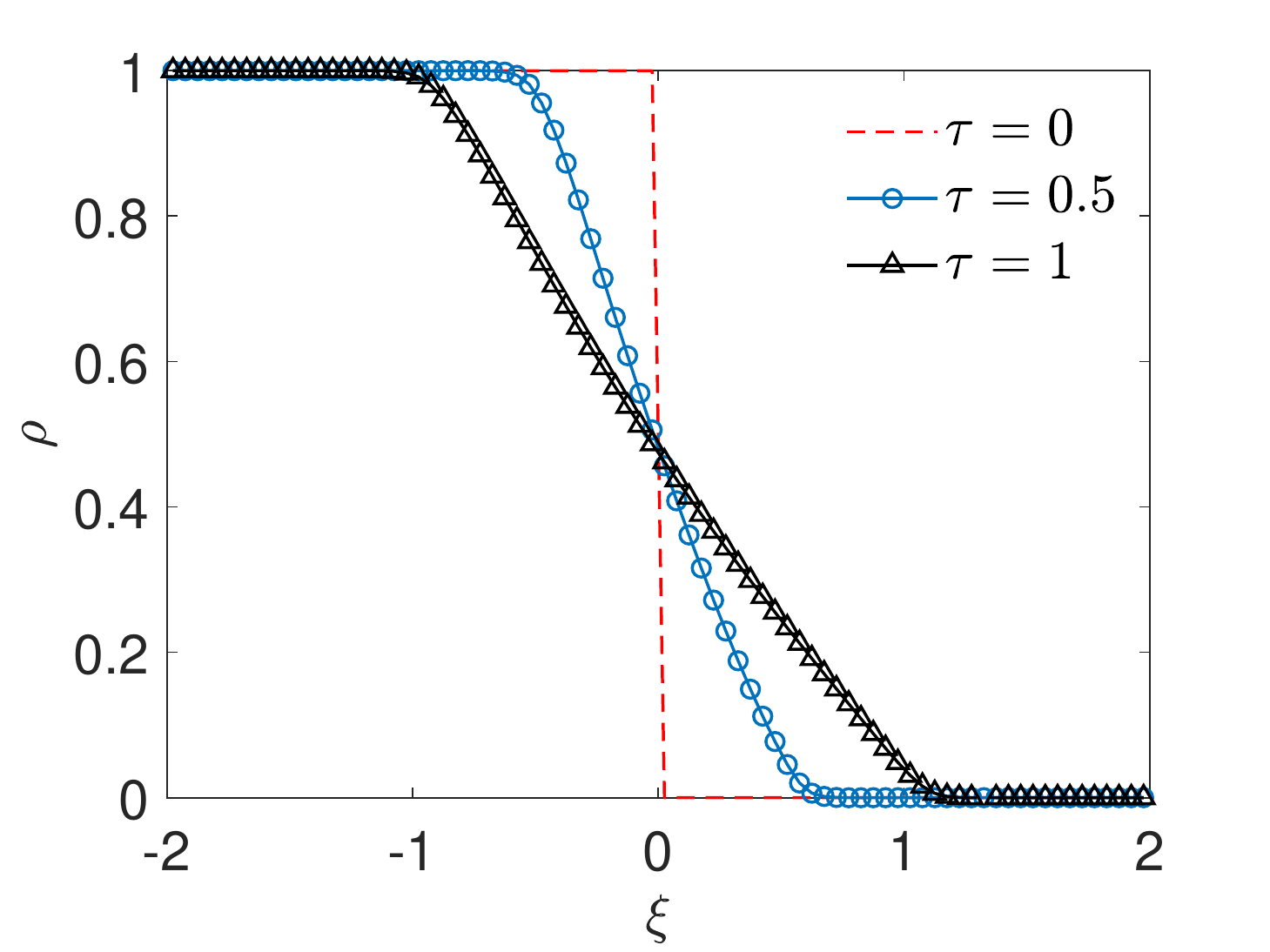}}
\subfigure[$\kappa=10^{-1}$, $p=1$]{\includegraphics[scale=0.32]{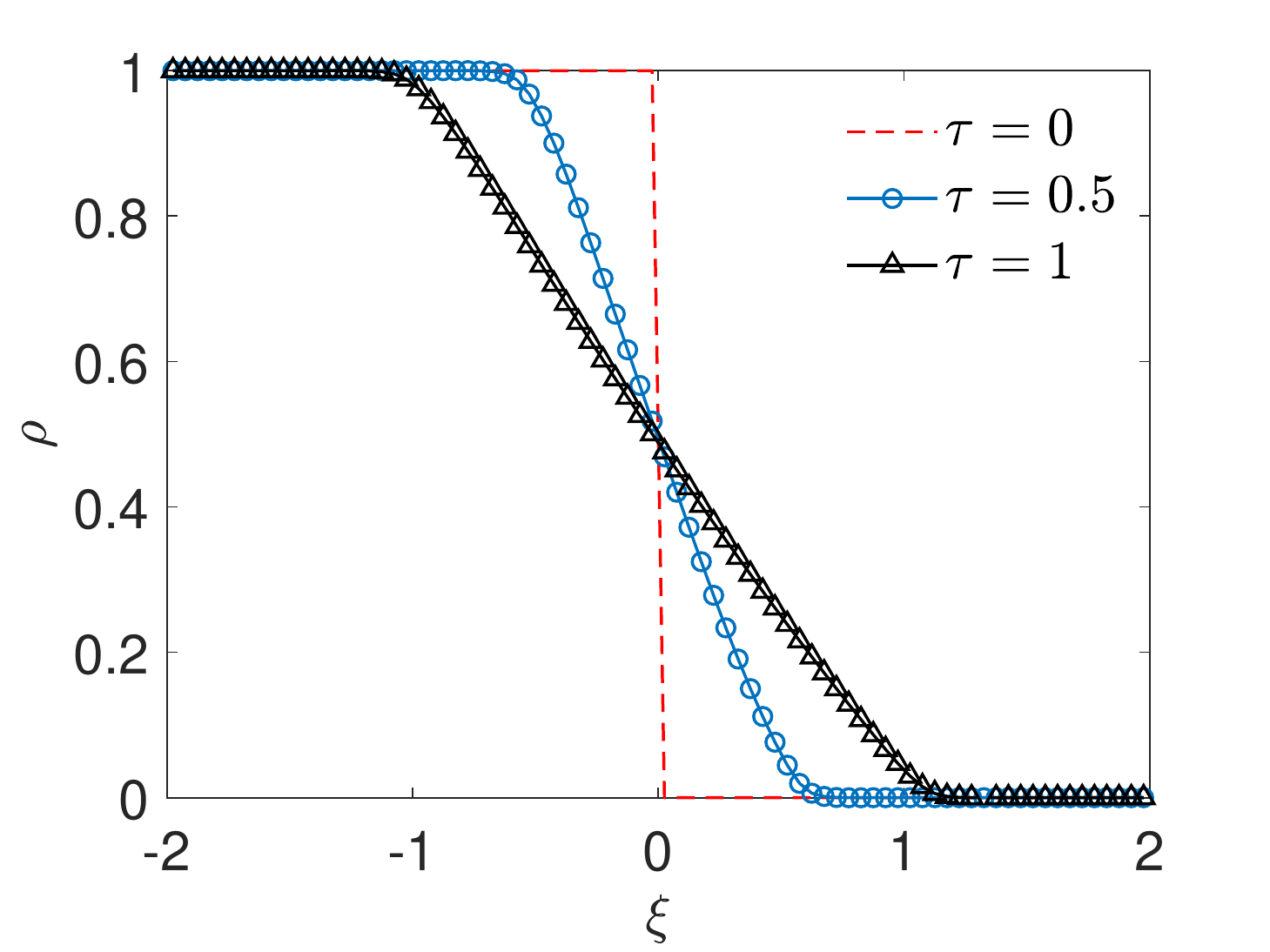}}
\caption{\textbf{Desired speed control}. First row: solution to~\eqref{eq:macro_vdes} at times $\tau=0$ (initial condition) and $\tau=0.5,\,1$ for the three penetration rates $p=\frac{1}{4},\,\frac{1}{2},\,1$ in the case $\kappa=1$ (strongly penalised control). Second row: same solution in the case $\kappa=10^{-1}$ (weakly penalised control).}
\label{fig:vdes_kappa}
\end{figure}

Figure~\ref{fig:vdes_kappa} shows instead the solution to~\eqref{eq:macro_vdes}, namely the first order hydrodynamic model with desired speed control, for three different penetration rates ($p=\frac{1}{4},\,\frac{1}{2},\,1$) and two choices of the penalisation coefficient ($\kappa=10^{-1},\,1$). As already stated, the recommended speed is set to $v_d(\rho)=1-\rho$, therefore in the limit $\kappa\to 0^+$ (non-penalised control) the flux of~\eqref{eq:macro_vdes} tends to the classical Greenshield's parabolic one $\rho(1-\rho)$, which gives a pure rarefaction wave as solution to the traffic light problem. From the second row of Figure~\ref{fig:vdes_kappa} we clearly observe that for $\kappa=10^{-1}$ (weakly penalised control) the density profile approaches indeed the expected one: the higher the penetration rate $p$ the more the shock visible in Figure~\ref{fig:hydro_binvar} is absorbed by the action of the control, so that the whole evolution is consistent with pure rarefaction dynamics. We stress that the shock is instead still present in case of a more strongly penalised control, see the first row of Figure~\ref{fig:vdes_kappa} where $\kappa=1$. Nevertheless, it is slightly smoothed with respect to Figure~\ref{fig:hydro_binvar} for a sufficiently high penetration rate $p$. Obviously, the aforesaid choice of $v_d(\rho)$ is just a possible example. It can be replaced by other more elaborated forms of the recommended speed, such as e.g. those proposed in~\cite{frejo2016IEEE,han2017TRC} for a similar problem.

\subsection{Second order hydrodynamic model}
The second order hydrodynamic models derived in Section~\ref{sect:monokinetic} consist in pressureless and isothermal Euler-type equations with a reaction term describing a relaxation towards the local equilibrium speed predicted by the kinetic model, cf.~\eqref{eq:macro_second_binvar}, plus possibly a further relaxation towards the speed induced by the microscopic control, cf.~\eqref{eq:macro_second_vdes}. Pressureless systems of balance laws have been studied at both the theoretical and the numerical levels by several authors in recent years, we mention among others~\cite{bouchut1994BOOKCH,bouchut2003SINUM} and the references therein. One of the typical difficulties is that pressureless systems are weakly hyperbolic, which, in the absence of source terms, causes the emergence of vacuum states in a finite time. As a consequence, in order to ensure stability numerical methods would require a time step tending to zero.

In the following we solve numerically systems~\eqref{eq:macro_second_binvar} and~\eqref{eq:macro_second_vdes} by means of an operator splitting approach in the space domain $[-2,\,2]$ discretised by means of $N_\xi=80$ grid points, which implies a mesh parameter $\Delta{\xi}=5\cdot 10^{-2}$. We impose $\text{CFL}=0.5$ for the choice of the time step $\Delta{\tau}$. Moreover, we observe that for a vanishing control penalisation $\kappa$ the source term in the second equation of~\eqref{eq:macro_second_vdes} becomes stiff, thereby leading to additional constraints on the choice of the time step. 

\begin{figure}[!t]
\centering
\includegraphics[scale=0.4]{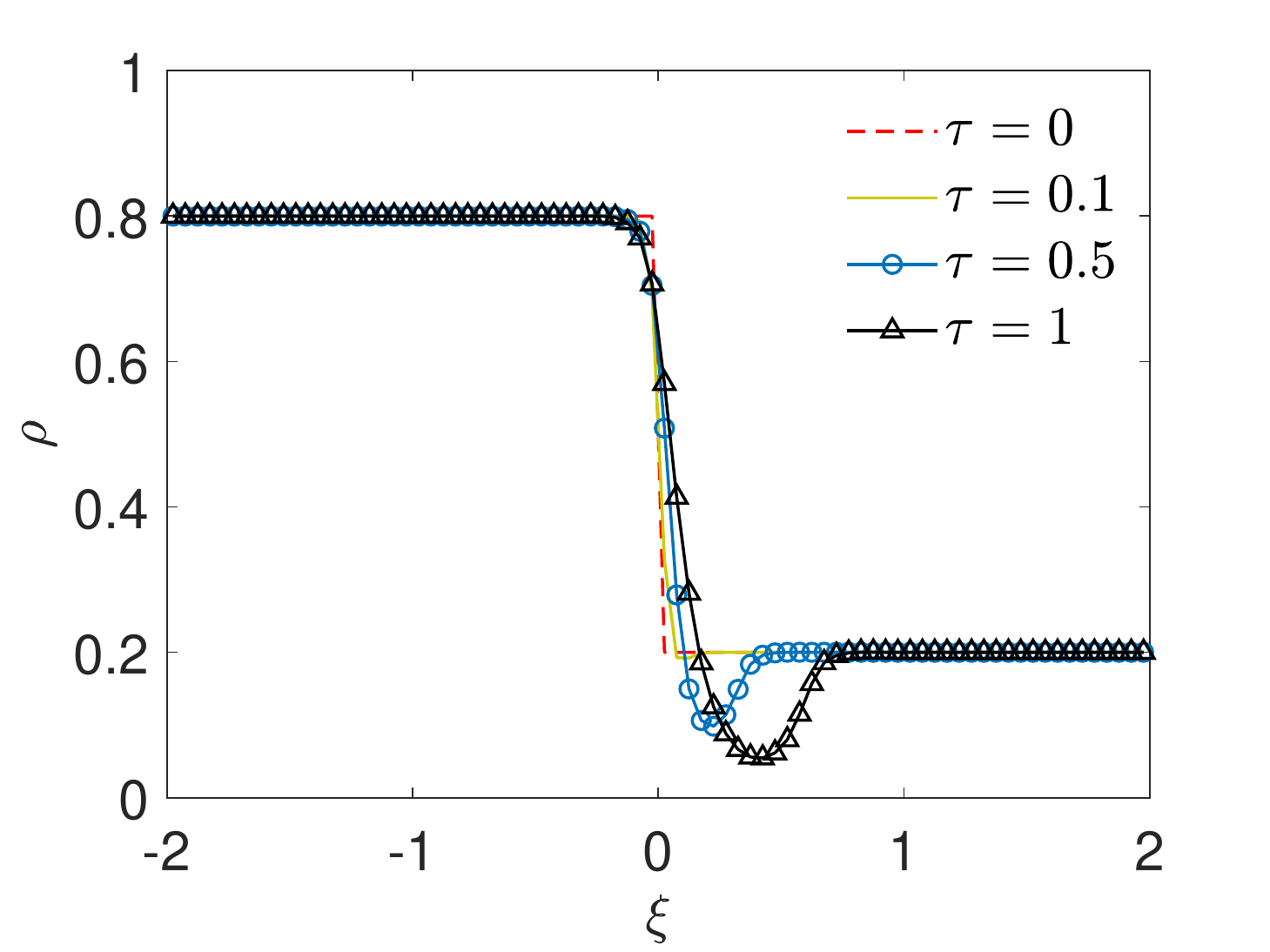}
\includegraphics[scale=0.4]{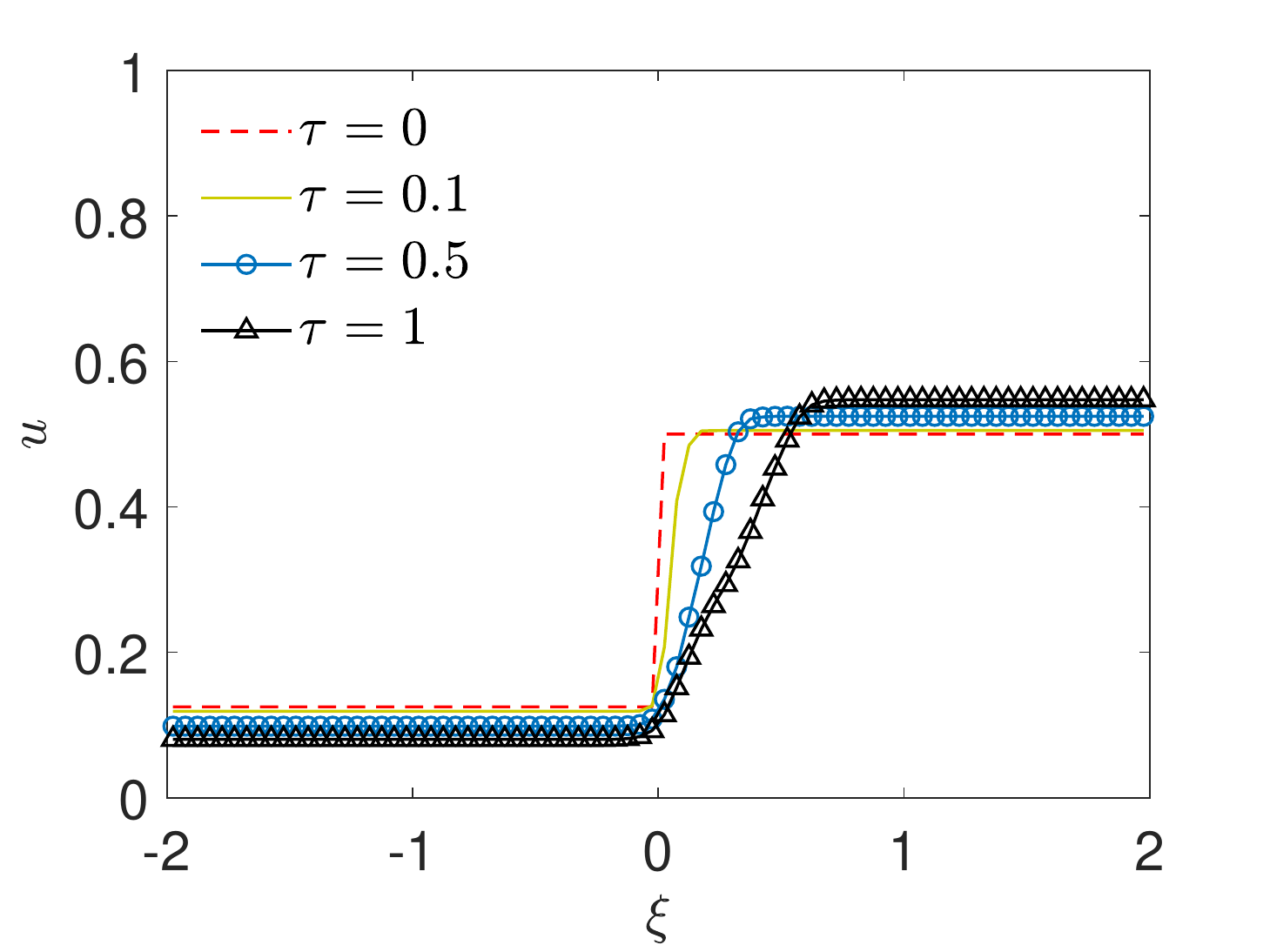}
\caption{Solution to~\eqref{eq:macro_second_binvar} at times $\tau=0$ (initial condition) and $\tau=0.1,\,0.5,\,1$.}
\label{fig:mono_1}
\end{figure}
\begin{figure}[!t]
\centering
\subfigure[$\rho(\tau,\,\xi)$ with $\kappa=1$]{\includegraphics[scale=0.4]{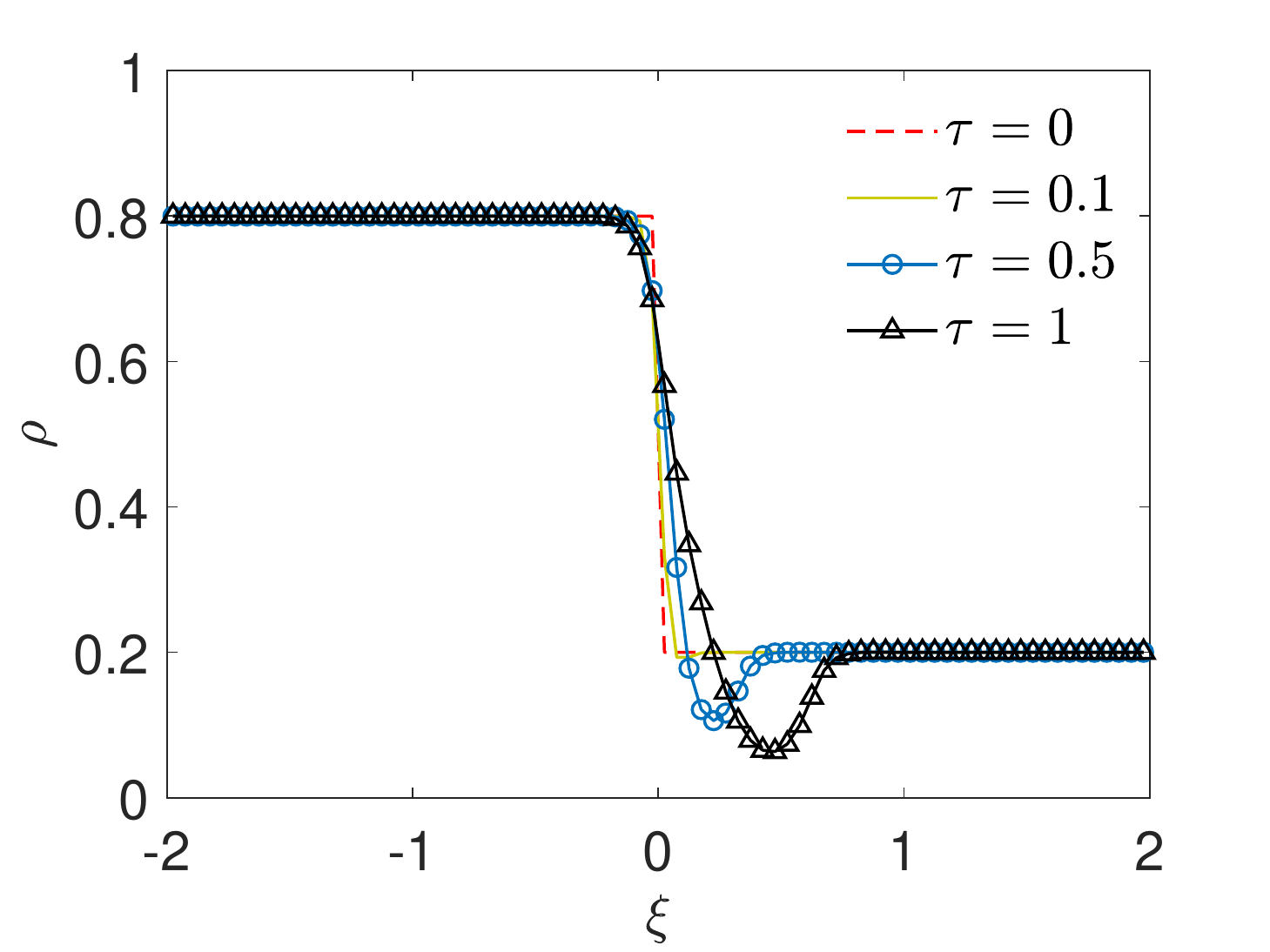}}
\subfigure[$u(\tau,\,\xi)$ with $\kappa=1$]{\includegraphics[scale=0.4]{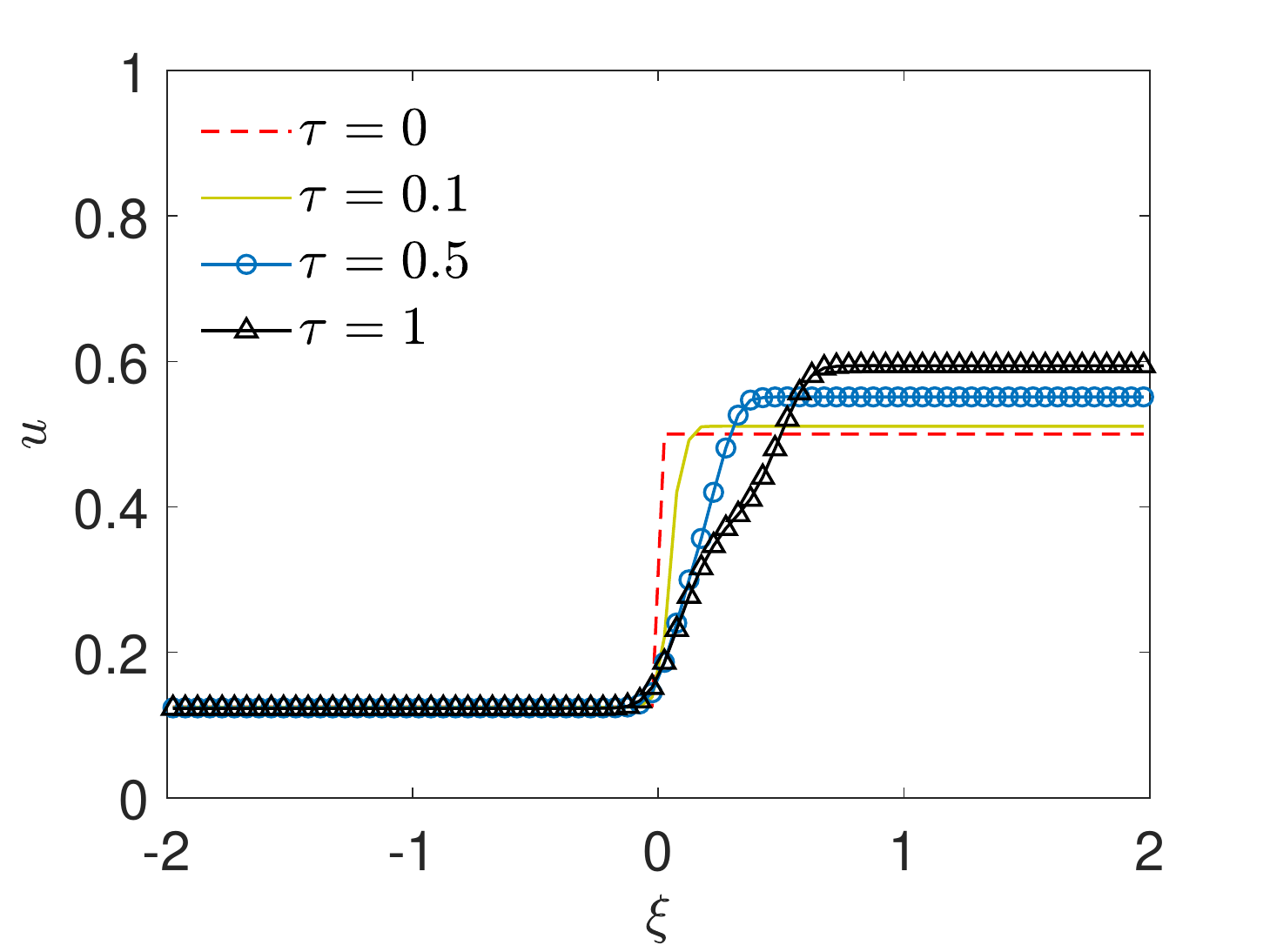}} \\
\subfigure[$\rho(\tau,\,\xi)$ with $\kappa=10^{-3}$]{\includegraphics[scale=0.4]{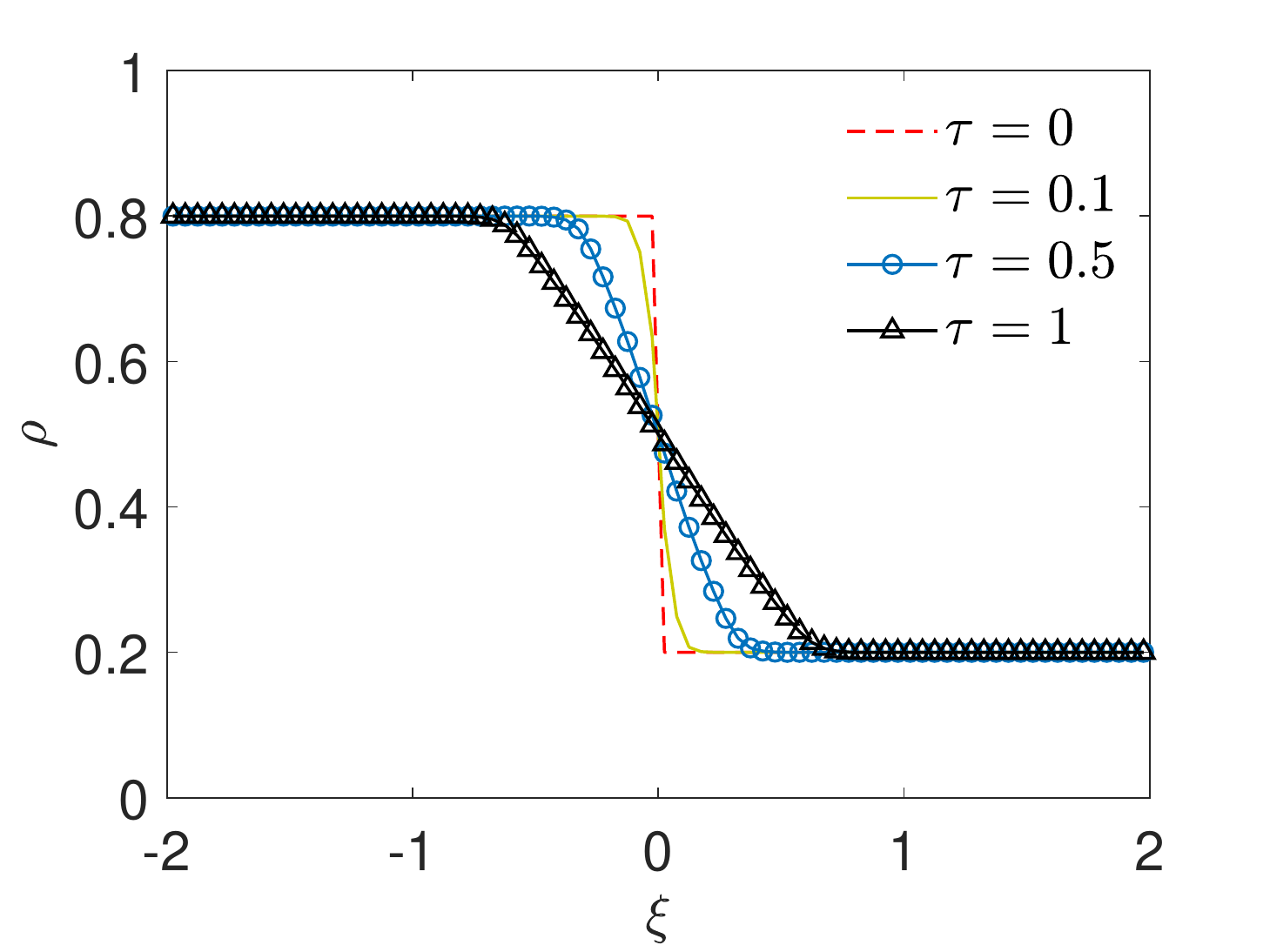}}
\subfigure[$u(\tau,\,\xi)$ with $\kappa=10^{-3}$]{\includegraphics[scale=0.4]{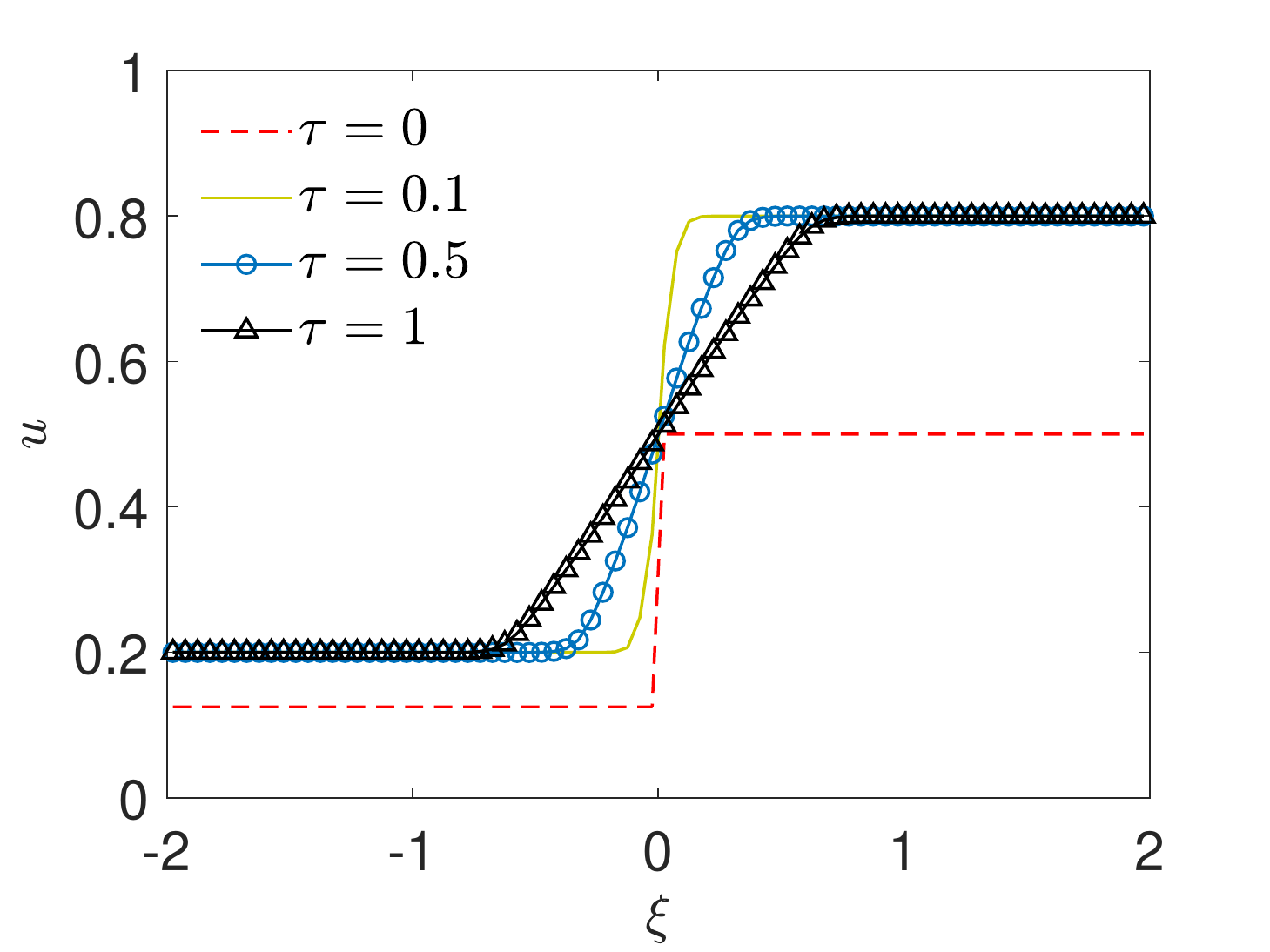}}
\caption{\textbf{Desired speed control}. Solution to~\eqref{eq:macro_second_vdes} at times $\tau=0$ (initial condition) and $\tau=0.1,\,0.5,\,1$ with penetration rate $p=1$ and control penalisation $\kappa=1$ (weak control, top row) and $\kappa=10^{-3}$ (strong control, bottom row).}
\label{fig:mono_2}
\end{figure}

We consider as initial condition the following density-speed pair:
\begin{equation*}
	\rho_0(\xi)=
	\begin{cases}
		0.8 & \text{for } \xi\leq 0 \\
		0.2 & \text{for } \xi>0,
	\end{cases}
	\qquad
	u_0(\xi)=
	\begin{cases}
		0.125 & \text{for } \xi\leq 0 \\
		0.5 & \text{for } \xi>0,
	\end{cases}
\end{equation*}
which mimics the fact that more densely packed vehicles are slower on average than less densely packed ones. Figure~\ref{fig:mono_1} shows the evolution of $\rho$ and $u$ predicted by model~\eqref{eq:macro_second_binvar}, which, as already observed in Section~\ref{sect:monokinetic}, represents simultaneously the hydrodynamic limit in the case of no control of any vehicle in the traffic stream and of binary variance control with arbitrary effective penetration rate $p^\ast>0$. We observe that vacuum tends naturally to form (see the vehicle density in the left panel) as expected from pressureless dynamics with fast vehicles preceding slow ones.

The macroscopic action of the control is instead clearly visible in Figure~\ref{fig:mono_2}. There we display the evolution of $\rho$ and $u$ predicted by model~\eqref{eq:macro_second_vdes}, which implements the microscopic desired speed control towards the recommended speed $v_d(\rho)=1-\rho$. We fix in particular $p=1$, corresponding to $100\%$ penetration of the driver-assist technology in the traffic stream, and we vary the control penalisation from $\kappa=1$ (strongly penalised control) to $\kappa=10^{-3}$ (weakly penalised control). The numerical results show that in the first case vacuum still tends to form (Figure~\ref{fig:mono_2}a) as a consequence of the pressuless dynamics, although in the long run ($\tau=1$) the control slightly perturbs the speed profile (Figure~\ref{fig:mono_2}b) with respect to the case illustrated in Figure~\ref{fig:mono_1}. In the second case, instead, the stronger action of the control dominates the speed dynamics, which, according to the second equation in~\eqref{eq:macro_second_vdes}, become essentially a quick local relaxation of $u$ towards $v_d(\rho)$ (cf. also Figure~\ref{fig:mono_2}d). The evolution of the corresponding density profile follows very closely a pure rarefaction wave between the left state $\rho_L=0.8$ and the right state $\rho_R=0.2$ (Figure~\ref{fig:mono_2}c), which is actually the expected solution to the \textit{first order} hydrodynamic model with flux $\F(\rho)=\rho v_d(\rho)$. On the whole, then, we find that if the control is sufficiently strong, namely if $p^\ast=\frac{p}{\kappa}\gg 1$, the solution to the second order hydrodynamic model~\eqref{eq:macro_second_vdes} collapses onto that of the first order model~\eqref{eq:macro_vdes}, which remarkably implies no more vacuum formation (compare Figures~\ref{fig:mono_2}a, c).

\section{Summary and conclusions}
\label{sect:conclusions}
In this paper we have developed a hierarchical approach to the control of traffic flow based on the nowadays increasingly popular idea that automated vehicles can be profitably used as inner controllers in a bottom-up control perspective. The general goal is to regularise the stream of vehicles from the inside; in particular, in this work we have considered control actions aimed at the mitigation of the road risk. First we have proposed a model of stochastic microscopic binary interactions among the vehicles, which include probabilistically the presence of driver-assist vehicles in the traffic flow. Such interactions produce speed variations through accelerations and decelerations but when they involve a driver-assist vehicle they are further controlled in such a way that the speed variance of the interacting vehicles is reduced. In fact reports of the World Health Organisation~\cite{peden2004WHO,WHO2015report} have stressed that speed differences from vehicle to vehicle are among the major causes of increased levels of crash risk. It is worth noticing that our probabilistic approach easily allows us to address both sparse and non-sparse control problems depending on the percentage of driver-assist vehicles in the traffic stream, i.e. the so-called penetration rate. Then we have upscaled these interactions to the level of the global flow by means of a space homogeneous kinetic Boltzmann-type equation. The analysis of such an equation, in particular of its asymptotic solutions, has provided us with detailed insights into the impact of the microscopic control strategies on the observable aggregate behaviour of the system. Interestingly, the results have revealed that some control strategies successfully reduce the speed-dependent road risk factors although they do not modify the macroscopic flow. From a purely macroscopic point of view they may therefore erroneously seem to be uninfluential. At last, we have reformulated the kinetic equation in a space inhomogeneous setting and we have used it to derive first and second order hydrodynamic traffic models consistent with the original microscopic controlled interactions among the vehicles. To this purpose we have taken advantage of closure methods which rely strongly on the ability of our kinetic model to provide explicit information on the speed distribution at equilibrium (the equivalent of the Maxwellian distribution in classical gas dynamics). The resulting equations for the density and the mean speed of the vehicles constitute original macroscopic traffic models, in which the action of the control is directly embedded from the microscopic scale (bottom-up) rather than being imposed through a control problem of the macroscopic equations themselves (top-down).

\section*{Acknowledgements}
This work has been written within the activities of the Excellence Project \textrm{CUP: E11G18000350001} of the Department of Mathematical Sciences "G. L. Lagrange" of Politecnico di Torino funded by MIUR (Italian Ministry for Education, University and Research), and within the activities of GNFM (Gruppo Nazionale per la Fisica Matematica) and GNCS (Gruppo Nazionale per il Calcolo Scientifico) of INdAM (Istituto Nazionale di Alta Matematica), Italy. M.Z. acknowledges support from ``Compagnia di San Paolo'' (Torino, Italy).


\end{document}